\documentclass[aps,prb,superscriptaddress,twocolumn,preprintnumbers,nocompress,floatfix,nofootinbib,bibnotes]{revtex4-2}
\pdfoutput=1
\usepackage{mathtools}
\usepackage{amsthm}
\usepackage{amsfonts,amssymb }
\usepackage{graphicx} 
\usepackage{physics}
\usepackage{xcolor}
\usepackage{dsfont}
\usepackage{bm}
\usepackage{comment}
\usepackage[hidelinks]{hyperref}
\usepackage{soul}
\usepackage{xstring}

\let\RevTeXOriginalHref\href
\newcommand{\RevTeXNoTitleHref}[2]{%
  \edef\HrefText{\detokenize{#2}}%
  \edef\TitleA{\detokenize{\bibinfo {title}}}%
  \edef\TitleB{\detokenize{\bibfield {title}}}%
  \IfSubStr{\HrefText}{\TitleA}
    {#2}
    {%
      \IfSubStr{\HrefText}{\TitleB}
        {#2}
        {\RevTeXOriginalHref{#1}{#2}}%
    }%
}
\AtBeginEnvironment{thebibliography}{\let\href\RevTeXNoTitleHref}
\AtEndEnvironment{thebibliography}{\let\href\RevTeXOriginalHref}

\newtheorem{theorem}{Theorem}

    \newtheorem{corollary}{Corollary}
    \newtheorem{lemma}{Lemma}
    \newtheorem{prop}{Proposition}
    
\theoremstyle{definition}
  \newtheorem{definition}{Definition}

\begin{document}

\preprint{MIT-CTP/6117}

\title{Sustained growth of quantum circuit complexity in many-body Hamiltonian dynamics}
\author{Wonjun Lee}
\email{wonjun98@mit.edu}
\affiliation{College of Natural Sciences, Korea Advanced Institute of Science and Technology, Daejeon, 34141, Republic of Korea}
\affiliation{Research Laboratory of Electronics, Massachusetts Institute of Technology, Cambridge, Massachusetts, 02139, United States}
\affiliation{MIT Center for Theoretical Physics -- a Leinweber Institute, Massachusetts Institute of Technology, Cambridge, Massachusetts, 02139, United States
}
\author{Sa\'ul Pilatowsky-Cameo}
\email{saulpila@mit.edu}
\affiliation{MIT Center for Theoretical Physics -- a Leinweber Institute, Massachusetts Institute of Technology, Cambridge, Massachusetts, 02139, United States
}
\author{Soonwon Choi}
\email{soonwon@mit.edu}
\affiliation{MIT Center for Theoretical Physics -- a Leinweber Institute, Massachusetts Institute of Technology, Cambridge, Massachusetts, 02139, United States
}

\begin{abstract}
The quantum circuit complexity of an evolving many-body quantum system  is believed to exhibit a sustained growth, maintained for timescales much longer than the onset of thermalization. Most previous works have focused on models which violate energy conservation, such as random unitary circuits. Here we study generic, local time-independent Hamiltonian dynamics.
We unconditionally prove that for generic local Hamiltonians and typical initial product states at high effective temperature, the robust quantum circuit complexity must grow over a very long period of time,  attaining an exponentially large value at late times.
Our approach relies on two structural properties that we prove rigorously: (i) generic local Hamiltonians satisfy generalized spectral no-resonance conditions of arbitrary order, and (ii) typical high-temperature product states are effectively supported in exponentially many energy eigenstates. These properties have been widely assumed without proof in prior works. As corollaries of this result, we show the late-time state displays robust volume-law entanglement that is irremovable by polynomial-size circuits, and we establish a no fast-forwarding result for generic local Hamiltonians. We also lower bound the complexity  of preparing sufficiently large subsystems with local quantum channels, in contrast with sufficiently small thermalizing regions which we show retain low complexity at late times. 
\end{abstract}

\maketitle

\section*{Introduction}
Thermalization is conventionally thought to be the end of quantum dynamics. However, this fundamental understanding has been upended by the realization that a thermal wavefunction can retain internal structure that continues to evolve well after local observables equilibrate and entanglement saturates. This hidden structure may manifest in novel physical phenomena such as deep thermalization~\cite{Ho_2022,Mark2024} and the emergence of pseudorandomness in dynamics~\cite{Roberts_2017,hunterjones2019,Haferkamp2022randomquantum,Harrow2023,Pilatowsky_Cameo_2023}. If thermalization does not mark the end of dynamical evolution, what quantity continues to grow after entanglement entropy has saturated? 

\begin{figure}[t]
    \centering
    \includegraphics[width=\linewidth]{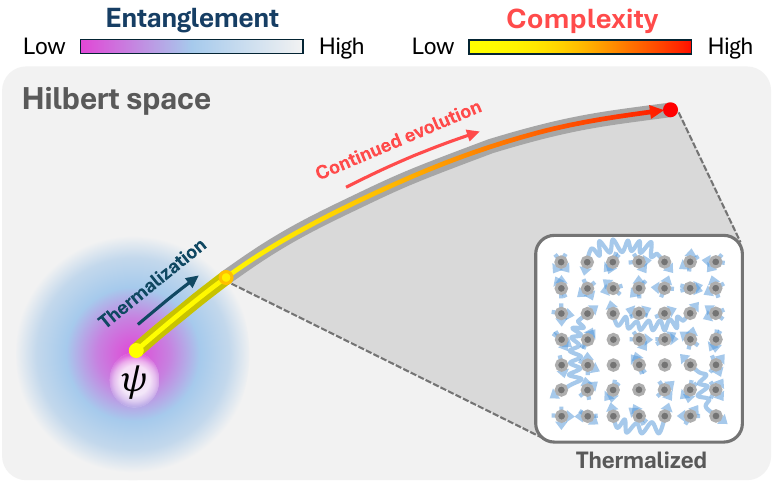}
    \caption{\textbf{Post-thermalization dynamics.}  An initial product state $\psi$ evolves under a local Hamiltonian. At thermalization, entanglement saturates and the state becomes locally indistinguishable from the Gibbs state, but nevertheless the state continues to evolve. We show that quantum circuit complexity continues to grow for a very long period of time, even after the onset of thermalization.} 
    \label{fig:1}
\end{figure}

Quantum circuit complexity has been proposed as a  measure to quantify post-thermalization evolution: it counts how many elementary quantum operations are needed to prepare the evolving many-body state. Motivated by holographic duality, in which the growth of a black-hole interior is interpreted as continued evolution of the dual state on the boundary~\cite{Susskind2016,Stanford2014,Brown2016}, Brown and Susskind conjectured that the circuit complexity of a chaotic quantum system grows linearly in time for an exponentially long time~\cite{Brown2018}, well after the system has reached thermal equilibrium (see Fig.~1). Beyond holography, this conjecture  has broad implications for quantum information science, including quantum simulation~\cite{Atia2017}, quantum pseudorandomness~\cite{Ji2018,cui2025hamiltonian,mao2025} and quantum computational complexity theory~\cite{aaronson2016}. More fundamentally, it allows one to define an “arrow of time” for a quantum many-body system, even after reaching thermal equilibrium.
  
Establishing the growth of quantum circuit complexity in dynamics requires ruling out a \textit{shortcut}, by which a late-time quantum state could be prepared by a circuit much shorter than the physical evolution that produced it. This is in general computationally intractable, as the number of possible circuits grows exponentially in both time and system size. A similar fundamental difficulty appears in classical circuit complexity: proving strong lower bounds for explicit Boolean functions remains a major open problem in theoretical computer science~\cite{Jukna2012}, where even super-polynomial lower bounds for a Boolean-function family in $\mathsf{NP}$ would imply $\mathsf{P}\neq\mathsf{NP}$. 

Despite the challenges, tremendous progress toward proving the Brown-Susskind conjecture has been made from several fronts, including Nielsen's geometric formulation of complexity~\cite{Nielsen2006,Balasubramanian2020}, and rigorous proofs under the evolution of random circuits~\cite{Brand_o_2016,Brandao2021,Haferkamp2022,chen2024} and other stochastic time-dependent Hamiltonian dynamics~\cite{Jian2023,Oszmaniec2024}. Such rigorous proofs represent important milestones, but remain unsatisfactory for a fundamental reason:  they focus on models which violate the conservation of energy. This conservation law is of fundamental importance, being the basic constraint obeyed by any system evolving under a fixed Hamiltonian.  Indeed, the original holographic motivation for the Brown-Susskind conjecture concerns energy-conserving evolution, with the boundary state evolving under a certain conformal-field-theory Hamiltonian~\cite{Stanford2014,Brown2016,Brown2016complexity}. In such a scenario, existing results can only establish lower bounds on late-time complexity for fine-tuned model Hamiltonians, under complexity-theoretic assumptions~\cite{Susskind2016,aaronson2016,susskind2018,bohdanowicz2017}.

In this work, we prove that for any typical initial product state above a constant threshold temperature evolving under a generic local extensive Hamiltonian, the quantum circuit complexity must display a sustained growth for an exponentially long time, eventually reaching an exponentially large value at late times, yet we do not show this growth is linear.
We establish our results unconditionally, without unproven assumptions on the Hamiltonian or initial state, other than physical considerations: locality, extensivity, and high temperature.

\section*{Growth of circuit complexity}
We consider a system of $n$ qubits placed in a Euclidean lattice of arbitrary dimension. We define the $\varepsilon$-robust circuit complexity $\mathcal{C}_{\varepsilon}(\rho)$ of any state $\rho$ as the minimum number of $2$-qubit channels necessary to prepare $\rho$ up to trace-distance error $\varepsilon$, with access to arbitrarily many unentangled ancilla qubits. 

The system is initialized in a product state $\ket{\psi}$ (trivial complexity). We ask how much complexity increases purely as a result of the dynamics, which are generated by a fixed Hamiltonian $H$ via unitary evolution
\begin{equation}
    \ket{\psi(t)}=e^{-iHt}\ket{\psi}.
\end{equation}
We impose two physical constraints on $H$: $H$ is {\it local}, meaning each term has bounded range in the lattice, and {\it extensive}, meaning it contains a term of nontrivial strength acting on each local region. 

Not all local extensive Hamiltonians lead to sustained growth of complexity: counterexamples include fast-forwardable Hamiltonians, such as all-commuting local models~\cite{Atia2017}. We show that such Hamiltonians form a measure zero set within the space of local Hamiltonians, so they are fine-tuned and not robust to arbitrarily small generic local perturbations. 

Our statement holds for all but an exponentially small fraction of initial product states drawn from a family of ensembles parametrized by temperature. The simplest is the ensemble of random product states, with each qubit distributed according to the local Haar measure,
\begin{equation}
\label{eq:Haarens}
  \mathcal{E}_0=\Big\{\bigotimes_{i=1}^n \ket{\psi_i} \, \Big | \ket{\psi_i}\sim \mathrm{Haar}(\mathbb C^2)\Big\}.
\end{equation}
 Most states in $\mathcal{E}_0$ have an effective infinite temperature, corresponding to an inverse temperature $\beta=0$. This effective temperature is defined by matching the energy expectation value $\expval{H}{\psi}$ to the thermal energy $E(\beta)=\tr(g_\beta H )$ of the Gibbs state $g_\beta=\exp(-\beta H)/\tr(\exp(-\beta H))$. To target a finite temperature, we also consider the {\it microcanonical product-state ensemble}~\cite{huang2024randomproduct} of product states whose energy expectation lies within a shell of width $\delta \sqrt{n}$ around $E(\beta)$:
\begin{equation}
    \label{eq:ensfintemp}
\mathcal{E}_{\beta,\delta}=\Big\{{\ket{\psi}\sim \mathcal{E}_0}\Big |\,
        |\expval{H}{\psi}-E(\beta)|<\delta\sqrt{n}\Big\}.
\end{equation}
The parameter $\delta$ can be chosen arbitrarily as long as it is above a constant $\delta_*$~\cite{Supple}. 
Since energy is extensive, all states in $\mathcal{E}_{\beta,\delta}$ have the same energy density $\lim_{n \to \infty} E(\beta)/n$
in the thermodynamic limit. Our results hold for $|\beta|$ below a constant threshold $\beta_\mathrm{c}$~\cite{Supple} depending on the lattice dimension and local energy scales, but not on system size.

We bound the {\it subsystem complexity}, the complexity of the reduced state $\psi_A(t)\coloneqq \tr_{A^c}(\dyad{\psi(t)})$, for any sufficiently large subsystem $A$; taking $A$ to be the full lattice yields a statement about global complexity. 
\begin{theorem}[Exponential late-time quantum circuit complexity]\label{thm:theorem1}
  For any generic local extensive Hamiltonian, all but an exponentially small fraction of initial states $\psi\sim\mathcal{E}_{0}$ or $\mathcal{E}_{\beta,\delta}$ ($|\beta|< \beta_\mathrm{c}$), and all but a double-exponentially small fraction of time $t\geq 0$, the $\varepsilon$-robust subsystem circuit complexity is exponentially large
\begin{equation}
\label{eq:theorem1}
\mathcal{C}_\varepsilon(\psi_A(t))\geq  \exp \Omega(n).
\end{equation}
This statement holds for any $\varepsilon< 1-\exp(-\Theta(n))$ and any subsystem $A$ satisfying $|A|\geq c n$ for a certain constant $c<1$. 
\end{theorem}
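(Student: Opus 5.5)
The plan is a counting argument against a dense, pseudo-uniform trajectory. I would show that the orbit $t\mapsto\psi_A(t)$ spends almost all of its time far, in trace distance, from every state of low complexity. It then suffices to combine three ingredients: a covering-number bound on low-complexity states, an anticoncentration bound for the time-averaged overlap with any fixed state, and a union bound over the net.

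\textbf{Step 1 (counting).} The set of states with $\mathcal{C}_\varepsilon\le s$ admits an $\eta$-net of size at most $\exp(O(s\log(n s/\eta)))$. The reason is that each circuit of $s$ two-qubit channels is specified by a choice of gate locations and the channel parameters, discretized to precision $\eta/s$. Ancillas do not hurt, since only $O(s)$ of them can be touched. It is therefore enough to show that, for a fixed target state $\sigma$ on $A$, the fraction of times with $\|\psi_A(t)-\sigma\|_1\le 1-\exp(-\Theta(n))$ is at most $\exp(-\exp(\Omega(n)))$. A union bound over a net with $s=\exp(cn)$ for small $c$ then gives both the double-exponentially small exceptional time fraction and the bound \eqref{eq:theorem1}.

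\textbf{Step 2 (moments of the overlap).} Purify $\sigma$ with a reference system, and use Uhlmann/Fuchs--van de Graaf to reduce closeness to the statement that the fidelity $F(t)=\langle\psi(t)|\Pi_\sigma|\psi(t)\rangle$ is not exponentially small. Here $\Pi_\sigma$ is a suitable operator with $\tr\Pi_\sigma\le 2^{|A^c|}$ times the rank data of $\sigma$; for a pure-ish target one takes $\sigma\otimes \mathds 1_{A^c}$ after dilation. Expand in the eigenbasis $\ket{E_j}$ of $H$ and write $\ket\psi=\sum_j c_j\ket{E_j}$. The $k$-th time-averaged moment $\overline{F(t)^k}$ is a sum over index tuples weighted by $e^{-it\sum(E_{j_\ell}-E_{j'_\ell})}$. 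Here I invoke property (i), the $k$-th order generalized no-resonance condition for generic $H$. Under it, only tuples where the $j$'s form a permutation of the $j'$'s survive time averaging, which gives the bound $\overline{F^k}\le k!\,\big(\max_{\text{diag}}\big)^k$. The diagonal quantity is controlled by $\sum_j |c_j|^2\langle E_j|\Pi_\sigma|E_j\rangle$ together with the inverse participation ratio $\sum_j|c_j|^4$. By property (ii), typical states from $\mathcal{E}_0$ or $\mathcal{E}_{\beta,\delta}$ with $|\beta|<\beta_\mathrm c$ have $\sum_j|c_j|^4\le e^{-\Omega(n)}$. Markov's inequality on the $k$-th moment then yields
\[
\Pr_t\big[F(t)\ge e^{-\gamma n}\big]\le k!\, e^{k\gamma n}\, e^{-k\Omega(n)}.
\]
Choosing $k=\exp(\Theta(n))$ makes this double-exponentially small, provided the no-resonance condition holds at that order. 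The bound must also be uniform in $\sigma$; the net already guarantees this.

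\textbf{Step 3 (subsystems).} For $|A|\ge cn$, the diagonal term $\langle E_j|\sigma\otimes\mathds 1|E_j\rangle$ must be small for most $j$ weighted by $|c_j|^2$. I would obtain this from the effective dimension as well. Since $\tr(\sigma\otimes\mathds 1)\le 2^{n-|A|}$ and the state's weight is spread over $e^{\Omega(n)}$ eigenstates, the high-entropy part dominates only when $n-|A|$ is below the spectral entropy. This is what fixes the constant $c$.

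\textbf{Main obstacle.} I expect the hardest step to be Step 2's use of no-resonance at exponentially large order $k$. Genericity gives exact no-resonance at every finite order, but it gives no quantitative gap. The time average is therefore an infinite-time limit, so "fraction of time" should be understood as asymptotic density. I would first try to keep $k$ fixed but large, tuned to the desired exponent, and to get the double-exponential bound by combining $k=\exp(\Theta(n))$ with the Weyl equidistribution of the phases on the torus. Because the generic spectrum is rationally independent, this equidistribution lets me replace the time average by an average over independent uniform phases. In that setting the tail of $F$ can be bounded directly by Hoeffding-type concentration for sums of random-phase terms, which bypasses moment-order issues entirely. A secondary difficulty is proving property (ii) for the microcanonical ensemble, which I take as established earlier.
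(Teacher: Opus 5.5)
Your plan is correct and is essentially the paper's proof: a covering net of low-complexity states plus a union bound, with each net element handled by Markov's inequality on the $k$-th temporal moment of the fidelity at $k=e^{\Theta(n)}$. That moment is bounded through the all-orders no-resonance condition by a $k!\,(\cdot)^k$ expression together with energy dispersion, which is exactly the Methods sketch of Lemma~1. For subsystems the paper linearizes via $F(\psi_A(t),\sigma)\le \mathrm{rank}(\psi_A(t))\,\tr(\psi_A(t)\sigma)\le 2^{n-|A|}\langle\psi(t)|\sigma\otimes I|\psi(t)\rangle$, so your prefactor should be the rank of $\psi_A(t)$, not the ``rank data'' of the possibly full-rank net state $\sigma$.

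The obstacle you flag is not one. Spectral ergodicity holds at every order, so for each fixed $n$ the moment bound at $k=e^{\Theta(n)}$ applies directly, and the infinite-time density is exactly what the theorem measures. The Weyl random-phase detour is therefore unnecessary; it would in any case need a Hanson--Wright-type bound for a quadratic form in the phases rather than Hoeffding.
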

\noindent Here $O,\Omega,\Theta$ denote asymptotic upper, lower, and tight scalings up to constant factors; see the Supplementary Material (SM)~\cite{Supple} for a statement with explicit constants. {\it Generic} means the statement fails only for a zero measure set of local Hamiltonians, and the fraction of time is the $T\to\infty$ limit of the uniform probability over $[0,T]$, which exists by the quasiperiodicity of unitary evolution and the Kronecker-Weyl theorem. A finite-time version of Theorem~\ref{thm:theorem1} is given in Methods.

We emphasize that this result has been proven without any additional assumptions. The logical structure is illustrated in Fig.~\ref{fig:2}. The key idea, explained below, combines well-known counting arguments~\cite{haah2025}---improved to allow $\varepsilon$ exponentially close to unity---with new results on the spectral properties of generic local extensive Hamiltonians. We discuss three implications of Theorem~\ref{thm:theorem1}.

\begin{figure}
    \centering
    \includegraphics[width=\linewidth]{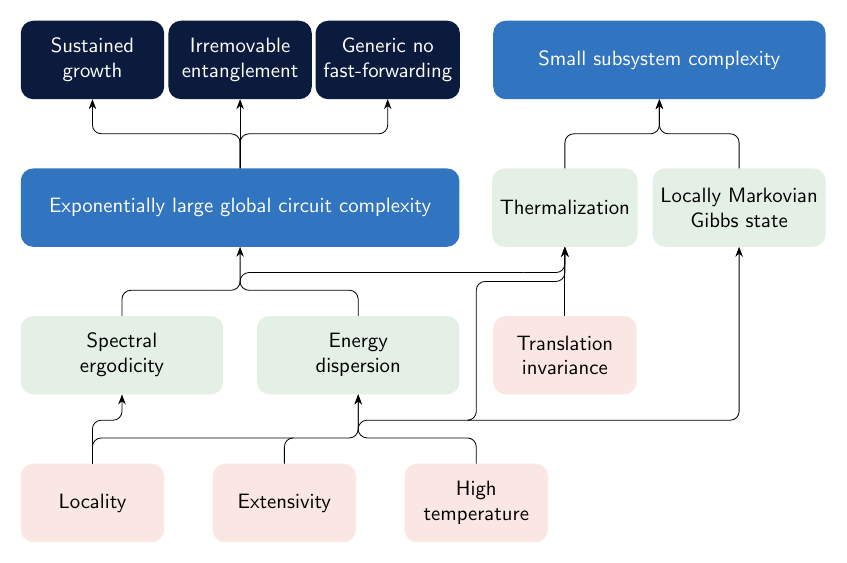}
    \caption{\textbf{Logical structure of the main results.} Orange boxes denote assumptions, green boxes denote derived properties, blue boxes denote main results in this work, and dark-blue boxes denote implications of our results. 
    }
    \label{fig:2}
\end{figure}
\textit{Sustained growth}---Theorem~\ref{thm:theorem1} guarantees that the complexity eventually grows to an exponentially large value at late times, even though it does not describe how fast. Nevertheless, since complexity cannot grow faster than linearly by recent results on efficient Hamiltonian simulation~\cite{Childs2019,Haah2023}, the growth must be sustained over an exponentially long period:

\begin{corollary}[Sustained growth of complexity over  exponentially long time]\label{thm:cor1}
    For $H$ and $\psi$ as in Theorem~\ref{thm:theorem1}, before the error-averaged complexity $\bar{\mathcal{C}}_{\varepsilon,\Delta\varepsilon}(\psi(t))\coloneqq\frac{1}{2\Delta\varepsilon}\int_{\varepsilon-\Delta \varepsilon}^{\varepsilon +\Delta \varepsilon}\dd{\varepsilon'}\mathcal{C}_{\varepsilon'}(\psi(t))$ saturates to an exponentially large value, its growth is sustained over an exponentially long time, in the sense that for any constant $\Delta t>0$, there are exponentially many times $t_1<t_2<\cdots <t_M$ separated by at least $\Delta t$ such that
    \begin{equation}
         \bar{\mathcal{C}}_{\varepsilon,\Delta\varepsilon}(\psi(t_{1}))<  \bar{\mathcal{C}}_{\varepsilon,\Delta\varepsilon}(\psi(t_{2}))< \cdots< \bar{\mathcal{C}}_{\varepsilon,\Delta\varepsilon}(\psi(t_{M})).
    \end{equation}
    Here $\Delta\varepsilon$ can be chosen exponentially small and $\varepsilon$  exponentially close to $1$.
\end{corollary}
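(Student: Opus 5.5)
We would prove Corollary~\ref{thm:cor1} by combining the lower bound of Theorem~\ref{thm:theorem1}, applied with $A$ the full lattice, with an upper bound on how fast the error-averaged complexity can change. There are three ingredients: a Lipschitz-type upper bound on the growth of $\bar{\mathcal{C}}$ over a time step $\Delta t$; the existence of a late time at which $\bar{\mathcal{C}}$ is exponentially large; and a ``record-setting'' extraction of an increasing sequence from these two facts.

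\emph{Step 1 (slow growth).} Fix $t$ and a time step $\tau\le\Delta t$ (in fact any constant). By efficient Hamiltonian simulation~\cite{Childs2019,Haah2023}, $e^{-iH\tau}$ can be implemented to error $\eta$ by a circuit of $K(\tau,\eta)=O(n\tau\,\mathrm{polylog}(n\tau/\eta))$ two-qubit gates. Appending this circuit to an $\varepsilon'$-preparation of $\psi(t)$ gives an $(\varepsilon'+\eta)$-preparation of $\psi(t+\tau)$. Hence
\[\mathcal{C}_{\varepsilon'+\eta}(\psi(t+\tau))\le \mathcal{C}_{\varepsilon'}(\psi(t))+K(\tau,\eta).\]
The point of the error average is to absorb the shift in $\varepsilon'$. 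We choose $\eta=\Delta\varepsilon\, e^{-\Theta(n)}$, exponentially small compared with $\Delta\varepsilon$; this still gives $K=\mathrm{poly}(n)$. Integrating over $\varepsilon'\in[\varepsilon-\Delta\varepsilon,\varepsilon+\Delta\varepsilon]$ and using that $\mathcal{C}_{\varepsilon'}$ is nonincreasing in $\varepsilon'$ and bounded by $\mathcal{C}_{\varepsilon-\Delta\varepsilon}$, the mismatched boundary strips of width $\eta$ contribute at most $\frac{\eta}{2\Delta\varepsilon}\,\mathcal{C}_{\varepsilon-2\Delta\varepsilon}(\psi)$. By the trivial bound $\mathcal{C}\le 2^{O(n)}$ for any state, this is $O(1)$ once the $e^{-\Theta(n)}$ factor in $\eta$ beats that exponential. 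Therefore
\[\bar{\mathcal{C}}(\psi(t+\tau))\le\bar{\mathcal{C}}(\psi(t))+\mathrm{poly}(n).\]

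\emph{Step 2 (large late value).} Since $\varepsilon+\Delta\varepsilon<1-e^{-\Theta(n)}$, Theorem~\ref{thm:theorem1} gives $\mathcal{C}_{\varepsilon'}(\psi(t))\ge e^{\Omega(n)}$ for every $\varepsilon'$ in the averaging window, outside a double-exponentially small fraction of times. Hence $\bar{\mathcal{C}}(\psi(t))\ge e^{\Omega(n)}$ at some time $T$. We may take $T$ to be a multiple of $\Delta t$: the good set has full density up to double-exponentially small corrections, so the arithmetic progression $\{k\Delta t\}$ cannot lie entirely in the bad set. If it did, equidistribution of $k\Delta t$ on the relevant torus would give that set positive density; if this is awkward, we instead use a nearby time together with Step~1 for a short step. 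At $t=0$, the state $\psi$ is a product state, so $\bar{\mathcal{C}}(\psi(0))\le n$.

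\emph{Step 3 (extracting the chain).} Consider the grid times $s_k=k\Delta t$, $0\le k\le T/\Delta t$. Let $t_1=0$, and let each $t_{j+1}$ be the first grid time at which $\bar{\mathcal{C}}$ strictly exceeds its value at $t_j$. These record times are strictly increasing, separated by at least $\Delta t$, and have strictly increasing $\bar{\mathcal{C}}$. Each record raises the value by at most $\mathrm{poly}(n)$, by Step~1 applied to one grid step. Starting below $n$ and reaching $e^{\Omega(n)}$ by time $T$ therefore requires $M\ge e^{\Omega(n)}/\mathrm{poly}(n)=e^{\Omega(n)}$ records.

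The main obstacle is Step~1's error bookkeeping. It must keep $\Delta\varepsilon$ exponentially small and $\varepsilon$ exponentially close to $1$ simultaneously, while ensuring $\eta\ll\Delta\varepsilon$ and that the boundary-strip term is controlled. We would handle this with the crude exponential upper bound on $\mathcal{C}$ for $n$-qubit states, noting that this bound holds uniformly for errors bounded away from zero.
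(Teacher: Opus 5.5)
Your proposal is correct and follows essentially the same route as the paper. The shared ingredients are: Hamiltonian simulation with $\eta$ exponentially smaller than $\Delta\varepsilon$ (the paper takes $\eta=\Delta\varepsilon\,2^{-(n+1)}$); the exact-preparation bound $\mathcal{C}_0\le 2^{n+1}$ to control the shifted boundary strips, which holds even at zero error; monotonicity in the tolerance, so that a late time with $\mathcal{C}_{\varepsilon+\Delta\varepsilon}$ exponentially large makes $\bar{\mathcal{C}}_{\varepsilon,\Delta\varepsilon}$ exponentially large; and the record-time extraction on the grid $\{j\Delta t\}$. In Step 2, use your fallback (the last grid time before $t_+$ plus one simulation step of length at most $\Delta t$), which is exactly what the paper does. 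The equidistribution shortcut is both unnecessary and not guaranteed for arbitrary $\Delta t$, because spectral ergodicity does not exclude rational relations between $\Delta t$ and the energy gaps.
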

\noindent 
Corollary~\ref{thm:cor1} shows that the wavefunction retains nontrivial evolution, quantified by growing complexity, for exponentially long times, far beyond thermalization timescales. 

\textit{Irremovable entanglement}---Theorem~\ref{thm:theorem1} also implies that entanglement must saturate to a volume law under the dynamics, and that this late-time entanglement is computationally hard to remove. We quantify entanglement by the average subsystem entropy $\bar S(\psi)=\frac{1}{M}\sum_{m=1}^M S(\psi_{A_m})$  over a partition of the lattice into $M$ regions $A_1,\dots,A_M$, with average size $\bar a=\tfrac{1}{M}\sum_{m=1}^M |A_m|=\frac{n}{M}$.

\begin{corollary}[Irremovable volume-law entanglement] 
For ${\psi(t)}$ as in  Theorem~\ref{thm:theorem1} and $\max_m |A_m|\leq O(\bar a)$, the average subsystem entropy of $\psi(t)$ follows a volume law, \textit{i.e.}, $\bar{S}(\psi(t))\geq \Omega(\bar{a})$. Furthermore, this volume law is irremovable with polynomial circuits: $\psi(t)$ cannot be mapped to a state $\phi$ satisfying $\bar S(\phi)\leq O(\bar a^{1-\eta})$ by any quantum circuit with $\leq\exp(O(n^{1-\eta/4}))$ two-qubit gates for any constant $\eta>0$ and $M=\Theta(n^{\kappa})$ with $\eta\leq \kappa\leq 1/4$.
    \label{cor:cor2}
\end{corollary}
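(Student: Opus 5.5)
We derive Corollary~\ref{cor:cor2} from Theorem~\ref{thm:theorem1} in two steps: first the volume law, then its irremovability.

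\textbf{Volume law.} We argue by contraposition through a low-complexity preparation. Suppose $\bar S(\psi(t))\leq \gamma\bar a$ for a small constant $\gamma$. Since $\max_m|A_m|\leq O(\bar a)$, Markov's inequality shows that a constant fraction of the regions (by volume) have $S(\psi_{A_m})\leq O(\gamma)|A_m|$. Low entropy alone does not give low complexity, so we coarse-grain. We would use the fact that each reduced state $\psi_{A_m}$ with entropy $s_m$ is close in trace distance to a state supported on a subspace of dimension about $e^{s_m}$. Compress each such region by a unitary onto roughly $s_m$ qubits; a generic unitary on $|A_m|$ qubits costs $\exp O(|A_m|)$ gates. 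The state on the compressed degrees of freedom lives on about $\sum_m s_m\leq \gamma n$ qubits plus the uncompressed regions.

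\textbf{Contradiction with Theorem~\ref{thm:theorem1}.} A naive state-preparation count gives complexity at most $\exp(O(\gamma n))\cdot \mathrm{poly}$ plus $M\exp O(\bar a)$ for the local decompressions. The uncompressed regions break this bound, so instead we restrict to a subsystem. Take $A$ to be the union of the low-entropy regions, which has size $\geq cn$ after tuning constants (here the constant $c<1$ of Theorem~\ref{thm:theorem1} is needed). Then $\psi_A(t)$ is approximately prepared by a circuit of size $\exp(O(\gamma n))+M\exp(O(\bar a))$.

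Each compression error is small only when the spectrum is not too spread. We would handle this with a smoothed-entropy (typicality) truncation, allowing an error close to $1$. This is permitted because Theorem~\ref{thm:theorem1} holds for all $\varepsilon<1-e^{-\Theta(n)}$, and truncation to rank $e^{s/(1-\varepsilon')}$ keeps fidelity $\geq$ a constant by Markov on the eigenvalue distribution.

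This size is $\leq \exp(\Omega(n))$ with the theorem's constant provided $\gamma$ is small and $\bar a=o(n)$. For $\bar a=\Theta(n)$ we need the entropy bound directly: the theorem's large complexity for $|A|\geq cn$ combined with the rank argument already forces $S$ linear. This contradicts Theorem~\ref{thm:theorem1} and yields $\bar S\geq\Omega(\bar a)$.

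\textbf{Irremovability.} Suppose $V$ is a circuit with $G\leq \exp(O(n^{1-\eta/4}))$ gates and $\phi=V\psi(t)$ has $\bar S(\phi)\leq O(\bar a^{1-\eta})$. Run the compression argument on $\phi$. Now $\sum_m s_m\leq O(n\bar a^{-\eta})=O(n^{1-\eta(1-\kappa)})$, so $\phi$ (on the union of its good regions) has complexity at most $\exp(O(n^{1-\eta(1-\kappa)}))+M\exp(O(\bar a))$, where $\bar a=n^{1-\kappa}$.

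The last term is the obstacle: $\exp(n^{1-\kappa})$ can be $\exp(n^{3/4})$, so region-local decompression must be cheap relative to $\exp\Omega(n)$. That holds because $1-\kappa<1$. Then $\psi(t)=V^\dagger\phi$ has complexity at most $G$ plus this, i.e. $\exp(o(n))$ when $\eta\leq\kappa\leq 1/4$, contradicting the $\exp\Omega(n)$ bound. The subsystem issue is subtle here, because $V^\dagger$ acts globally. We would therefore use the global version $A=$ full lattice and require the good regions to cover everything. To achieve this, we absorb the bad regions via their trivial bound: at most a fraction of regions are bad, and each can be prepared jointly at cost $\exp(O(\text{their total size}))$.

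This is where the constant must be arranged carefully. We would use Markov with threshold $\bar a^{1-\eta/2}$ so that the bad fraction is $\bar a^{-\eta/2}$, giving bad volume $n\bar a^{-\eta/2}\leq n^{1-\eta/4}$ when $\kappa\leq 1/2$. This produces exactly the $\exp(O(n^{1-\eta/4}))$ scale in the statement. Checking that these exponents close consistently, and that the compression error stays below $1-e^{-\Theta(n)}$, is the main obstacle.
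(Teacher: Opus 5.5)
There is a genuine gap in the compression step, and it is the obstacle you flag at the end.

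\textbf{Local truncations do not give a global approximation.} Truncating each region to a rank-$e^{s_m/\delta}$ projector $P_m$ keeps weight $\geq 1-\delta$ of $\psi_{A_m}$. But the weight of the global state on $\bigotimes_m P_m$ is controlled only by the union bound $1-\sum_m\delta_m$, and this bound is tight. Take $M^{-1/2}\sum_m|e_m\rangle$, where $|e_m\rangle$ has a single excitation in $A_m$. Every region has entropy $\approx(\log M)/M$, so any rule keeping rank $e^{O(s_m)}$ keeps only the unexcited state, with local weight $1-1/M$. Yet $\bigotimes_m P_m$ annihilates the state. So constant (even $1-1/M$) local fidelities imply no global fidelity, and the tolerance $\varepsilon<1-e^{-\Theta(n)}$ does not help.

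Making the union bound work forces $\delta_m\lesssim 1/M$, i.e.\ keeping $\min(Ms_m,|A_m|)$ qubits per region. When the $s_m$ are comparable to $\bar S\sim\bar a^{1-\eta}$ and $M=n^\kappa$, one has $Ms_m/|A_m|\sim n^{\kappa-\eta(1-\kappa)}\to\infty$ for $\kappa\geq\eta$. So nothing is compressed, and the ``compressed'' state is a generic $n$-qubit state costing $e^{\Theta(n)}$ gates. No contradiction with Theorem~1 arises, and the same failure affects your volume-law step.

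\textbf{The missing idea: truncate across nested bipartitions, not inside regions.} The paper groups the $A_m$ into $L=\Theta(n^{\eta})$ contiguous blocks $I_j$ of size $O(n^{1-\eta})$. (This is the SM parametrization $\bar S(\phi)\leq O(\bar a^{1-b\eta})$, $b\geq 4$, which gives the main-text statement with $\eta\to\eta/4$.) The argument then runs:
\begin{itemize}
\item By subadditivity, every cut $I_1\cup\cdots\cup I_j$ has entropy $\leq M\bar S(\phi)$.
\item Keep $\chi$ Schmidt values with $\log\chi\geq L^2M\bar S(\phi)/\varepsilon^2=O(n^{1-\eta})$. Via $S\geq(1-p)\log\chi$, each of the $L-1$ cuts then contributes error $\leq\varepsilon/L$.
\item The truncated MPS is prepared sequentially with $O(L\,d_{\max}\chi^2)=\exp(O(n^{1-\eta}))$ gates, and applying $U^{-1}$ yields $\psi(t)$, contradicting Theorem~1.
\end{itemize}
Errors accumulate over only $L\ll M$ cuts, which is what makes the bond dimension affordable.

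\textbf{The volume-law step.} Your argument for $\bar a=\Theta(n)$ is not valid: exponential complexity cannot force entanglement across macroscopic regions. A tensor product of two Haar-random half-system states has zero cross-entropy and exponential complexity. The paper does not derive $\bar S\geq\Omega(\bar a)$ from Theorem~1 at all; the MPS argument with trivial $U$ only yields $\Omega(\bar a^{1-b\eta})$. Instead it uses equilibration:
\begin{itemize}
\item The relaxation lemma (nondegenerate gaps plus energy dispersion) gives $\Pr_t[\|\psi_{A_m}(t)-\rho_{d,A_m}\|_1\geq\epsilon]\leq\epsilon^{-2}2^{-\gamma n+2|A_m|}$.
\item With $\epsilon=2^{-\gamma n/8+1}$, $\max_m|A_m|\leq\gamma n/4$, and a union bound over $M\leq n$ regions, every $\psi_{A_m}(t)$ is exponentially close to $\rho_{d,A_m}$ except on a $2^{-\gamma n/8}$ fraction of times.
\item Subadditivity and $S(\rho_d)\geq S_2(\rho_d)\geq\gamma n\log 2$ give $\bar S(\rho_d)\geq\gamma\bar a\log 2$.
\item Entropy continuity transfers this to $\bar S(\psi(t))\geq\frac{\gamma\log 2}{2}\bar a$.
\end{itemize}
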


\noindent
Corollary~\ref{cor:cor2} follows from two facts: any low-entangled state can be prepared by a polynomial circuit~\cite{Schon2005}, and if a polynomial circuit could disentangle the late-time state, the same circuit run in reverse would efficiently prepare it, contradicting Theorem~\ref{thm:theorem1}.

\textit{No fast-forwarding}---Finally, Theorem~\ref{thm:theorem1} implies an obstruction to the simulation of late-time dynamics. We say $H$ can be \textit{fast-forwarded} with complexity $G$ if for any time $t$ and initial state $\phi$, some circuit of at most $G$ two-qubit gates maps $\phi$ to within small constant trace distance of $\phi(t)$. This definition allows arbitrary classical preprocessing to compile the circuit, whereas the conventional algorithmic interpretation of fast-forwarding also requires efficient compilation~\cite{Atia2017}.
Even with such a lenient definition, we establish a no-go theorem for fast-forwarding.
\begin{corollary}[Generic no fast-forwarding]
    Any generic extensive local Hamiltonian cannot be fast-forwarded with complexity $G=\mathrm{poly}(n)$. Therefore, Hamiltonians allowing such fast-forwarding---including all-commuting local, 1D finite-range quadratic fermionic, or number-conserving quadratic bosonic Hamiltonians that are extensive---must form a measure zero set in the ensemble of all extensive local Hamiltonians.
\end{corollary}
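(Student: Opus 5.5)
We argue by contradiction, reducing fast-forwarding to a polynomial-size preparation circuit for the late-time states controlled by Theorem~\ref{thm:theorem1}. Suppose a generic extensive local $H$ could be fast-forwarded with complexity $G=\mathrm{poly}(n)$: for every $t$ and every initial state $\phi$ there is a circuit $V_{t,\phi}$ of at most $G$ two-qubit gates with $\tfrac12\|V_{t,\phi}\phi V_{t,\phi}^\dagger-\phi(t)\|_1\le\varepsilon_0$ for a small constant $\varepsilon_0$. Take $\phi=\psi$ a product state from $\mathcal{E}_0$. Preparing $\psi$ from $\ket{0}^{\otimes n}$ costs only $n$ single-qubit gates, which can be absorbed into two-qubit gates. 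Composing gives a circuit of size $G+n=\mathrm{poly}(n)$ that prepares $\psi(t)$ within trace distance $\varepsilon_0$. Hence $\mathcal{C}_{\varepsilon_0}(\psi(t))\le \mathrm{poly}(n)$ for \emph{every} $t\ge0$.

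Next we apply Theorem~\ref{thm:theorem1} with $A$ equal to the full lattice, so that $|A|=n\ge cn$, and with $\varepsilon=\varepsilon_0$. Since $\varepsilon_0$ is a fixed constant, it satisfies $\varepsilon_0<1-\exp(-\Theta(n))$ for all large $n$. For generic $H$ the theorem supplies a set of initial states of measure $1-\exp(-\Omega(n))$; we fix one such $\psi$. For this $\psi$ it also supplies a set of times of asymptotic density $1-\exp(-\exp(\Omega(n)))>0$ on which $\mathcal{C}_{\varepsilon_0}(\psi(t))\ge \exp\Omega(n)$. This set is nonempty, so we may pick some $t$ in it. For large $n$, $\exp\Omega(n)$ exceeds $\mathrm{poly}(n)$, which contradicts the previous paragraph. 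If the definition requires only an approximation that holds for a particular $\phi$ or a chosen set of times, the argument is unchanged, since one good pair $(\psi,t)$ suffices.

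The one point needing care is how ``generic'' quantifies over $H$. Theorem~\ref{thm:theorem1} excludes a measure-zero set of Hamiltonians, and each excluded $H$ may depend on $n$. The conclusion is therefore that the set of $H$ admitting polynomial fast-forwarding lies inside the exceptional set of Theorem~\ref{thm:theorem1}, which has measure zero for each $n$. This yields the claim at every system size, which is the natural reading of ``generic.''

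For the listed examples we only invoke the known fast-forwarding constructions. Commuting local Hamiltonians give $e^{-iHt}=\prod_j e^{-ih_jt}$, a constant-depth product of local unitaries. Quadratic fermionic and number-conserving bosonic models reduce to Gaussian or matchgate unitaries on $n$ modes, implementable with $O(n^2)$ gates after classical diagonalization (in 1D via Jordan--Wigner). Because our fast-forwarding definition permits unbounded classical preprocessing, these families admit $G=\mathrm{poly}(n)$. By the contradiction above they must lie in the measure-zero exceptional set. The extensive ones are also nonempty members of the ensemble, so the statement is consistent with the theorem and no further work is needed.
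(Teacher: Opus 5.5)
Your proposal is correct and takes essentially the same approach as the paper. Fast-forwarding from a typical product state $\psi$ gives a $\mathrm{poly}(n)$-size circuit for $\psi(t)$ at every $t$, contradicting the exponential complexity that Theorem~1 guarantees at typical times, so the fast-forwardable families lie in its measure-zero exceptional set. Two minor points, neither of which affects the argument under the stated definition: the extra $n$ single-qubit gates are unnecessary, since $\mathcal{C}_\varepsilon$ already starts from an arbitrary product state, and your aside about definitions that hold only for a particular $\phi$ overreaches, because an adversarially chosen $\phi$ need not be typical.
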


Corollaries 1-3 are proven in the SM~\cite{Supple}.

\section*{Complexity of small subsystems}
We have shown that quantum circuit complexity grows over at least exponentially long times, far exceeding conventional thermalization times. This result, however, seems counterintuitive: in practice, we do not observe any interesting dynamical processes after a system thermalizes. Why is the sustained growth of complexity guaranteed by Theorem~\ref{thm:theorem1} effectively {\it invisible}? A possible explanation is that in any realistic setting, an observer has access only to local observables, and the large complexity is not manifested locally. For any small subsystem, the complexity only grows to a subexponential value, saturated once the system thermalizes. We place the above idea on rigorous mathematical footing by leveraging recent efficient algorithms for preparing Gibbs states~\cite{Brandao2019,chen2025}.

\begin{theorem}[Thermalizing systems must have small complexity for small subsystems~\cite{chen2025}]
    \label{thm:low-c-from-therm}
    Let $H$ be a Hamiltonian, $\psi$ be an initial state at sufficiently high temperature, and $A$ be a contiguous region. For $0<\varepsilon<1$, suppose $\psi$ thermalizes on $A$ with trace-distance error $\varepsilon_{\mathrm{therm}}=2\varepsilon/3$. Then, for all but an exponentially small fraction of time $t\geq 0$, the complexity is upper bounded as
    \begin{equation*}    
        \mathcal{C}_\varepsilon(\psi_A(t))\leq  \exp(\mathrm{polylog}(|A|)).
    \end{equation*}
\end{theorem}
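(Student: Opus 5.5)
The plan is to produce an explicit low-complexity preparation of a state close to $\psi_A(t)$. The natural target is the reduced Gibbs state on $A$.

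\textbf{Step 1: replace $\psi_A(t)$ by a thermal marginal.} The thermalization hypothesis, in the sense used in the statement, says that for all but an exponentially small fraction of times $t$,
\[
\tfrac12\|\psi_A(t)-\tr_{A^c} g_\beta\|_1\le \varepsilon_{\mathrm{therm}}=2\varepsilon/3 ,
\]
where $\beta$ is the effective temperature of $\psi$. Restricting to these times accounts for the exceptional set in the theorem. By the triangle inequality for trace distance, it then suffices to prepare $\tr_{A^c}g_\beta$ to accuracy $\varepsilon/3$ using $\exp(\mathrm{polylog}|A|)$ two-qubit channels.

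\textbf{Step 2: reduce the full Gibbs marginal to a Gibbs state of a Hamiltonian on a neighborhood of $A$.} At high temperature ($|\beta|$ small), Gibbs states of local Hamiltonians have exponentially decaying correlations and are quantum Markov / locally indistinguishable. Concretely, I would use the Brand\~ao–Kastoryano and Chen et al.\ results, which give this for $|\beta|$ below a constant threshold. From them I expect
\[
\big\|\tr_{A^c} g_\beta(H)-\tr_{B\setminus A}\, g_\beta(H_B)\big\|_1\le |\partial A|\, e^{-\Omega(\ell)},
\]
where $B$ is the $\ell$-neighborhood of $A$ and $H_B$ keeps only the terms supported in $B$. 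Choosing $\ell=O(\log|A|)$ makes this error at most $\varepsilon/6$, while $|B|\le \mathrm{poly}(|A|)$.

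\textbf{Step 3: prepare $g_\beta(H_B)$ with a quasi-polynomial circuit.} This is the step I expect to carry the real content. I would invoke the efficient high-temperature Gibbs samplers, either the rapidly mixing Lindbladian of Chen et al.\ or the Brand\~ao–Kastoryano local-recovery construction. These prepare $g_\beta(H_B)$ to accuracy $\varepsilon/6$ with gate count $\mathrm{poly}(|B|,\log(1/\varepsilon))\cdot\mathrm{polylog}$ factors. Combined with Step 2, this gives $\exp(\mathrm{polylog}|A|)$ at worst; the quasi-polynomial slack covers the block-encoding and Lindbladian-simulation overheads, and the possibly non-polynomial dependence of the constant mixing time on the boundary. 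Ancillas are allowed in the definition of $\mathcal{C}_\varepsilon$, so simulating the channel with purifying ancillas is legitimate. Finally, tracing out $B\setminus A$ is free.

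\textbf{Step 4: adding errors.} The three errors are thermalization $2\varepsilon/3$, truncation $\varepsilon/6$, and preparation $\varepsilon/6$. They sum to $\varepsilon$, giving $\mathcal{C}_\varepsilon(\psi_A(t))\le\exp(\mathrm{polylog}(|A|))$.

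The main obstacle is Step 3 and Step 2 combined. One must ensure that the cited sampler applies uniformly at the temperature $\beta$ set by $\psi$, which requires $|\beta|$ below the convergence threshold; this is the "sufficiently high temperature" hypothesis. One must also check that its cost depends only on $|B|$ and not on $n$. If the direct truncation bound is unavailable, I would instead run the global quasi-local sampler and use a Lieb–Robinson light-cone argument to discard gates outside a $\mathrm{polylog}$ neighborhood of $A$.
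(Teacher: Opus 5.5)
Your proposal is correct and follows essentially the same route as the paper. Thermalization gives $2\varepsilon/3$ closeness to $\tr_{A^c}g_\beta$. Local indistinguishability under uniform clustering (Brand\~ao--Kastoryano / Chen--Rouz\'e), with a buffer of width $\ell=O(\log(|A|/\varepsilon))$, then reduces the problem to the Gibbs state of the Hamiltonian restricted to a $\mathrm{poly}(|A|)$-size region, which is prepared by the Chen--Rouz\'e Lindbladian; the error split $\varepsilon/6+\varepsilon/6$ is the same. The only difference is that the paper quotes the preparation cost as the quasi-polynomial $e^{O(\log^D(|A|/\varepsilon))}$ (Corollary 3.3 of Chen--Rouz\'e) rather than a polynomial one, and your ``at worst'' hedge already covers this.
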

\noindent Here, $\mathrm{polylog}$ means polylogarithmic growth, which upon exponentiation produces quasipolynomial (subexponential) scaling. 

The statement of Theorem~\ref{thm:low-c-from-therm} is conditional on thermalization: an initial state $\psi$ \textit{thermalizes} on a region $A$ if the reduced state $\psi_A(t)$ equals the reduced Gibbs state in $A$ up to small trace-distance error $\varepsilon_{\mathrm{therm}}$ except for an exponentially small fraction of time, which is widely expected for physical systems. Thermalization is rigorously proven for translation-invariant systems with non-degenerate spectral gaps at high temperature, for constant-size subsystems~\cite{huang2020,Pilatowsky-Cameo2025}. In Methods, we extend this result to subsystems of size $|A|\leq O(\log n)$, and, in the SM~\cite{Supple}, conditioned on a weak eigenstate thermalization hypothesis, to larger subsystems with $|A|<n/2$, even without translation invariance. Related bounds on the late-time complexity of small subsystems have already been established for random circuit dynamics~\cite{fan2025,haah2025}.

\section*{Proof of Theorem~\ref{thm:theorem1}}

\begin{figure}
    \centering
    \includegraphics[width=0.9\linewidth]{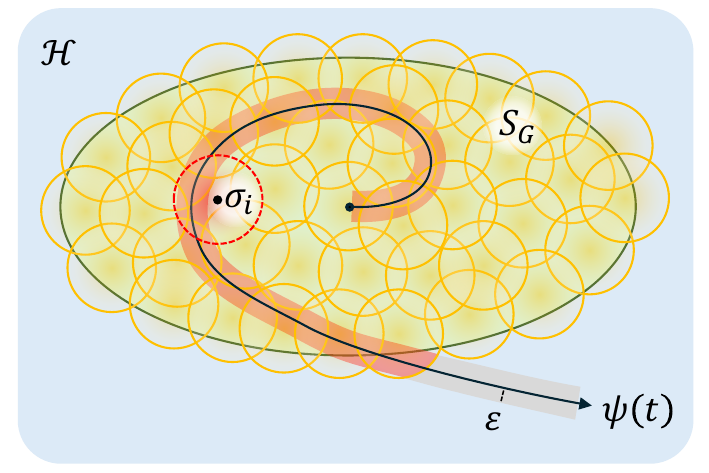}
    \caption{
    \textbf{Main counting argument.} Set  $S_G$ of states that can be prepared by a sequence of $G$ two-qubit channels (green) and its covering net with states $\sigma_i$ (orange circles). The system is initialized in a product state (black circle) which evolves in time (arrow). At time $t$, the circuit complexity is $\leq G$ only if $\ket{\psi(t)}$ is within the neighborhood of some $\sigma_i$. We bound the fraction of time that $\ket{\psi(t)}$  can spend near a particular $\sigma_i$ (red) to be double-exponentially small, so a union bound guarantees that the total time spent inside the green region is double-exponentially small, as long as $G$ is exponential.
    }
    \label{fig:3}
\end{figure}

The proof of Theorem~\ref{thm:theorem1} is a counting argument, illustrated in Fig.~\ref{fig:3}. Consider the set $S_G$ of states that can be prepared by a circuit containing at most $G$ two-qubit channels. How many states does $S_G$ contain? This set is continuously parametrized, so it is infinite, but by coarse graining --- identifying states that are close in trace distance --- we may distill a finite set.   Mathematically, we construct a finite {\it covering net} $\mathcal{N}$ such that any state in $S_G$ is close to some state in $\mathcal{N}$, which can be done with a total of $|\mathcal{N}|\lesssim G^{O(G)}$ states~\cite{haah2025}. Crucially, any state having robust complexity upper bounded by $G$ is close to some state in $\mathcal{N}$. 

Now, we ask, how much time can $\psi_A(t)$ spend near any particular state? If we could show that $\psi_A(t)$ spends at most a double-exponentially small fraction of time $q_\sigma=\exp(- e^{\alpha n})$ near any state $\sigma$,  then the total fraction of time it spends near any state in $\mathcal{N}$ (and hence close to any state with complexity less than $G$) would not exceed $|\mathcal{N}|\cdot q_\sigma\lesssim G^{O(G)}\times  \exp(- e^{\alpha n})$, which is still double-exponentially small for $G=\exp(\alpha'n)$ with $\alpha'<\alpha$. We thus aim to show that $q_\sigma$ is indeed double-exponentially small. 

To guarantee sufficient ergodicity of the trajectory, we identify two properties of the Hamiltonian and initial state: \textit{spectral ergodicity} and \textit{energy dispersion}. 
Spectral ergodicity is a property of the energy eigenvalues $E_\mu$ of the Hamiltonian. It prevents short-time state-to-state recurrences which would force the trajectory to spend much time near the initial state. For illustration, a harmonic oscillator evolves periodically because its energy levels are equally spaced, so the dynamical phases oscillate in resonance.
Spectral ergodicity forbids such resonances, in a strong sense:
\begin{definition}[Spectral ergodicity]
\label{def:spec-erg}
A Hamiltonian satisfies spectral ergodicity if for any integer $k\geq 1$, sums of $k$ energy levels are equal  $\sum_{i=1}^k E_{\mu_i} =\sum_{i=1}^k E_{\nu_i} $ only when $(\mu_1,\dots,\mu_k)$ is a permutation of $(\nu_1,\dots,\nu_k)$.
\end{definition}
 \noindent For $k=1$, this is the non-degeneracy condition ($E_\mu=E_\nu$ only if $\mu=\nu$); for $k=2$, it is the non-degenerate gap condition~\cite{Reimann2008,Linden2009}: $E_{\mu_1}-E_{{\nu_1}} = E_{\nu_2}-E_{{\mu_2}}$ only if $(\mu_1,\mu_2)=(\nu_1,\nu_2)$ or $(\mu_1,\mu_2)=(\nu_2,\nu_1)$. The general property for arbitrary $k$ is known as the {\it $k$-th no-resonance condition}~\cite{Kaneko2020,Mark2024,Riddell2023}.

The second condition, energy dispersion, is a property of the initial state $\ket{\psi}$ when expanded in the energy eigenbasis $\{\ket{\mu}\}_\mu$. To guarantee that the state explores a large fraction of the Hilbert space, we need it to be supported on many energy eigenstates with non-negligible population $p_\mu=\abs{\braket{\mu}{\psi}}^2$. \begin{definition}[Energy dispersion]
    We say $\ket{\psi}$ is energy dispersed if its {\it inverse participation ratio} or  {\it effective dimension}~\cite{Linden2009} $1/\sum_\mu p_\mu^2\geq 2^{\gamma n}$ is exponentially large with a constant $\gamma>0$. 
\end{definition} 

In Methods, we prove that under these two conditions, $q_\sigma$ is double-exponentially small, which implies the following result (see the SM~\cite{Supple} for a full proof and explicit constants).
\begin{theorem}[Exponential complexity from spectral ergodicity and energy dispersion]
   \label{thm:theorem2}
    For any Hamiltonian satisfying spectral ergodicity and any initial state $\psi$ which is energy dispersed,  the circuit complexity is exponentially large,
    \begin{equation}
        \mathcal{C}_\varepsilon(\psi_A(t))\geq  \exp \Omega(n),
    \end{equation}
    except for a double-exponentially small fraction of time $t\geq 0$, 
     for a subsystem $A$ and $\varepsilon$ as in Theorem~\ref{thm:theorem1}.
\end{theorem}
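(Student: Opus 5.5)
We follow the counting argument sketched in Fig.~\ref{fig:3}. First, fix $G=\exp(\alpha' n)$ and build a covering net $\mathcal{N}$ of $S_G$, the set of states (reduced to $A$) preparable by $G$ two-qubit channels with ancillas, at resolution $\delta_{\rm net}$. We use $|\mathcal{N}|\le (G/\delta_{\rm net})^{O(G)}=\exp(O(G\log G))$, as in~\cite{haah2025}. If $\mathcal{C}_\varepsilon(\psi_A(t))\le G$, then $\psi_A(t)$ is within trace distance $\varepsilon+\delta_{\rm net}$ of some $\sigma\in\mathcal{N}$. Writing $\varepsilon=1-e^{-cn}$, this means that some POVM element $0\le P\le \mathds{1}$ (the Helstrom projector for $\sigma$ versus $\psi_A(t)$, lifted to the full system as $P\otimes\mathds{1}_{A^c}$) satisfies $\tr(P\sigma)-\tr(P\psi_A(t))\ge 1-\varepsilon-\delta_{\rm net}$. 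The useful reformulation is a fidelity-type bound. Closeness implies $F(\psi_A(t),\sigma)\ge (1-\varepsilon')^2\sim e^{-2cn}$, so by Uhlmann there is a state $\ket{\phi}$ on the full system with $|\braket{\phi}{\psi(t)}|^2\ge e^{-2cn}$ and $\ket\phi$ depending only on $\sigma$ (up to the choice of purification on $A^c$). To avoid depending on $t$ through the purification, we instead bound $\tr(\Pi_\sigma \dyad{\psi(t)})$, where $\Pi_\sigma$ is the projector onto the span of the large eigenvectors of $\sigma$ tensored with $\mathds 1_{A^c}$. Its rank is at most $2^{|A^c|}\cdot e^{O(n)}\le 2^{(1-c')n}$, and this is where $|A|\ge cn$ enters.

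The core step is to show that, for a fixed projector $\Pi$ of rank $r\le 2^{(1-\gamma/2)n}$, the time fraction with $\langle\psi(t)|\Pi|\psi(t)\rangle\ge e^{-c''n}$ is double-exponentially small. We do this by a high-moment Markov bound. Let $f(t)=\langle\psi(t)|\Pi|\psi(t)\rangle$ and compute the infinite-time average $\overline{f^k}$ with $k=e^{\Theta(n)}$. Expanding in the eigenbasis, $f(t)^k$ is a sum over $2k$ indices with phase $e^{-i t(\sum E_{\mu_i}-\sum E_{\nu_i})}$. By spectral ergodicity (Definition~\ref{def:spec-erg}), the time average retains only the terms where $(\nu)$ is a permutation of $(\mu)$. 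The average is therefore a sum over $k!$ permutations of products of $\sqrt{p}\,\Pi\sqrt{p}$ matrix elements. We bound it by $k!\,(\tr[(D\Pi)^{\cdot}])$-type quantities: specifically $\overline{f^k}\le k!\,\big(\sum_{\mu\nu} p_\mu p_\nu|\Pi_{\mu\nu}|^2\big)^{k/2}$-ish, or more cleanly $\overline{f^k}\le k!\,\|\Pi\|_2^{k}\,(\sum_\mu p_\mu^2)^{k/2}\le k!\,(r\,2^{-\gamma n})^{k/2}$. This uses energy dispersion, the Cauchy--Schwarz/Hölder inequality on each permutation's cycle structure, and the fact that each cycle contributes $\tr((P\Pi)^\ell)\le \tr(P\Pi P\Pi)^{\ldots}$. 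Markov then gives a time fraction of at most $\overline{f^k}/e^{-c''nk}\le \big(k\,e^{c''n}\sqrt{r2^{-\gamma n}}\big)^k$. Choosing $k=e^{\alpha n}$ small enough that the base stays below $e^{-\Omega(n)}$ yields $q_\sigma\le\exp(-e^{\alpha n})$.

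Finally, a union bound over $\mathcal{N}$ gives a total bad fraction of at most $\exp(O(G\log G)-e^{\alpha n})$, which is double-exponentially small for $\alpha'<\alpha$. Time averages exist by quasiperiodicity (Kronecker--Weyl).

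The main obstacle is the permutation-sum estimate. The $k!$ growth must be beaten by $(r\,2^{-\gamma n})^{k/2}$, which forces the rank bound from $|A|\ge cn$ to be compatible with $\gamma$. The constants $c$ and $\alpha$ have to be tuned accordingly, and the cycle-decomposition bound, where long cycles give $\tr((\sqrt{P}\Pi\sqrt{P})^\ell)\le (\tr(\sqrt P\Pi\sqrt P)^2)^{\ell/2}$, must be handled carefully.
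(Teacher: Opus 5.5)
\textbf{Verdict: genuine gap.} Your architecture matches the paper's: a covering net, a union bound, and a high-moment Markov bound on a time-averaged overlap using the no-resonance identity and energy dispersion. Your moment estimate is also sound. From $\mathbb{E}_t[\psi(t)^{\otimes k}]\preceq\rho_d^{\otimes k}\sum_\pi P_\pi$, and bounding each cycle trace by $\mathrm{tr}[(\Pi\rho_d)^l]\le(\mathrm{tr}\,\Pi\rho_d)^l$, you do get $\overline{f^k}\le k!\,(r\,2^{-\gamma n})^{k/2}$.

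The gap is the step before it. You need to turn $D(\psi_A(t),\sigma)\le\varepsilon$, with $\varepsilon=1-e^{-cn}$, into $\langle\psi(t)|\Pi_\sigma|\psi(t)\rangle\ge e^{-c''n}$ for a $t$-independent projector of small rank. You never fix the eigenvalue threshold $\tau$ defining $\Pi_\sigma$, nor argue why closeness forces overlap with it.

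From trace distance alone this cannot be done. Net elements are arbitrary mixed states on $A$, and their spectrum can be nearly flat over $2^{|A|}$ dimensions while still being $\varepsilon$-close to a trajectory point. For example, $\sigma=2e^{-cn}\psi_A(t_0)+(1-2e^{-cn})\mathds{1}/2^{|A|}$ satisfies $D(\psi_A(t_0),\sigma)\le 1-2e^{-cn}<\varepsilon$. In general the weight of $\sigma$ below the threshold can be as large as $2^{|A|}\tau$. So you would need $\tau\ll 2^{-|A|}e^{-cn}$, and then the rank bound $1/\tau$ is vacuous.

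Two further problems:
\begin{itemize}
\item The Helstrom inequality you wrote points the wrong way. Closeness gives $\mathrm{tr}(P\sigma)-\mathrm{tr}(P\psi_A(t))\le\varepsilon+\delta_{\rm net}$ for every $P$, not $\ge 1-\varepsilon-\delta_{\rm net}$ for some $P$.
\item Your parameters are inconsistent. Your moment bound is only useful if $r\ll 2^{\gamma n}$, not $r\le 2^{(1-\gamma/2)n}$ or $2^{(1-c')n}$. So $|A^c|$ must be a small fraction of $\gamma n$ (the paper requires $|A|>(1-\gamma/8)n$), not merely $|A|\ge cn$ for some $c<1$.
\end{itemize}

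\textbf{The missing idea} is to use the small rank of $\psi_A(t)$, not of $\sigma$. Because $\psi(t)$ is pure, $\mathrm{rank}\,\psi_A(t)\le 2^{|A^c|}$. Cauchy--Schwarz on singular values then gives
\[
F(\psi_A(t),\sigma)\le 2^{|A^c|}\,\mathrm{tr}(\psi_A(t)\sigma)=2^{|A^c|}\langle\psi(t)|\sigma\otimes\mathds{1}_{A^c}|\psi(t)\rangle .
\]
Combined with $D\ge 1-\sqrt{F}$, closeness becomes $\langle\psi(t)|\sigma\otimes\mathds{1}_{A^c}|\psi(t)\rangle\ge(1-\varepsilon)^2\,2^{-|A^c|}$. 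You can then run your moment bound directly on the $t$-independent operator $\sigma\otimes\mathds{1}_{A^c}$, with no projector at all. Each cycle contributes $\mathrm{tr}[((\sigma\otimes\mathds{1})\rho_d)^l]\le p_{\max}^l\,2^{|A^c|}$, or, in your Hilbert--Schmidt form, the base is at most $2^{|A^c|/2}2^{-\gamma n/2}$.

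This is exactly the paper's route. It then optimizes $k$ via the negative binomial series to obtain explicit constants. It also adds $\lceil n/2\rceil$ gates to handle the minimization over reference product states, which your sketch omits.

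With this ingredient your projector version can also be rescued. Take $\tau\approx(1-\varepsilon)^2 2^{-|A^c|}/4$; the rank is then $\lesssim 2^{2|A^c|}(1-\varepsilon)^{-2}$. The projector is unnecessary, though.
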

\noindent We note that Theorem~\ref{thm:theorem2} applies to arbitrary, potentially nonlocal Hamiltonians and entangled initial states. 
The proof techniques of Theorem~\ref{thm:theorem2} also have implications for recent circuit complexity lower bounds for the Scrooge ensemble~\cite{mcginley2025,Supple}. To complete the proof of Theorem~\ref{thm:theorem1}, we show that the two conditions of Theorem~\ref{thm:theorem2} are generically satisfied.

\textit{Spectral ergodicity is generic}--- The no-resonance conditions, predominantly the non-degenerate gap condition $(k=2)$, appear as assumptions in most works that rigorously study thermalization and equilibration in quantum many-body dynamics~\cite{Reimann2008,Linden2009,Mark2023,Mark2024,huang2020,Pilatowsky-Cameo2025,Riddell2023,Kaneko2020}. One notable exception is Ref.~\cite{Huang2021} which proved the non-degenerate gap condition for generic local Hamiltonians. We generalize it to the $k$-th no-resonance condition with arbitrary $k$. Specifically, we parametrize any local Hamiltonian with range $r$ in the Pauli-string basis 
\begin{equation}
\label{Eq:HamDef}H=\sum_{\mathrm{range}(P)\leq r} c_P P,
\end{equation}
where $\mathrm{range}(P)$ denotes the distance between the furthest sites that the Pauli string $P$ acts non-trivially on (e.g., nearest-neighbor interactions have range $1$).

\begin{prop}[Spectral ergodicity is generic]
   \label{thm:generic-ergodic} The set of local Hamiltonians $H$ which violate spectral ergodicity (Definition~\ref{def:spec-erg}), parametrized by tuples of coefficients $(c_P)_{\mathrm{range}(P)\leq r}$ in Eq.~\eqref{Eq:HamDef}, has zero measure for any fixed $r\geq 1$. 
\end{prop}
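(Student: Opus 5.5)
The plan is to fix $n$ and $r$, reduce the failure of spectral ergodicity to a countable union of zero sets of polynomials in the coefficients $c=(c_P)$, and show that no polynomial in this union vanishes identically. A countable union of proper real-algebraic subsets of $\mathbb{R}^m$ has Lebesgue measure zero, which gives the proposition.

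\textbf{Step 1 (algebraization).} Fix $k$ and an integer vector $a\in\mathbb{Z}^N$, $N=2^n$, of the form $a=\sum_i e_{\mu_i}-\sum_i e_{\nu_i}$, where $(\mu_i)$ is not a permutation of $(\nu_i)$. Equivalently, $a\neq 0$ and $\sum_\mu a_\mu=0$. Eigenvalues are not globally analytic labelled functions of $c$, so I symmetrize and define
\[
F_a(c)=\prod_{\sigma\in S_N}\Big(\sum_{\mu} a_\mu E_{\sigma(\mu)}(c)\Big).
\]
This is a symmetric polynomial in the eigenvalues of $H(c)$. It is therefore a polynomial in the coefficients of the characteristic polynomial, and hence a polynomial in $c$. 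A Hamiltonian violates the $k$-th no-resonance condition exactly when $F_a(c)=0$ for some admissible $a$, and there are countably many such $a$ (over all $k$). So it suffices to show $F_a\not\equiv 0$ for each $a$.

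\textbf{Step 2 (reduction to a single witness).} $F_a\not\equiv0$ holds as soon as there is one point $c^\star$ in the parameter space at which the eigenvalues satisfy no integer relation $\sum_\mu a_\mu E_\mu=0$ with $\sum_\mu a_\mu=0$ and $a\neq 0$. Equivalently, the gaps $E_\mu-E_1$, $\mu\geq 2$, must be linearly independent over $\mathbb{Q}$. Such a single witness then handles all $k$ simultaneously.

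Commuting diagonal witnesses such as $\sum_j h_j Z_j$ plus local $ZZ$ terms are insufficient. Their eigenvalues are $f(s)=\sum_P c_P s^P$ over local $Z$-strings only, and distinct multisets of bitstrings can share all local "moments" (e.g.\ $\{00,11\}$ versus $\{01,10\}$ for single-site terms). Non-commuting terms are therefore needed.

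\textbf{Step 3 (constructing the witness; the main obstacle).} First I would try an algebraic route. Show that the characteristic polynomial $\chi_c(x)=\det(x-H(c))$ is irreducible over $\mathbb{Q}(c)$ and that its Galois group $G$ acts $2$-transitively on the roots. The $G$-invariant $\mathbb{Q}$-linear relations among the roots then form a submodule of the permutation module $\mathbb{Q}^N$. For a $2$-transitive group this module splits as the trivial part plus an irreducible complement. Since the eigenvalues are generically distinct (the case $k=1$, already non-identically degenerate by \cite{Huang2021}), the relations with zero coefficient sum must be trivial. Specializing $c^\star$ to algebraically independent transcendentals then gives the witness. Establishing $2$-transitivity is the crux.

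To get it, I would adapt the perturbative argument behind the $k=2$ case in \cite{Huang2021}. Build the chain site by site with couplings at hierarchically separated scales $\epsilon_1\gg\epsilon_2\gg\cdots$. Use monodromy of the eigenvalue branches around the complex degeneracy points created when a new site's transverse field is switched on, to produce transpositions that swap specific pairs of levels. Transitivity plus enough transpositions (a transitive group generated by transpositions is all of $S_N$) would yield the full symmetric group, which is in particular $2$-transitive.

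If the monodromy argument proves too delicate, my fallback is a direct inductive witness. Assume a range-$r$ Hamiltonian $H_{n-1}$ whose gaps are $\mathbb{Q}$-independent. Add a new site with field $hX_n$ and a weak coupling $\lambda\,Z_{n-1}Z_n$, with $h,\lambda$ algebraically independent over the field generated by the old spectrum. Then expand the new eigenvalues in $\lambda$, which is analytic near $\lambda=0$ when $h$ is generic, and show that any putative integer relation must vanish order by order, forcing $a=0$.
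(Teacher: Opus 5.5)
Your Steps 1--2 are correct and match the paper's reduction. The symmetrized product is a polynomial in $c$, its zero set is null unless it vanishes identically, and only countably many resonance patterns occur. You do not need one witness for all $k$ at once: each $F_a$ involves a single pattern, so a $k$-dependent witness suffices, and that is what the paper builds.

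The gap is that Step 3, which is the whole content of the proposition (the paper's Lemma~2), is not carried out. Your Galois argument is valid conditionally: a $2$-transitive Galois group together with distinct roots does exclude zero-sum rational relations. But irreducibility of $\det(x-H(c))$ over $\mathbb{Q}(c)$ and $2$-transitivity are only stated as goals. The monodromy sketch shows neither that the eigenvalue cover is connected, nor that generic points of every discriminant component are simple two-level square-root branch points. The alternatives you would have to rule out are analytic crossings or multi-level collisions forced by hidden structure, and ruling those out is essentially the statement being proved.

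Your fallback fails as stated, because $\mathbb{Q}$-independence of the old gaps is too weak an inductive hypothesis. Take $H_1=Z_1$, which satisfies it. Then $H_2=Z_1+hX_2+\lambda Z_1Z_2$ conserves $Z_1$, and its levels are $\pm1\pm R$ with $R=\sqrt{h^2+\lambda^2}$. The identity $(1+R)+(-1-R)=(1-R)+(-1+R)$ is a $k=2$ resonance for every $h,\lambda$.

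The same problem appears in general. At zeroth order in $\lambda$ the family $a_{\mu,-}=-a_{\mu,+}$, $\sum_\mu a_{\mu,+}=0$ survives untouched. The first order vanishes because $\langle\pm|Z_n|\pm\rangle=0$. Whether higher orders kill these relations depends on the matrix elements $\langle\mu|Z_{n-1}|\nu\rangle$ in the old eigenbasis, which your hypothesis does not control.

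The missing idea is a second inductive hypothesis on how the spectrum responds to a tunable local field. The paper requires that, for $\tilde H_n(x)=H_n+xQ_n$, no integer combination $\sum_\mu m_\mu\tilde E''_\mu(x)$ vanishes identically on an interval. It attaches the new site through $J_n(g_n+Q_n)Z_{n+1}$ with no transverse field on site $n+1$. Then $Z_{n+1}$ is conserved and the new levels are exactly $\tilde E_\mu(h_n\pm J_n)\pm J_ng_n$.

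With $g_n$ large and $J_n$ small, three terms successively exclude every resonance of order $\le k$:
\begin{itemize}
\item the $J_ng_n$ term handles relations with $\sum_\mu(m_{\mu,+}-m_{\mu,-})\neq0$;
\item non-resonance of $\tilde E_\mu$ near $h_n$ handles those with $m_{\mu,+}+m_{\mu,-}\not\equiv 0$;
\item non-resonance of $\tilde E'_\mu$ handles the remaining family $m_{\mu,-}=-m_{\mu,+}$ that defeats your expansion.
\end{itemize}
A separate Schur-complement perturbation argument then propagates the derivative hypothesis to $n+1$.
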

Proposition~\ref{thm:generic-ergodic} is proven by noting that the $k$-th no-resonance condition is violated only when a certain polynomial in the coefficients $c_P$ vanishes~\cite{Keating2015,Huang2021,Riddell2023,Supple}. By constructing a local Hamiltonian without $k$-th order resonances, we prove that this polynomial is not identically zero, and hence its roots --- coefficients $c_P$ which violate the $k$-th no-resonance condition --- form a zero measure set~\cite{Supple}.

\textit{Energy dispersion is typical}---Reference~\cite{huang2020} proved that a state sampled from the infinite temperature ensemble $\mathcal{E}_0$ [Eq.~\eqref{eq:Haarens}] is energy dispersed. This result was generalized~\cite{huang2024randomproduct} to a finite temperature ensemble similar to $\mathcal{E}_{\beta,\delta}$, but containing only stabilizer product states. We improve on this result to apply to the continuous ensemble~$\mathcal{E}_{\beta,\delta}$ of Eq.~\eqref{eq:ensfintemp}, and provide explicit non-asymptotic bounds~\cite{Supple}.

\begin{prop}[Typical product states are energy dispersed]
   \label{thm:energy-dispersion}
    For any extensive local Hamiltonian and all but an exponentially small fraction of states
    $\psi\sim \mathcal{E}_0$ or $\mathcal{E}_{\beta,\delta}$ ($|\beta|<\beta_\mathrm{c}$), $\psi$ is energy dispersed.
\end{prop}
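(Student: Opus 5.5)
The plan is to lower-bound the effective dimension through the purity $\sum_\mu p_\mu^2$, using a quantitative ``no product state has large overlap with any eigenstate'' estimate. Since $\sum_\mu p_\mu^2\le \max_\mu p_\mu$, it suffices to show that, with probability $1-e^{-\Omega(n)}$ over the ensemble, $\max_\mu |\braket{\mu}{\psi}|^2\le 2^{-\gamma n}$. I would not try to control all $\mu$ by a union bound directly, since there are $2^n$ of them and a naive union bound would exhaust any $2^{-\Omega(n)}$ per-eigenstate bound. Instead I would work with the second moment itself. Define $\mathbb{E}_{\mathcal{E}_0}\sum_\mu p_\mu^2=\sum_\mu \tr\big[(\dyad{\mu})^{\otimes 2}\,\mathbb{E}\,\psi^{\otimes 2}\big]$. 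For local Haar product states, $\mathbb{E}\,\psi^{\otimes 2}=\bigotimes_i \frac{\mathds{1}+\mathrm{SWAP}_i}{6}$. Expanding gives
\begin{equation}
\mathbb{E}\sum_\mu p_\mu^2=\frac{1}{6^n}\sum_{S\subseteq[n]}\sum_\mu \tr\big[(\tr_{S^c}\dyad{\mu})^2\big],
\end{equation}
a weighted sum of subsystem purities of eigenstates.

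The core step is to show that eigenstates of an extensive local $H$ cannot have large purity on most subsystems; equivalently, each $p_\mu$ is typically small. Here I would use a concentration/anti-concentration argument in energy rather than entanglement. For a product state $\psi$, the energy distribution $\sum_\mu p_\mu\delta(E-E_\mu)$ has variance $\expval{H^2}{\psi}-\expval{H}{\psi}^2=\Theta(n)$ for typical $\psi$ by extensivity (each local term has nontrivial fluctuation in a Haar-random local state). At high temperature it is moreover ``spread'': a Berry--Esseen-type bound for quantum product states, or cluster expansion of the characteristic function $\expval{e^{-iHs}}{\psi}$ at small $|s|$, shows that the mass in any window of width $w$ is at most $O(w/\sqrt n)+e^{-\Omega(n)}$-type corrections. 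This alone only gives polynomial dispersion. The exponential gain instead comes from combining it with the trace bound: the number of eigenstates in a window is exponential, and by averaging, $\mathbb{E}\,p_\mu=2^{-n}$. I would therefore bound $\mathbb{E}\sum_\mu p_\mu^2$ via a filter. Write $p_\mu^2\le p_\mu\cdot \expval{\Pi_\mu}{\psi}$ and compare with $\expval{f(H)}{\psi}^2$ for a smooth window function $f$ of width $O(1)$ around $E_\mu$. Then use the imaginary-time/cluster-expansion estimate $\expval{e^{-\beta' H}}{\psi}\le e^{O(\beta'^2 n)}$ together with $\tr e^{-\beta'H}$ to show each window holds at most $2^{-\gamma n}$ of the state's weight relative to its eigenstate count. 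This is essentially the approach of Ref.~\cite{huang2020}, and I would reproduce it with explicit constants. Markov's inequality then upgrades the expectation bound $\mathbb{E}\sum p_\mu^2\le 2^{-2\gamma n}$ to ``all but a $2^{-\gamma n}$ fraction of $\psi\sim\mathcal{E}_0$ have $1/\sum p_\mu^2\ge 2^{\gamma n}$''.

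For $\mathcal{E}_{\beta,\delta}$, I would tilt the measure. The conditioned ensemble has density $\mathbb{1}[|\expval{H}{\psi}-E(\beta)|<\delta\sqrt n]/\Pr_{\mathcal{E}_0}[\cdot]$. The key claim is that this shell has probability $\ge e^{-O(\beta^2 n)}$ under $\mathcal{E}_0$, proved by a large-deviation estimate for $\expval{H}{\psi}$, a sum of weakly dependent local terms, for $|\beta|<\beta_\mathrm{c}$ with $\delta\ge\delta_*$ ensuring the shell is not too thin. Conditioning then inflates the exceptional probability by at most $e^{O(\beta^2 n)}$, which is beaten by $2^{-\gamma n}$ when $\beta_\mathrm{c}$ is small enough relative to $\gamma$. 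This is the main obstacle: the constants must compete. A concern is that $\gamma$ for the unconditioned ensemble measures dispersion at infinite temperature, while states in the shell concentrate near energy $E(\beta)$. I would handle this by using the filtered second-moment bound restricted to the shell, i.e.\ bounding $\mathbb{E}[\sum p_\mu^2\,\mathbb{1}_{\rm shell}]$ directly via the tilted cluster expansion. I expect to need $|\beta|$ below the cluster-expansion convergence radius, which is what fixes $\beta_\mathrm{c}$.
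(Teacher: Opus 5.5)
Your skeleton matches the paper: a second-moment bound over $\mathcal{E}_0$ plus Markov, then dividing by the $\mathcal{E}_0$-probability of the energy shell. However, you misplace where the work is.

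For $\mathcal{E}_0$, the identity you wrote already closes the proof. Since $\tr[(\tr_{S^c}\dyad{\mu})^2]\le 1$, it gives $\mathbb{E}_{\mathcal{E}_0}\sum_\mu p_\mu^2\le 6^{-n}\cdot 2^n\cdot 2^n=(2/3)^n$ for \emph{any} Hamiltonian and any orthonormal eigenbasis. Markov then yields dispersion with any $\gamma<\log_2(3/2)$.

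The step you call the core one is not needed, and it is false as stated. Take $H=\sum_i(h_iZ_i+J_iZ_iZ_{i+1})$ with generic coefficients: it is local and extensive, yet all its eigenstates are product states with every subsystem purity equal to $1$. The bound $(2/3)^n$ is then attained with equality. The dispersion comes from the Haar average over $\psi$, not from any property of the eigenstates. So drop the anti-concentration/filter-function/cluster-expansion sketch; as written it is not an argument anyway, e.g.\ $p_\mu^2\le p_\mu\expval{\Pi_\mu}{\psi}$ is just an identity.

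Your worry about the shell is likewise unfounded. The $\mathcal{E}_0$-expectation of $\sum_\mu p_\mu^2$ restricted to the shell is trivially at most $(2/3)^n$. Hence $\mathbb{E}_{\mathcal{E}_{\beta,\delta}}\sum_\mu p_\mu^2\le (2/3)^n/\Pr_{\mathcal{E}_0}[\text{shell}]$, and no tilted expansion is needed.

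The genuine gap is the lower bound on $\Pr_{\mathcal{E}_0}\bigl[|\expval{H}{\psi}-E(\beta)|<\delta\sqrt{n}\bigr]$, which you only assert ``by a large-deviation estimate.'' Chernoff/Cram\'er bounds give the easy direction, namely tail upper bounds. You need a \emph{lower} bound on a window of width $\delta\sqrt n$, at distance $|E(\beta)|$ from the mean, for a sum of overlapping (hence dependent) local functions of the Bloch vectors. That would require a change of measure to a tilted classical spin model on $(S^2)^n$ plus concentration under the tilt, and none of this is supplied.

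You also never bound $|E(\beta)|$. Your $e^{-O(\beta^2 n)}$ tacitly assumes $|E(\beta)|=O(|\beta| n)$. The paper proves this from clustering of $g_\beta$ via $dE/d\beta=-\mathrm{Var}_{g_\beta}(H)=O(n)$.

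The paper avoids large deviations with an explicit construction:
\begin{itemize}
\item Choose $m=\lceil |E(\beta)|/(h_{\min}-w)\rceil$ balls with well-separated supports.
\item Extensivity makes a width-$w$ window next to $E(\beta)/m$ reachable by each local energy $\expval{H_i}{\psi}$. The Haar probability of landing in it is at least a spherical-cap volume $\bigl(w/(4^{\mathcal{V}+1}\mathcal{V})\bigr)^{2\mathcal{V}}$ per ball.
\item Chebyshev then controls the $m$ independent local fluctuations, and separately the remainder $H-H_{\mathcal{R}}$. The remainder's conditional mean vanishes by tracelessness, and its conditional variance is at most $n4^{2\mathcal{V}}/3$; this is where $\delta\ge 2^{2\mathcal{V}+1}$ enters.
\end{itemize}
This gives $\Pr_{\mathcal{E}_0}[\text{shell}]\ge e^{-O(|\beta| n)}$ with explicit constants. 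That is weaker than your claimed $e^{-O(\beta^2 n)}$ but sufficient. Finally, $\beta_c$ is set by requiring this factor to lose against $(2/3)^n$, and by the clustering used for $E(\beta)$, not by a cluster-expansion radius.
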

The proofs of Propositions~\ref{thm:generic-ergodic} and~\ref{thm:energy-dispersion} are found in the SM~\cite{Supple}. These two results, combined with Theorem~\ref{thm:theorem2}, imply our main result, Theorem~\ref{thm:theorem1}.
They also have an independent consequence for the recurrence time to the initial state by Corollary 2 of Ref.~\cite{Riddell2023}:
\begin{corollary}[Double-exponential recurrence time for generic local Hamiltonians]
    For $H$ and $\psi$ as in Theorem~\ref{thm:theorem1}, the average recurrence time of $\psi$ is at least $\exp(\exp(\Omega(n)))$.
\end{corollary}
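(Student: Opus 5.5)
The plan is to derive the corollary from Corollary~2 of Ref.~\cite{Riddell2023}, which bounds the mean recurrence time of a state under a Hamiltonian obeying no-resonance conditions of all orders by a quantity controlled by the state's effective dimension. Our only task is to check that its hypotheses hold for the $H$ and $\psi$ of Theorem~\ref{thm:theorem1}, and then to substitute the effective dimension. There are two such hypotheses. By Proposition~\ref{thm:generic-ergodic}, outside a measure-zero set of coefficients $(c_P)$ the Hamiltonian satisfies the $k$-th no-resonance condition for every $k$; this is a countable intersection of full-measure sets and so still has full measure. By Proposition~\ref{thm:energy-dispersion}, all but an exponentially small fraction of $\psi\sim\mathcal{E}_0$ or $\mathcal{E}_{\beta,\delta}$ obey $d_{\mathrm{eff}}=1/\sum_\mu p_\mu^2\geq 2^{\gamma n}$.

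To see the mechanism and the scaling, I would reconstruct the estimate as follows. Set $F(t)=|\langle\psi|\psi(t)\rangle|^2=|\sum_\mu p_\mu e^{-iE_\mu t}|^2$, and call $t$ a recurrence when $F(t)\geq 1-\epsilon$. Expanding $F(t)^k$ gives sums over $2k$-tuples of energies. Under the $k$-th no-resonance condition, the infinite-time average of $F^k$ keeps only the permutation-diagonal terms, so
\begin{equation*}
\overline{F^k}=\sum_{\mu_1\ldots\mu_k,\nu_1\ldots\nu_k}\prod_i p_{\mu_i}p_{\nu_i}\,\overline{e^{-i(\sum E_{\mu_i}-\sum E_{\nu_i})t}}\leq k!\,\Big(\sum_\mu p_\mu^2\Big)^k=k!\,d_{\mathrm{eff}}^{-k}.
\end{equation*}
Markov's inequality then gives a recurrence-time fraction of at most $k!\,d_{\mathrm{eff}}^{-k}/(1-\epsilon)^k$, where the time average exists by Kronecker--Weyl. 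Optimizing with $k\sim d_{\mathrm{eff}}$ makes this fraction $\exp(-\Omega(d_{\mathrm{eff}}))$. Inserting $d_{\mathrm{eff}}\geq 2^{\gamma n}$ gives a fraction $\exp(-\exp(\Omega(n)))$.

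The remaining step is to turn this fraction into a mean recurrence time. Ref.~\cite{Riddell2023} does this by relating the mean return time to the inverse of the fraction of time spent in the recurrence set, in the spirit of Kac's lemma for the quasiperiodic flow on the torus. Its Corollary~2 states the result directly as a mean recurrence time $\geq \exp(\Omega(d_{\mathrm{eff}}))$ under all-order no-resonance. Substituting $d_{\mathrm{eff}}\geq 2^{\gamma n}$ then yields $\exp(\exp(\Omega(n)))$ for generic $H$ and typical $\psi$.

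The main obstacle I expect is matching conventions. I need to confirm that the notion of ``average recurrence time'' and the recurrence threshold $\epsilon$ in Ref.~\cite{Riddell2023} are fixed constants, and that their bound really depends only on $d_{\mathrm{eff}}$ and all-order no-resonance. If their statement carries additional constants, such as the number of orders $k$ used, I would pick $k$ growing with $d_{\mathrm{eff}}$ as above, which is permitted because Proposition~\ref{thm:generic-ergodic} holds for every $k$ at once.
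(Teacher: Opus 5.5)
Your proposal matches the paper's proof: invoke Corollary~2 of Ref.~\cite{Riddell2023}, check spectral ergodicity (Proposition~1) and energy dispersion (Proposition~2), and substitute $D^{\mathrm{eff}}_\psi\geq 2^{\gamma n}$; your moment/Markov reconstruction is a correct sketch of the mechanism behind that corollary. The one imprecision is the form of the cited bound, which is $T_{\mathrm{rec}}(\varepsilon,\Delta)\geq\frac{\Delta}{2e}\exp\bigl(\frac{1-\varepsilon}{e}D^{\mathrm{eff}}_\psi\bigr)$ and so carries a prefactor equal to the required recurrence duration $\Delta$; this is exactly what your Kac-type step needs, since a small time fraction only caps the number of recurrences if each lasts at least $\Delta$, so you must also fix $\Delta\geq 1/\mathrm{poly}(n)$, as the paper does, to conclude $\exp(\exp(\Omega(n)))$.
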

\noindent This matches double-exponential upper bounds on the first recurrence time recently established in Refs.~\cite{Oszmaniec2024,Kotowski2026}.
\section*{Discussion and outlook}
\vspace{-1em}
We show that generic many-body systems evolving under a time-independent local Hamiltonian achieve an exponential late-time circuit complexity, with sustained growth over an exponential window of time. The Brown-Susskind conjecture additionally requires that such growth is linear in time~\cite{Brown2018}. Proving such a statement for any fixed local Hamiltonian would be challenging because of a complexity-theoretic barrier~\cite{aaronson2016,susskind2018}: exponential-time Hamiltonian evolution lies in $\mathsf{PSPACE}$, while a proof that the resulting states require superpolynomial-size quantum circuits would imply $\mathsf{PSPACE}\not\subseteq\mathsf{BQP}$. Existing works proving the growth of circuit complexity in time-independent Hamiltonians therefore rely on complexity-theoretic assumptions and produce statements about specific models~\cite{aaronson2016,bohdanowicz2017,susskind2018}. Our result does not establish linear growth and hence does not imply a complexity-theoretic separation, but universally applies to generic local Hamiltonians and initial product states. In our technical statements, we leverage ideas from quantum many-body dynamics and thermalization, such as energy dispersion and spectral ergodicity. This suggests the interesting possibility that such properties and techniques can inform open questions in theoretical computer science.

Several directions remain open. A first question is an extension to systems with symmetries, fermionic systems, field theories, etc. It is also important to inquire about the tightness of our bounds: is the double-exponential upper bound on the saturation achieved by some model? It could be fruitful to revisit models of slow thermalization~\cite{Balasubramanian2024}. It would also be valuable to understand the complexity of subsystems near half-system size, where one may expect a transition predicted by the AdS/CFT correspondence~\cite{fan2025,haah2025}, and to ask whether this transition is sensitive to temperature in energy-conserving dynamics. 

Beyond circuit complexity, it would be desirable to find other, more tractable diagnostics of post-thermalization dynamics. Corollary~\ref{cor:cor2} shows that late-time volume-law entanglement is irremovable by any subexponential circuit, but this raises a structural question: is such irremovable entanglement quantitatively different from entanglement generated by polynomial-depth circuits? If so, how to quantify such a difference? Ideally, one would like probes of such a strong form of entanglement that, unlike complexity, have closed-form expressions and hence are tractable numerically or analytically. Quantities based on moments of the temporal ensemble~\cite{Roberts_2017,Pilatowsky_Cameo_2023} or the Krylov complexity~\cite{Balasubramanian2022,lucas2026} have been shown to capture hidden late-time structure, opening new possibilities to characterize post-thermalization dynamics.

\begin{acknowledgments}
    We thank Rathindra Nath Das, Jeongwan Haah, Nicole Yunger Halpern, and Thomas Schuster for feedback on earlier versions of this manuscript. W.~L. acknowledges the financial supports by Samsung Science and Technology Foundation under Project Number SSTF-BA2401-03. This work was supported by Global Partnership Program of Leading Universities in Quantum Science and Technology (RS-2025-08542968) through the National Research Foundation of Korea funded by the Korean government (Ministry of Science and ICT), the NSF CAREER Award DMR-2237244, the Heising-Simons Foundation (Grant No.~2024-4851), the Alfred P. Sloan Foundation through a Sloan Research Fellowship, the Center for Ultracold Atoms (an NSF Physics Frontiers Center; PHY-2317134) and by the U.S. DOE Office of Science, under Award Number DE-SC0021013. 
\end{acknowledgments}
\bibliography{Ref}

@article{Balasubramanian2020,
  author = {Balasubramanian, Vijay and DeCross, Matthew and Kar, Arjun and Parrikar, Onkar},
  title = {Quantum complexity of time evolution with chaotic Hamiltonians},
  journal = {Journal of High Energy Physics},
  volume = {2020},
  number = {1},
  pages = {134},
  year = {2020},
  doi = {10.1007/JHEP01(2020)134},
  eprint = {1905.05765},
  archivePrefix = {arXiv},
  primaryClass = {hep-th}
}

@article{Verstraete2006,
  title = {Matrix product states represent ground states faithfully},
  author = {Verstraete, F. and Cirac, J. I.},
  journal = {Phys. Rev. B},
  volume = {73},
  issue = {9},
  pages = {094423},
  numpages = {8},
  year = {2006},
  month = {Mar},
  publisher = {American Physical Society},
  doi = {10.1103/PhysRevB.73.094423},
  url = {https://link.aps.org/doi/10.1103/PhysRevB.73.094423}
}

@ARTICLE{Harrow2023,
    title    = "Approximate Unitary t-Designs by Short Random Quantum Circuits
              Using {Nearest-Neighbor} and {Long-Range} Gates",
    author   = "Harrow, Aram W and Mehraban, Saeed",
    journal  = "Communications in Mathematical Physics",
    volume   =  401,
    number   =  2,
    pages    = "1531--1626",
    month    =  jul,
    year     =  2023,
    doi      = {10.1007/s00220-023-04675-z},
    url      = {https://doi.org/10.1007/s00220-023-04675-z}
}

@article{Nielsen2006,
  author = {Michael A. Nielsen  and Mark R. Dowling  and Mile Gu  and Andrew C. Doherty },
  title = {Quantum Computation as Geometry},
  journal = {Science},
  volume = {311},
  number = {5764},
  pages = {1133-1135},
  year = {2006},
  doi = {10.1126/science.1121541},
  URL = {https://www.science.org/doi/abs/10.1126/science.1121541},
}

@ARTICLE{Haferkamp2022,
  title    = "Linear growth of quantum circuit complexity",
  author   = "Haferkamp, Jonas and Faist, Philippe and Kothakonda, Naga B T and
              Eisert, Jens and Yunger Halpern, Nicole",
  journal  = "Nature Physics",
  volume   =  18,
  number   =  5,
  pages    = "528--532",
  month    =  may,
  year     =  2022,
  url      = "https://doi.org/10.1038/s41567-022-01539-6",
  doi      = "10.1038/s41567-022-01539-6"
}

@InProceedings{Ji2018,
  author="Ji, Zhengfeng
  and Liu, Yi-Kai
  and Song, Fang",
  title="Pseudorandom Quantum States",
  booktitle="Advances in Cryptology -- CRYPTO 2018",
  year="2018",
  publisher="Springer International Publishing",
  address="Cham",
  pages="126--152",
  isbn="978-3-319-96878-0",
  doi="https://doi.org/10.1007/978-3-319-96878-0_5"
}

@article{Mark2023,
  title = {Benchmarking Quantum Simulators Using Ergodic Quantum Dynamics},
  author = {Mark, Daniel K. and Choi, Joonhee and Shaw, Adam L. and Endres, Manuel and Choi, Soonwon},
  journal = {Phys. Rev. Lett.},
  volume = {131},
  issue = {11},
  pages = {110601},
  numpages = {7},
  year = {2023},
  month = {Sep},
  publisher = {American Physical Society},
  doi = {10.1103/PhysRevLett.131.110601},
  url = {https://link.aps.org/doi/10.1103/PhysRevLett.131.110601}
}

@article{Brand_o_2016,
   title={Local Random Quantum Circuits are Approximate Polynomial-Designs},
   volume={346},
   ISSN={1432-0916},
   url={http://dx.doi.org/10.1007/s00220-016-2706-8},
   DOI={10.1007/s00220-016-2706-8},
   number={2},
   journal={Communications in Mathematical Physics},
   publisher={Springer Science and Business Media LLC},
   author={Brandão, Fernando G. S. L. and Harrow, Aram W. and Horodecki, Michał},
   year={2016},
   month=aug, 
   pages={397–434} 
}

@article{Mark2024,
  title = {Maximum Entropy Principle in Deep Thermalization and in {H}ilbert-Space Ergodicity},
  author = {Mark, Daniel K. and Surace, Federica and Elben, Andreas and Shaw, Adam L. and Choi, Joonhee and Refael, Gil and Endres, Manuel and Choi, Soonwon},
  journal = {Phys. Rev. X},
  volume = {14},
  issue = {4},
  pages = {041051},
  numpages = {49},
  year = {2024},
  month = {Nov},
  publisher = {American Physical Society},
  doi = {10.1103/PhysRevX.14.041051},
  url = {https://link.aps.org/doi/10.1103/PhysRevX.14.041051}
}

@article{Haferkamp2022randomquantum,
  doi = {10.22331/q-2022-09-08-795},
  url = {https://doi.org/10.22331/q-2022-09-08-795},
  title = {Random quantum circuits are approximate unitary {$t$}-designs in depth {$O\left(nt^{5+o(1)}\right)$}},
  author = {Haferkamp, Jonas},
  journal = {{Quantum}},
  issn = {2521-327X},
  publisher = {{Verein zur F{\"{o}}rderung des Open Access Publizierens in den Quantenwissenschaften}},
  volume = {6},
  pages = {795},
  month = sep,
  year = {2022}
}

@misc{Supple,
   title = {See Supplementary Material for further details}, 
}

@misc{susskind2018,
      title={Black Holes and Complexity Classes}, 
      author={Leonard Susskind},
      year={2018},
      eprint={1802.02175},
      archivePrefix={arXiv},
      primaryClass={hep-th},
      url={https://arxiv.org/abs/1802.02175}, 
}

@misc{aaronson2016,
      title={The Complexity of Quantum States and Transformations: From Quantum Money to Black Holes}, 
      author={Scott Aaronson},
      year={2016},
      eprint={1607.05256},
      archivePrefix={arXiv},
      primaryClass={quant-ph},
      url={https://arxiv.org/abs/1607.05256}, 
}

@misc{bohdanowicz2017,
      title={Universal Hamiltonians for Exponentially Long Simulation}, 
      author={Thomas C. Bohdanowicz and Fernando G. S. L. Brandão},
      year={2017},
      eprint={1710.02625},
      archivePrefix={arXiv},
      primaryClass={quant-ph},
      url={https://arxiv.org/abs/1710.02625}, 
}

@article{Ho_2022,
   title={Exact Emergent Quantum State Designs from Quantum Chaotic Dynamics},
   volume={128},
   ISSN={1079-7114},
   doi = {10.1103/PhysRevLett.128.060601},
   url={http://dx.doi.org/10.1103/PhysRevLett.128.060601},
   number={6},
   journal={Physical Review Letters},
   publisher={American Physical Society (APS)},
   author={Ho, Wen Wei and Choi, Soonwon},
   year={2022},
   pages={-},
   month=feb 
}

@inproceedings{chen2024,
      title={Incompressibility and spectral gaps of random circuits}, 
      author={Chi-Fang Chen and Jeongwan Haah and Jonas Haferkamp and Yunchao Liu and Tony Metger and Xinyu Tan},
      booktitle={2025 IEEE 66th Annual Symposium on Foundations of Computer Science (FOCS)},
      pages={1304-1312},
      year={2025},
      doi={10.1109/FOCS63196.2025.00069},
}

@ARTICLE{Pilatowsky-Cameo2025,
  title    = "Quantum thermalization must occur in translation-invariant
              systems at high temperature",
  author   = "Pilatowsky-Cameo, Sa{\'u}l and Choi, Soonwon",
  journal  = "Nature Communications",
  volume   =  17,
  number   =  1,
  pages    = "75",
  month    =  jan,
  year     =  2026,
  url      = "https://www.nature.com/articles/s41467-025-66777-7",
  doi      = "10.1038/s41467-025-66777-7"
}

@article{Oszmaniec2024,
  title = {Saturation and Recurrence of Quantum Complexity in Random Local Quantum Dynamics},
  author = {Oszmaniec, Micha\l{} and Kotowski, Marcin and Horodecki, Micha\l{} and Hunter-Jones, Nicholas},
  journal = {Phys. Rev. X},
  volume = {14},
  issue = {4},
  pages = {041068},
  numpages = {43},
  year = {2024},
  month = {Dec},
  publisher = {American Physical Society},
  doi = {10.1103/PhysRevX.14.041068},
  url = {https://link.aps.org/doi/10.1103/PhysRevX.14.041068}
}

@article{Brandao2021,
  title = {Models of Quantum Complexity Growth},
  author = {Brand\~ao, Fernando G.S.L. and Chemissany, Wissam and Hunter-Jones, Nicholas and Kueng, Richard and Preskill, John},
  journal = {PRX Quantum},
  volume = {2},
  issue = {3},
  pages = {030316},
  numpages = {40},
  year = {2021},
  month = {Jul},
  publisher = {American Physical Society},
  doi = {10.1103/PRXQuantum.2.030316},
  url = {https://link.aps.org/doi/10.1103/PRXQuantum.2.030316}
}

@article{Gu2021,
  doi = {10.22331/q-2021-11-15-577},
  url = {https://doi.org/10.22331/q-2021-11-15-577},
  title = {Fast-forwarding quantum evolution},
  author = {Gu, Shouzhen and Somma, Rolando D. and {\c{S}}ahino{\u{g}}lu, Burak},
  journal = {{Quantum}},
  issn = {2521-327X},
  publisher = {{Verein zur F{\"{o}}rderung des Open Access Publizierens in den Quantenwissenschaften}},
  volume = {5},
  pages = {577},
  month = nov,
  year = {2021}
}

@ARTICLE{Atia2017,
  title    = "Fast-forwarding of {H}amiltonians and exponentially precise
              measurements",
  author   = "Atia, Yosi and Aharonov, Dorit",
  journal  = "Nature Communications",
  volume   =  8,
  number   =  1,
  pages    = "1572",
  month    =  nov,
  year     =  2017,
  url      = "https://doi.org/10.1038/s41467-017-01637-7"
}

@misc{haferkamp2023,
      title={On the moments of random quantum circuits and robust quantum complexity}, 
      author={Jonas Haferkamp},
      year={2023},
      eprint={2303.16944},
      archivePrefix={arXiv},
      primaryClass={quant-ph},
      url={https://arxiv.org/abs/2303.16944}, 
}

@article{haah2025,
      title={Growth and collapse of subsystem complexity under random unitary circuits}, 
      author={Jeongwan Haah and Douglas Stanford},
      journal={SciPost Phys.},
      volume={21},
      pages={016},
      year={2026},
      doi={10.21468/SciPostPhys.21.1.016},
      url={https://scipost.org/10.21468/SciPostPhys.21.1.016}, 
}

@misc{fan2025,
      title={Sharp Transitions for Subsystem Complexity}, 
      author={Yale Fan and Nicholas Hunter-Jones and Andreas Karch and Shivan Mittal},
      year={2025},
      eprint={2510.18832},
      archivePrefix={arXiv},
      primaryClass={hep-th},
      url={https://arxiv.org/abs/2510.18832}, 
}

@article{Susskind2016,
    author = {Susskind, Leonard},
    title = {Computational complexity and black hole horizons},
    journal = {Fortschritte der Physik},
    volume = {64},
    number = {1},
    pages = {24-43},
    doi = {https://doi.org/10.1002/prop.201500092},
    url = {https://onlinelibrary.wiley.com/doi/abs/10.1002/prop.201500092},
    year = {2016}
}

@misc{mao2025,
      title={Random unitaries that conserve energy}, 
      author={Liang Mao and Laura Cui and Thomas Schuster and Hsin-Yuan Huang},
      year={2025},
      eprint={2510.08448},
      archivePrefix={arXiv},
      primaryClass={quant-ph},
      url={https://arxiv.org/abs/2510.08448}, 
}

@INPROCEEDINGS{bakshi2025,
  author={Bakshi, Ainesh and Liu, Allen and Moitra, Ankur and Tang, Ewin},
  booktitle={2024 IEEE 65th Annual Symposium on Foundations of Computer Science (FOCS)}, 
  title={High-Temperature {G}ibbs States are Unentangled and Efficiently Preparable}, 
  year={2024},
  volume={},
  number={},
  pages={1027-1036},
  doi={10.1109/FOCS61266.2024.00068}
}

@book{Bhatia_1997, 
    place={New York, NY}, 
    series={Graduate Texts in Mathematics}, 
    title={Matrix analysis}, 
    volume={169}, 
    publisher={Springer}, 
    author={Bhatia, Rajendra}, 
    year={1997}, 
    collection={Graduate Texts in Mathematics},
    doi={10.1007/978-1-4612-0653-8},
    url={https://doi.org/10.1007/978-1-4612-0653-8},
}

@misc{huang2024randomproduct,
      title={Random product states at high temperature equilibrate exponentially well}, 
      author={Yichen Huang},
      year={2024},
      eprint={2409.08436},
      archivePrefix={arXiv},
      primaryClass={cond-mat.stat-mech},
      url={https://arxiv.org/abs/2409.08436}, 
}

@misc{Huang2021,
      title={Extensive entropy from unitary evolution}, 
      author={Yichen Huang},
      year={2021},
      eprint={2104.02053},
      archivePrefix={arXiv},
      primaryClass={quant-ph}
}

@misc{chen2025,
      title={Quantum {G}ibbs states are locally Markovian}, 
      author={Chi-Fang Chen and Cambyse Rouzé},
      year={2025},
      eprint={2504.02208},
      archivePrefix={arXiv},
      primaryClass={quant-ph},
      url={https://arxiv.org/abs/2504.02208}, 
}

@article{Kliesch2014,
  title = {Locality of Temperature},
  author = {Kliesch, M. and Gogolin, C. and Kastoryano, M. J. and Riera, A. and Eisert, J.},
  journal = {Phys. Rev. X},
  volume = {4},
  issue = {3},
  pages = {031019},
  numpages = {19},
  year = {2014},
  month = {Jul},
  publisher = {American Physical Society},
  doi = {10.1103/PhysRevX.4.031019},
  url = {https://link.aps.org/doi/10.1103/PhysRevX.4.031019}
}

@ARTICLE{Capel2025,
  title    = "From Decay of Correlations to Locality and Stability of the Gibbs
              State",
  author   = "Capel, {\'A}ngela and Moscolari, Massimo and Teufel, Stefan and
              Wessel, Tom",
  journal  = "Communications in Mathematical Physics",
  volume   =  406,
  number   =  2,
  pages    = "43",
  month    =  jan,
  year     =  2025,
  url      = "https://doi.org/10.1007/s00220-024-05198-x",
  doi      = "10.1007/s00220-024-05198-x"
}

@ARTICLE{Brandao2019,
  title    = "Finite Correlation Length Implies Efficient Preparation of
              Quantum Thermal States",
  author   = "Brand{\~a}o, Fernando G S L and Kastoryano, Michael J",
  journal  = "Communications in Mathematical Physics",
  volume   =  365,
  number   =  1,
  pages    = "1--16",
  month    =  jan,
  year     =  2019,
  url      = "https://doi.org/10.1007/s00220-018-3150-8",
  doi      = "10.1007/s00220-018-3150-8"
}

@article{Alhambra2020,
  title = {Time Evolution of Correlation Functions in Quantum Many-Body Systems},
  author = {Alhambra, \'Alvaro M. and Riddell, Jonathon and Garc\'{\i}a-Pintos, Luis Pedro},
  journal = {Phys. Rev. Lett.},
  volume = {124},
  issue = {11},
  pages = {110605},
  numpages = {7},
  year = {2020},
  month = {Mar},
  publisher = {American Physical Society},
  doi = {10.1103/PhysRevLett.124.110605},
  url = {https://link.aps.org/doi/10.1103/PhysRevLett.124.110605}
}

@article{Brandao2019qec,
  title = {Quantum Error Correcting Codes in Eigenstates of Translation-Invariant Spin Chains},
  author = {Brand\~ao, Fernando G. S. L. and Crosson, Elizabeth and \ifmmode \mbox{\c{S}}\else \c{S}\fi{}ahino\ifmmode \breve{g}\else \u{g}\fi{}lu, M. Burak and Bowen, John},
  journal = {Phys. Rev. Lett.},
  volume = {123},
  issue = {11},
  pages = {110502},
  numpages = {6},
  year = {2019},
  month = {Sep},
  publisher = {American Physical Society},
  doi = {10.1103/PhysRevLett.123.110502},
  url = {https://link.aps.org/doi/10.1103/PhysRevLett.123.110502}
}

@article{Brown2018,
  title = {Second law of quantum complexity},
  author = {Brown, Adam R. and Susskind, Leonard},
  journal = {Phys. Rev. D},
  volume = {97},
  issue = {8},
  pages = {086015},
  numpages = {29},
  year = {2018},
  month = {Apr},
  publisher = {American Physical Society},
  doi = {10.1103/PhysRevD.97.086015},
  url = {https://link.aps.org/doi/10.1103/PhysRevD.97.086015}
}

@misc{hunterjones2019,
      title={Unitary designs from statistical mechanics in random quantum circuits}, 
      author={Nicholas Hunter-Jones},
      year={2019},
      eprint={1905.12053},
      archivePrefix={arXiv},
      primaryClass={quant-ph},
      url={https://arxiv.org/abs/1905.12053}, 
}

@ARTICLE{Jian2023,
  title    = "Linear growth of circuit complexity from Brownian dynamics",
  author   = "Jian, Shao-Kai and Bentsen, Gregory and Swingle, Brian",
  journal  = "Journal of High Energy Physics",
  volume   =  2023,
  number   =  8,
  pages    = "190",
  month    =  aug,
  year     =  2023,
  url      = "https://doi.org/10.1007/JHEP08(2023)190",
  doi      = "10.1007/JHEP08(2023)190"
}

@article{Stanford2014,
  title = {Complexity and shock wave geometries},
  author = {Stanford, Douglas and Susskind, Leonard},
  journal = {Phys. Rev. D},
  volume = {90},
  issue = {12},
  pages = {126007},
  numpages = {11},
  year = {2014},
  month = {Dec},
  publisher = {American Physical Society},
  doi = {10.1103/PhysRevD.90.126007},
  url = {https://link.aps.org/doi/10.1103/PhysRevD.90.126007}
}

@article{Brown2016,
  title = {Holographic Complexity Equals Bulk Action?},
  author = {Brown, Adam R. and Roberts, Daniel A. and Susskind, Leonard and Swingle, Brian and Zhao, Ying},
  journal = {Phys. Rev. Lett.},
  volume = {116},
  issue = {19},
  pages = {191301},
  numpages = {5},
  year = {2016},
  month = {May},
  publisher = {American Physical Society},
  doi = {10.1103/PhysRevLett.116.191301},
  url = {https://link.aps.org/doi/10.1103/PhysRevLett.116.191301}
}

@article{Haah2023,
    author = {Haah, Jeongwan and Hastings, Matthew B. and Kothari, Robin and Low, Guang Hao},
    title = {Quantum Algorithm for Simulating Real Time Evolution of Lattice {H}amiltonians},
    journal = {SIAM Journal on Computing},
    volume = {52},
    number = {6},
    pages = {FOCS18-250-FOCS18-284},
    year = {2023},
    doi = {10.1137/18M1231511},
    URL = {https://doi.org/10.1137/18M1231511}
}

@article{Keating2015,
  title = {Spectra and Eigenstates of Spin Chain {H}amiltonians},
  volume = {338},
  ISSN = {1432-0916},
  url = {http://dx.doi.org/10.1007/s00220-015-2366-0},
  DOI = {10.1007/s00220-015-2366-0},
  number = {1},
  journal = {Communications in Mathematical Physics},
  publisher = {Springer Science and Business Media LLC},
  author = {Keating,  J. P. and Linden,  N. and Wells,  H. J.},
  year = {2015},
  month = may,
  pages = {81–102}
}

@ARTICLE{Araki1969,
  title    = "Gibbs states of a one dimensional quantum lattice",
  author   = "Araki, Huzihiro",
  journal  = "Communications in Mathematical Physics",
  volume   =  14,
  number   =  2,
  pages    = "120--157",
  month    =  jun,
  year     =  1969,
  url      = "https://doi.org/10.1007/BF01645134",
  doi      = "10.1007/BF01645134"
}

@article{Bluhm2022,
  doi = {10.22331/q-2022-02-10-650},
  url = {https://doi.org/10.22331/q-2022-02-10-650},
  title = {Exponential decay of mutual information for {G}ibbs states of local {H}amiltonians},
  author = {Bluhm, Andreas and Capel, {\'{A}}ngela and P{\'{e}}rez-Hern{\'{a}}ndez, Antonio},
  journal = {{Quantum}},
  issn = {2521-327X},
  publisher = {{Verein zur F{\"{o}}rderung des Open Access Publizierens in den Quantenwissenschaften}},
  volume = {6},
  pages = {650},
  month = feb,
  year = {2022}
}

@BOOK{Kato1995,
  title     = "Perturbation theory for linear operators",
  author    = "Kato, Tosio",
  publisher = "Springer",
  series    = "Classics in Mathematics",
  edition   =  2,
  month     =  feb,
  year      =  1995,
  address   = "Berlin, Germany",
  url       = "https://doi.org/10.1007/978-3-642-66282-9"
}

@article{Muller2015,
  title = {Thermalization and Canonical Typicality in Translation-Invariant Quantum Lattice Systems},
  volume = {340},
  ISSN = {1432-0916},
  url = {http://dx.doi.org/10.1007/s00220-015-2473-y},
  DOI = {10.1007/s00220-015-2473-y},
  number = {2},
  journal = {Comm. Math. Phys.},
  publisher = {Springer Science and Business Media LLC},
  author = {M\"{u}ller,  Markus P. and Adlam,  Emily and Masanes,  Lluís and Wiebe,  Nathan},
  year = {2015},
  month = sep,
  pages = {499–561}
}

@article{Linden2009,
  title = {Quantum mechanical evolution towards thermal equilibrium},
  author = {Linden, Noah and Popescu, Sandu and Short, Anthony J. and Winter, Andreas},
  journal = {Phys. Rev. E},
  volume = {79},
  issue = {6},
  pages = {061103},
  numpages = {12},
  year = {2009},
  month = {Jun},
  publisher = {American Physical Society},
  doi = {10.1103/PhysRevE.79.061103},
  url = {https://link.aps.org/doi/10.1103/PhysRevE.79.061103}
}

@article{Reimann2008,
    title = {Foundation of {Statistical} {Mechanics} under {Experimentally} {Realistic} {Conditions}},
    volume = {101},
    url = {https://link.aps.org/doi/10.1103/PhysRevLett.101.190403},
    doi = {10.1103/PhysRevLett.101.190403},
    number = {19},
    urldate = {2026-04-24},
    journal = {Physical Review Letters},
    author = {Reimann, Peter},
    month = nov,
    year = {2008},
    pages = {190403},
}

@misc{huang2020,
      title={Instability of localization in translation-invariant systems}, 
      author={Yichen Huang and Aram W. Harrow},
      year={2020},
      eprint={1907.13392},
      archivePrefix={arXiv},
      primaryClass={cond-mat.dis-nn},
      url={https://arxiv.org/abs/1907.13392}, 
}

@article{Biroli2010,
  title = {Effect of Rare Fluctuations on the Thermalization of Isolated Quantum Systems},
  author = {Biroli, Giulio and Kollath, Corinna and L\"auchli, Andreas M.},
  journal = {Phys. Rev. Lett.},
  volume = {105},
  issue = {25},
  pages = {250401},
  numpages = {4},
  year = {2010},
  month = {Dec},
  publisher = {American Physical Society},
  doi = {10.1103/PhysRevLett.105.250401},
  url = {https://link.aps.org/doi/10.1103/PhysRevLett.105.250401}
}

@article{Cotler2022,
  title = {Fluctuations of subsystem entropies at late times},
  author = {Cotler, Jordan and Hunter-Jones, Nicholas and Ranard, Daniel},
  journal = {Phys. Rev. A},
  volume = {105},
  issue = {2},
  pages = {022416},
  numpages = {20},
  year = {2022},
  month = {Feb},
  publisher = {American Physical Society},
  doi = {10.1103/PhysRevA.105.022416},
  url = {https://link.aps.org/doi/10.1103/PhysRevA.105.022416}
}

@article{Brown2016complexity,
  title = {Complexity, action, and black holes},
  author = {Brown, Adam R. and Roberts, Daniel A. and Susskind, Leonard and Swingle, Brian and Zhao, Ying},
  journal = {Phys. Rev. D},
  volume = {93},
  issue = {8},
  pages = {086006},
  numpages = {32},
  year = {2016},
  month = {Apr},
  publisher = {American Physical Society},
  doi = {10.1103/PhysRevD.93.086006},
  url = {https://link.aps.org/doi/10.1103/PhysRevD.93.086006}
}

@book{Jukna2012,
  author    = {Jukna, Stasys},
  title     = {Boolean Function Complexity: Advances and Frontiers},
  publisher = {Springer},
  year      = {2012},
  doi       = {10.1007/978-3-642-24508-4}
}

@article{MARGOLUS1998,
    title = {The maximum speed of dynamical evolution},
    journal = {Physica D: Nonlinear Phenomena},
    volume = {120},
    number = {1},
    pages = {188-195},
    year = {1998},
    note = {Proceedings of the Fourth Workshop on Physics and Consumption},
    issn = {0167-2789},
    doi = {https://doi.org/10.1016/S0167-2789(98)00054-2},
    url = {https://www.sciencedirect.com/science/article/pii/S0167278998000542},
    author = {Norman Margolus and Lev B. Levitin},
}

@article{Riddell2023,
   title={Concentration of quantum equilibration and an estimate of the recurrence time},
   volume={15},
   ISSN={2542-4653},
   url={http://dx.doi.org/10.21468/SciPostPhys.15.4.165},
   DOI={10.21468/scipostphys.15.4.165},
   number={4},
   journal={SciPost Physics},
   publisher={Stichting SciPost},
   author={Riddell, Jonathon and Pagliaroli, Nathan J. and Alhambra, \'Alvaro M.},
   year={2023},
   month=Oct }

@article{Kaneko2020,
  title = {Characterizing complexity of many-body quantum dynamics by higher-order eigenstate thermalization},
  author = {Kaneko, Kazuya and Iyoda, Eiki and Sagawa, Takahiro},
  journal = {Phys. Rev. A},
  volume = {101},
  issue = {4},
  pages = {042126},
  numpages = {23},
  year = {2020},
  month = {Apr},
  publisher = {American Physical Society},
  doi = {10.1103/PhysRevA.101.042126},
  url = {https://link.aps.org/doi/10.1103/PhysRevA.101.042126}
}

@misc{Mori2016,
      title={Weak eigenstate thermalization with large deviation bound}, 
      author={Takashi Mori},
      year={2016},
      eprint={1609.09776},
      archivePrefix={arXiv},
      primaryClass={cond-mat.stat-mech},
      url={https://arxiv.org/abs/1609.09776}, 
}

@misc{cui2025hamiltonian,
      title={Random unitaries from {H}amiltonian dynamics}, 
      author={Laura Cui and Thomas Schuster and Liang Mao and Hsin-Yuan Huang and Fernando Brandao},
      year={2025},
      eprint={2510.08434},
      archivePrefix={arXiv},
      primaryClass={quant-ph},
      url={https://arxiv.org/abs/2510.08434}, 
}

@misc{raza2024,
      title={Online learning of quantum processes}, 
      author={Asad Raza and Matthias C. Caro and Jens Eisert and Sumeet Khatri},
      year={2024},
      eprint={2406.04250},
      archivePrefix={arXiv},
      primaryClass={quant-ph},
      url={https://arxiv.org/abs/2406.04250}, 
}

@article{Balasubramanian2024,
  title = {Glassy Word Problems: Ultraslow Relaxation, {H}ilbert Space Jamming, and Computational Complexity},
  author = {Balasubramanian, Shankar and Gopalakrishnan, Sarang and Khudorozhkov, Alexey and Lake, Ethan},
  journal = {Phys. Rev. X},
  volume = {14},
  issue = {2},
  pages = {021034},
  numpages = {47},
  year = {2024},
  month = {May},
  publisher = {American Physical Society},
  doi = {10.1103/PhysRevX.14.021034},
  url = {https://link.aps.org/doi/10.1103/PhysRevX.14.021034}
}

@ARTICLE{Shende2006,
  author={Shende, V.V. and Bullock, S.S. and Markov, I.L.},
  journal={IEEE Transactions on Computer-Aided Design of Integrated Circuits and Systems}, 
  title={Synthesis of quantum-logic circuits}, 
  year={2006},
  volume={25},
  number={6},
  pages={1000-1010},
  doi = {10.1109/TCAD.2005.855930}
}

@article{Roberts_2017,
   title={Chaos and complexity by design},
   volume={2017},
   ISSN={1029-8479},
   doi={10.1007/JHEP04(2017)121},
   url={http://dx.doi.org/10.1007/JHEP04(2017)121},
   number={4},
   journal={Journal of High Energy Physics},
   publisher={Springer Science and Business Media LLC},
   author={Roberts, Daniel A. and Yoshida, Beni},
   year={2017},
   pages={-},
   month=Apr 
}

@article{Pilatowsky_Cameo_2023,
   title={Complete {H}ilbert-Space Ergodicity in Quantum Dynamics of Generalized {F}ibonacci Drives},
   volume={131},
   ISSN={1079-7114},
   doi = {10.1103/PhysRevLett.131.250401},
   url={http://dx.doi.org/10.1103/physrevlett.131.250401},
   issue={25},
   journal={Physical Review Letters},
   publisher={American Physical Society},
   author={Pilatowsky-Cameo, Saúl and Dag, Ceren B. and Ho, Wen Wei and Choi, Soonwon},
   year={2023},
   pages = {250401},
   numpages = {7},
   month=Dec 
}

@article{Childs2019,
  title = {Nearly Optimal Lattice Simulation by Product Formulas},
  author = {Childs, Andrew M. and Su, Yuan},
  journal = {Phys. Rev. Lett.},
  volume = {123},
  issue = {5},
  pages = {050503},
  numpages = {6},
  year = {2019},
  month = {Aug},
  publisher = {American Physical Society},
  url = {https://link.aps.org/doi/10.1103/PhysRevLett.123.050503},
  doi = {10.1103/PhysRevLett.123.050503}
}

@article{Schon2005,
  title = {Sequential Generation of Entangled Multiqubit States},
  author = {Sch\"on, C. and Solano, E. and Verstraete, F. and Cirac, J. I. and Wolf, M. M.},
  journal = {Phys. Rev. Lett.},
  volume = {95},
  issue = {11},
  pages = {110503},
  numpages = {4},
  year = {2005},
  month = {Sep},
  publisher = {American Physical Society},
  url = {https://link.aps.org/doi/10.1103/PhysRevLett.95.110503},
  doi = {10.1103/PhysRevLett.95.110503}
}

@article{perezgarcia2007,
    author = {Perez-Garcia, D. and Verstraete, F. and Wolf, M. M. and Cirac, J. I.},
    title = {Matrix product state representations},
    year = {2007},
    issue_date = {July 2007},
    publisher = {Rinton Press, Incorporated},
    address = {Paramus, NJ},
    volume = {7},
    number = {5},
    issn = {1533-7146},
    journal = {Quantum Info. Comput.},
    month = jul,
    pages = {401–430},
    numpages = {30},
    url = {https://dl.acm.org/doi/10.5555/2011832.2011833}
}

@article{Iten2016,
  title = {Quantum circuits for isometries},
  author = {Iten, Raban and Colbeck, Roger and Kukuljan, Ivan and Home, Jonathan and Christandl, Matthias},
  journal = {Phys. Rev. A},
  volume = {93},
  issue = {3},
  pages = {032318},
  numpages = {19},
  year = {2016},
  month = {Mar},
  publisher = {American Physical Society},
  url = {https://link.aps.org/doi/10.1103/PhysRevA.93.032318},
  doi = {10.1103/PhysRevA.93.032318}
}

@article{Anandan1990,
  title = {Geometry of quantum evolution},
  author = {Anandan, J. and Aharonov, Y.},
  journal = {Phys. Rev. Lett.},
  volume = {65},
  issue = {14},
  pages = {1697--1700},
  numpages = {0},
  year = {1990},
  month = {Oct},
  publisher = {American Physical Society},
  doi = {10.1103/PhysRevLett.65.1697},
  url = {https://link.aps.org/doi/10.1103/PhysRevLett.65.1697}
}

@article{Taddei2013,
  title = {Quantum Speed Limit for Physical Processes},
  author = {Taddei, M. M. and Escher, B. M. and Davidovich, L. and de Matos Filho, R. L.},
  journal = {Phys. Rev. Lett.},
  volume = {110},
  issue = {5},
  pages = {050402},
  numpages = {5},
  year = {2013},
  month = {Jan},
  publisher = {American Physical Society},
  doi = {10.1103/PhysRevLett.110.050402},
  url = {https://link.aps.org/doi/10.1103/PhysRevLett.110.050402}
}

@misc{mcginley2025,
      title={The Scrooge ensemble in many-body quantum systems}, 
      author={Max McGinley and Thomas Schuster},
      year={2025},
      eprint={2511.17172},
      archivePrefix={arXiv},
      primaryClass={quant-ph},
      url={https://arxiv.org/abs/2511.17172}, 
}

@article{Kotowski2026,
  title = {Tight Bounds on Recurrence Time in Closed Quantum Systems},
  author = {Kotowski, Marcin and Oszmaniec, Micha\l{}},
  journal = {Phys. Rev. Lett.},
  volume = {137},
  issue = {6},
  pages = {060402},
  numpages = {7},
  year = {2026},
  month = {Aug},
  publisher = {American Physical Society},
  doi = {10.1103/crwj-qfw5},
  url = {https://link.aps.org/doi/10.1103/crwj-qfw5}
}

@article{Balasubramanian2022,
  title = {Quantum chaos and the complexity of spread of states},
  author = {Balasubramanian, Vijay and Caputa, Pawel and Magan, Javier M. and Wu, Qingyue},
  journal = {Phys. Rev. D},
  volume = {106},
  issue = {4},
  pages = {046007},
  numpages = {28},
  year = {2022},
  month = {Aug},
  publisher = {American Physical Society},
  doi = {10.1103/PhysRevD.106.046007},
  url = {https://link.aps.org/doi/10.1103/PhysRevD.106.046007}
}

@misc{lucas2026,
      title={Non-perturbative saturation of Krylov complexity, and its implications in quantum gravity}, 
      author={Andrew Lucas and Amit Vikram},
      year={2026},
      eprint={2607.14220},
      archivePrefix={arXiv},
      primaryClass={hep-th},
      url={https://arxiv.org/abs/2607.14220}, 
}

\clearpage

\noindent{\large{\bf Methods}}\\
\noindent{\bf Provable small subsystem complexity}\\

An initial state $\psi$ thermalizes on a region $A$ if two conditions hold: \textit{equilibration} and \textit{temporal-thermal equivalence}. Equilibration means that with high probability, the reduced state $\psi_A(t)$ at late times approximates  $\bar \rho_A$ with a small bounded error, where $\bar \rho$ is the time average of $\psi(t)$. Equilibration provably follows from the non-degenerate gap condition and energy dispersion~\cite{Reimann2008,Linden2009}. Second, we say $\psi$ satisfies the $\varepsilon$-temporal-thermal equivalence on $A$ if $\bar \rho_A$ approximates the Gibbs state in $A$ to $\varepsilon$ trace distance. 

\begin{theorem}[Small subsystem complexity from energy dispersion and temporal-thermal equivalence]
    \label{thm:low-c-from-tte-ed}
    Let $H$ be a local Hamiltonian with non-degenerate spectral gaps, and $\psi$ be energy dispersed  with constant $\gamma$. For any positive constants $\alpha<\varepsilon<1$, if $\psi$ satisfies the $(\varepsilon/3)$-temporal-thermal equivalence at sufficiently high temperature on a contiguous region $A$ with $|A|/n\leq \frac{1}{2}(\gamma-\alpha)$, then, except for an $O(2^{-\alpha n})$ fraction of time $t\geq 0$,
        \begin{equation*}    \mathcal{C}_\varepsilon(\psi_A(t))\leq  \exp(\mathrm{polylog}(|A|)).
    \end{equation*}
\end{theorem}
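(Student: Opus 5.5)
The plan is to split the error budget into three pieces of size $\varepsilon/3$: equilibration ($\psi_A(t)\approx\bar\rho_A$), temporal-thermal equivalence ($\bar\rho_A\approx g_{\beta,A}$, with $g_{\beta,A}=\tr_{A^c} g_\beta$), and efficient approximate preparation of the reduced Gibbs state $g_{\beta,A}$ by a circuit of size $\exp(\mathrm{polylog}(|A|))$. By the triangle inequality, any circuit preparing $g_{\beta,A}$ to within $\varepsilon/3$ then prepares $\psi_A(t)$ to within $\varepsilon$ at every time $t$ where the equilibration error is at most $\varepsilon/3$. The theorem follows once we show that the set of bad times has fraction $O(2^{-\alpha n})$.

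\textbf{Step 1 (equilibration with exponentially small failure probability).} I would invoke the Linden--Popescu--Short--Winter bound, which uses the non-degenerate gap condition:
\begin{equation*}
\lim_{T\to\infty}\frac1T\int_0^T \norm{\psi_A(t)-\bar\rho_A}_1^2\,\dd t\;\leq\; d_A^2\sum_\mu p_\mu^2 \;\leq\; 2^{2|A|-\gamma n},
\end{equation*}
where $d_A=2^{|A|}$ and the second inequality is energy dispersion. (One may use the Frobenius-norm version, losing a factor $d_A$; either way the exponent is of the form $c|A|-\gamma n$.) Markov's inequality then bounds the fraction of times with $\norm{\psi_A(t)-\bar\rho_A}_1>\varepsilon/3$ by $(9/\varepsilon^2)\,2^{2|A|-\gamma n}$. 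The hypothesis $|A|\leq\frac12(\gamma-\alpha)n$ gives $2|A|-\gamma n\leq -\alpha n$, so this fraction is $O(2^{-\alpha n})$ for constant $\varepsilon$. This accounts for the factor $\tfrac12$ in the hypothesis.

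\textbf{Step 2 (temporal-thermal equivalence).} This step is immediate from the assumption: $\norm{\bar\rho_A-g_{\beta,A}}_1\leq\varepsilon/3$.

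\textbf{Step 3 (preparing $g_{\beta,A}$).} This is the main obstacle. The state $g_{\beta,A}$ is a marginal of a global Gibbs state, not itself a Gibbs state of a local Hamiltonian on $A$. At sufficiently high temperature, however, it is approximately a Gibbs state of an effective quasi-local Hamiltonian on $A$. Following the local-Markov results of Chen--Rouz\'e, which underlie Theorem~\ref{thm:low-c-from-therm}, I would proceed in three parts:
\begin{enumerate}
\item[(i)] Enlarge $A$ by a buffer of width $\ell=\mathrm{polylog}(|A|/\varepsilon)$, and use decay of correlations / local Markovianity at high temperature to show that $g_{\beta,A}$ is $\varepsilon/6$-close to the marginal on $A$ of the Gibbs state of $H$ truncated to $A\cup\partial_\ell A$.
\item[(ii)] Prepare that truncated Gibbs state on $O(|A|\,\ell^{D})$ qubits with a known high-temperature Gibbs sampler. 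Options are the Brand\~ao--Kastoryano construction or the rapidly mixing quasi-local Lindbladian of Chen et al. Simulating the Lindbladian for time $\mathrm{polylog}$ with quasi-local jumps of $\mathrm{polylog}$ support yields a circuit of size $\exp(\mathrm{polylog}(|A|))$; the exponential comes from compiling the polylog-size quasi-local jump operators into two-qubit gates.
\item[(iii)] Trace out the buffer, which is allowed since $\mathcal{C}_\varepsilon$ permits ancillas and channels.
\end{enumerate}
If a direct citation to a result stated at the level of reduced states exists, I would simply invoke it with error $\varepsilon/3$. Otherwise, the delicate part is controlling the boundary effect in (i) uniformly at error $\varepsilon/6$, which is where the ``sufficiently high temperature'' hypothesis is used.

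Combining the three steps with the triangle inequality completes the argument. The condition $\alpha<\varepsilon$ is used only to ensure that the constants from Step 1 can be absorbed.
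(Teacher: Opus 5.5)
Your proposal is correct and follows the paper's proof essentially step for step. It uses the same three-way $\varepsilon/3$ split, the same relaxation bound $9\varepsilon^{-2}2^{-\gamma n+2|A|}$ (which is exactly where the factor $\tfrac12$ in the hypothesis on $|A|$ comes from), and the same buffer-region construction: local indistinguishability under uniform clustering, Chen--Rouz\'e's Gibbs sampler on the enlarged region (the paper takes a buffer of width $O(\log(|A|/\varepsilon))$), and then tracing out the buffer.

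One small correction: the hypothesis $\alpha<\varepsilon$ is not needed to absorb constants, since $9/\varepsilon^2$ is already $O(1)$ for constant $\varepsilon$. The paper's full version of this theorem in the Supplementary Material does not use that condition at all.
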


Theorem~\ref{thm:low-c-from-tte-ed} assumes the temporal-thermal equivalence, which is generically expected, and can be proven in various contexts. In particular, it holds under a certain weak eigenstate thermalization hypothesis, which  can be proven in translation-invariant systems~\cite{Mori2016} for initial states drawn from an ensemble of random product states whose average is the Gibbs state~\cite{huang2020,Pilatowsky-Cameo2025}, such as $\mathcal{E}_0$ or the ensemble of stabilizer product states produced by the algorithm of Ref.~\cite{bakshi2025}. Specifically, we show that the equivalence is valid for $|A|\leq O(\log n)$ in translation-invariant systems for $\mathcal{E}_0$. See the SM~\cite{Supple} for details.

\vspace{1em}
\noindent{\bf Double-exponential temporal delocalization}\\

Here, we provide the proof of the following lemma, used in the proof of Theorem~\ref{thm:theorem1}.

\begin{lemma}[Double-exponential temporal delocalization]
    \label{thm:lemma1}
    For any Hamiltonian having spectral ergodicity, any initial state $\psi$ which is energy dispersed with $\gamma>0$, and any state $\sigma$ supported in $A$, except for an $\exp(-\exp(\alpha n))$ fraction of time with $\alpha\leq\gamma/8$,
    \begin{equation}
        D(\psi_A(t),\sigma)>\varepsilon
    \end{equation}
    for any subsystem $A$ with $|A|>(1-\gamma/8)n$ and $\varepsilon<1-\exp(-\Theta(n))$, where $D(\cdot,\cdot)$ denotes the trace distance. 
\end{lemma}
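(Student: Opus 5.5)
The plan is to reduce closeness in trace distance to a large value of a single time-dependent expectation value, and then control the time-distribution of that expectation value through its high moments, using spectral ergodicity (Definition~\ref{def:spec-erg}) for all orders $k$ up to exponentially large values.

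\textbf{Step 1 (from trace distance to an overlap).} Write $1-\varepsilon=\delta\ge e^{-cn}$ for a small constant $c$, and $d_{A^c}=2^{|A^c|}\le 2^{\gamma n/8}$. If $D(\psi_A(t),\sigma)\le\varepsilon$, the Fuchs--van de Graaf inequality gives $\sqrt{F(\psi_A(t),\sigma)}\ge\delta$. Next I would use $F(\rho,\sigma)=\|\sqrt\rho\sqrt\sigma\|_1^2\le \mathrm{rank}(\rho)\,\|\sqrt\rho\sqrt\sigma\|_2^2=\mathrm{rank}(\rho)\tr(\rho\sigma)$. Since $\psi(t)$ is pure, $\mathrm{rank}\,\psi_A(t)\le d_{A^c}$. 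Hence
\begin{equation*}
g(t)\coloneqq\bra{\psi(t)}Q\ket{\psi(t)}\ge y\coloneqq \delta^2/d_{A^c},\qquad Q\coloneqq\sigma\otimes I_{A^c}.
\end{equation*}
The operator $Q$ satisfies $0\le Q\le I$ and $\|Q\|_2^2=d_{A^c}\tr\sigma^2\le d_{A^c}$. It therefore suffices to show that $g(t)\ge y$ only on a double-exponentially small fraction of time.

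\textbf{Step 2 (moments via spectral ergodicity).} Expand $g(t)=\sum_{\mu\nu}\bar c_\mu c_\nu e^{i(E_\mu-E_\nu)t}Q_{\mu\nu}$, where $c_\mu=\braket{\mu}{\psi}$. The infinite-time average of $g^k$ keeps only the terms with $\sum_i E_{\mu_i}=\sum_i E_{\nu_i}$. By spectral ergodicity these are exactly the terms in which $(\nu_i)$ is a permutation of $(\mu_i)$. Taking a union over permutations $\pi\in S_k$ (an overcount, harmless if each term is nonnegative) gives
\begin{equation*}
\overline{g^k}\le\sum_{\pi\in S_k}\prod_{\text{cycles of }\pi}\tr\big((PQ)^{m}\big),
\end{equation*}
where $P=\mathrm{diag}(p_\mu)$ is the dephased state and $m$ runs over cycle lengths. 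Each factor is $\ge0$ because $P,Q\ge0$.

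I then bound each cycle factor. For a fixed point ($m=1$), $\tr(PQ)\le\|P\|_2\|Q\|_2\le 2^{-\gamma n/2}d_{A^c}^{1/2}$. For $m\ge2$, $\tr((PQ)^m)\le\|PQ\|_2^m\le\|P\|_2^m\le 2^{-\gamma nm/2}$, using $\|Q\|_\infty\le1$ and energy dispersion $\sum_\mu p_\mu^2\le 2^{-\gamma n}$. Altogether $\overline{g^k}\le k!\,x^k$ with $x=2^{-\gamma n/2}d_{A^c}^{1/2}\le 2^{-7\gamma n/16}$.

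The delicate point is making the permutation bookkeeping rigorous. When indices repeat, several permutations give the same matching, and one must check that the sum over $\pi$ genuinely dominates the matched terms. I would handle this by writing the average as $\tr\big(\Pi_{\mathrm{match}}\,(\cdot)\big)$ and comparing it with the sum over $\pi$ of permutation operators acting on $(PQ)^{\otimes k}$-type expressions, relying on nonnegativity of every term.

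\textbf{Step 3 (Markov inequality with exponential $k$).} Markov's inequality gives
\begin{equation*}
\Pr_t[g\ge y]\le k!\,(x/y)^k\le (kx/y)^k.
\end{equation*}
Here $x/y\le 2^{-7\gamma n/16+\gamma n/8}\delta^{-2}\le 2^{-\gamma n/4}$, provided $c$ is small enough, which is the content of the hypothesis $\varepsilon<1-\exp(-\Theta(n))$. Choose $k=2^{\gamma n/8}$; this is legitimate because spectral ergodicity holds for every $k$. Then $kx/y\le 2^{-\gamma n/8}$, and the bound becomes $2^{-(\gamma n/8)2^{\gamma n/8}}\le\exp(-\exp(\alpha n))$ for $\alpha\le\gamma/8$ at large $n$. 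Since the bad event of Step 1 is contained in $\{g\ge y\}$, this proves the lemma.

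I expect the main obstacle to be Step 2. The exponentially large moment order has to be taken without losing more than a factor $k!$, and the cycle structure must be bounded by $\|P\|_2$ rather than by anything depending on the full Hilbert-space dimension.
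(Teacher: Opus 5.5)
Your proof is the paper's argument. It has the same chain: Fuchs--van de Graaf, then $F(\rho,\sigma)\le\mathrm{rank}(\rho)\tr(\rho\sigma)$ with $\mathrm{rank}\,\psi_A(t)\le 2^{n-|A|}$, then Markov's inequality on the $k$-th temporal moment, then the spectral-ergodicity bound $\overline{g^k}\le\sum_{\pi\in S_k}\tr\bigl(\rho_d^{\otimes k}Q^{\otimes k}P_\pi\bigr)$ evaluated cycle by cycle, and finally an exponentially large $k$.

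There is one slip at the very end of Step 3. With $k=2^{\gamma n/8}$ your bound is $2^{-(\gamma n/8)2^{\gamma n/8}}=\exp\bigl(-\ln 2\cdot\tfrac{\gamma n}{8}\,e^{(\ln 2)\gamma n/8}\bigr)$. For large $n$ this is \emph{larger} than $\exp(-e^{\gamma n/8})$: $e^{(\ln 2)\gamma n/8}$ is exponentially smaller than $e^{\gamma n/8}$, and the linear prefactor cannot compensate. As written you therefore only get $\alpha<(\ln 2)\gamma/8$, not $\alpha\le\gamma/8$. The fix is to take $k=\lceil e^{\gamma n/8}\rceil$, as in the paper's main-text sketch. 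Because $e^{\gamma n/8}=2^{\gamma n/(8\ln 2)}$ and $1/(8\ln 2)<1/4$, your estimate $x/y\le 2^{-\gamma n/4}$ still gives $kx/y\le 2^{-c'\gamma n}$ for some constant $c'>0$. Hence $k!(x/y)^k\le (kx/y)^k\le\exp(-e^{\gamma n/8})$ once $n$ exceeds a constant multiple of $1/\gamma$.

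On Step 2, the overcount is not justified by nonnegativity of the individual permutation terms or of the cycle traces. It is justified by nonnegativity of each orbit sum:
\begin{itemize}
\item $\sum_{\pi}\bra{\bm\mu}Q^{\otimes k}P_\pi\ket{\bm\mu}=\frac{k!}{|[\bm\mu]|}\sum_{\bm\nu\in[\bm\mu]}\bra{\bm\mu}Q^{\otimes k}\ket{\bm\nu}$;
\item $\sum_{\bm\nu\in[\bm\mu]}\bra{\bm\mu}Q^{\otimes k}\ket{\bm\nu}=|[\bm\mu]|^{-1}\bra{s_{\bm\mu}}Q^{\otimes k}\ket{s_{\bm\mu}}\ge0$, where $\ket{s_{\bm\mu}}=\sum_{\bm\nu\in[\bm\mu]}\ket{\bm\nu}$.
\end{itemize}
The paper packages exactly this as $\mathbb{E}_t[\psi(t)^{\otimes k}]\preceq\rho_d^{\otimes k}\sum_\pi P_\pi$, via orbit--stabilizer and positivity of the orbit blocks.

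Your cycle estimates use $\|\rho_d\|_2\|Q\|_2$ for fixed points and $\|\rho_d\|_2^m$ for longer cycles, with $\rho_d$ your $P$. Per cycle these are sharper than the paper's uniform $p_{\max}^l\tr(Q^l)\le p_{\max}^l 2^{n-|A|}$, but you then bound every permutation by $x^k$. The paper instead sums exactly over cycle counts, $\sum_\pi 2^{(n-a)c(\pi)}=(2^{n-a}+k-1)!/(2^{n-a}-1)!$ with $a=|A|$. It then optimizes $k$ with the negative binomial series. This yields the explicit bound $\exp\bigl(-(1-\varepsilon)^2 2^{a-(1-\gamma/2)n}/2\bigr)$ and a wider admissible range of $\varepsilon$ than your $1-\varepsilon\ge 2^{-\gamma n/32}$.
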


\begin{proof} For simplicity, let us focus here on the case where $A$ is the full system $|A|=n$ (the general proof is found in the SM~\cite{Supple}).
By Markov's inequality, the probability over time $t\in [0,\infty)$ is bounded, 
\begin{equation}\label{eq:nowhere-small-distance2}        
    \operatornamewithlimits{Pr}_t\bigl[D(\psi(t),\sigma)\leq \varepsilon\bigr]\leq M_k\,{(1-\varepsilon)^{-2k}},
\end{equation}
in terms of the $k$-th temporal moment  $M_k=\lim_{T\to \infty}\frac{1}{T}\int_0^T \dd t \langle\psi(t)|\sigma|\psi(t)\rangle ^k$, so we only need to bound these moments, for which we use our two assumptions. On the one hand, spectral ergodicity implies that upon time averaging, all off-diagonal non-resonating energy terms vanish~\cite{Mark2024},  allowing us to place a bound~\cite{Riddell2023},
\begin{equation}
    M_k\leq k!(\max_\mu p_\mu)^k 
\end{equation}   
in terms of the energy distribution $p_\mu=\abs{\braket{\mu}{\psi}}^2$. On the other hand, the energy dispersion guarantees that the maximum population is exponentially small: $\max_\mu p_\mu \leq 2^{-\gamma n/2}$. Combining both inequalities, the right-hand side of Eq.~\eqref{eq:nowhere-small-distance2} is upper bounded by
\begin{equation*}
    \frac{k! 2^{-\gamma nk/2}}{(1-\varepsilon)^{2k}}\leq \left(\frac{k 2^{-\gamma n/2} }{(1-\varepsilon)^2}\right)^k \leq \exp(-\exp(\alpha n)),
\end{equation*}
where the second inequality is due to Stirling's approximation, and the last one holds by choosing $k=e^{\alpha n}$ for any $\alpha< \gamma/4$ and sufficiently large $n\geq \Omega(\log(1/(1-\varepsilon)))$.\end{proof}

\vspace{1em}
\noindent{\bf Sufficient time for exponential complexity}\\

Here, we place a finite bound on the time $t_\mathrm{typ}$ required to reach the typical circuit complexity stated in Theorem~\ref{thm:theorem1}. To this end, we first make the notion of spectral ergodicity quantitative by introducing the \textit{$k$-th spectral gap $\Delta^{(k)}$}: the minimum difference $\abs{\sum_{i=1}^k E_{\mu_i} -\sum_{i=1}^k E_{\nu_i}}$ over all $k$-tuples of indices $(\mu_1,\dots,\mu_k)$ and $(\nu_1,\dots,\nu_k)$ which are not permutations of each other. The $k$-th no-resonance condition is simply the statement that $\Delta^{(k)}>0$. We prove that $t_\mathrm{typ}$ is at most proportional to $1/\Delta^{(k)}$, for an exponentially large $k$.
\begin{theorem}[Exponential quantum circuit complexity at finite time]
    For $H$ and $\psi$ as in Theorem~\ref{thm:theorem1}, and for a certain $T=\exp(\exp(O(n)))/\Delta^{(k_c)}$ with $k_c=\exp(\Theta(n))$, except for a double-exponentially small time fraction of $[0,T]$,
        \begin{equation*}
\mathcal{C}_\varepsilon(\psi(t))\geq  \exp \Omega(n).
\end{equation*}
\end{theorem}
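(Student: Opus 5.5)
The plan is to rerun the proof of Theorem~\ref{thm:theorem2}, the covering-net union bound built on Lemma~\ref{thm:lemma1}, with the infinite-time average replaced by the uniform average over a finite window $[0,T]$. The covering-net half of the argument carries over unchanged. There is a net $\mathcal{N}$ of size $G^{O(G)}$ such that every state of robust complexity at most $G$ lies close to some $\sigma\in\mathcal{N}$. It therefore suffices to show that, for each fixed $\sigma$, the fraction of $[0,T]$ during which $D(\psi(t),\sigma)\le\varepsilon$ is at most $\exp(-\exp(\alpha n))$. The union bound then yields the claim for $G=\exp(\alpha' n)$ with $\alpha'<\alpha$. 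The only modification is in the moment bound, which must now hold for the finite-time moments
\begin{equation*}
M_k^{(T)}=\frac{1}{T}\int_0^T \dd t\,\langle\psi(t)|\sigma|\psi(t)\rangle^k .
\end{equation*}
Markov's inequality applies to these exactly as in Eq.~\eqref{eq:nowhere-small-distance2}.

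To handle $M_k^{(T)}$, I would write $\langle\psi(t)|\sigma|\psi(t)\rangle^k=\langle\psi(t)^{\otimes k}|\sigma^{\otimes k}|\psi(t)^{\otimes k}\rangle$ and expand in the energy eigenbasis of $H^{\otimes k}$. This gives a sum over pairs of $k$-tuples $(\bm\mu,\bm\nu)$ of terms $c_{\bm\mu}^*c_{\bm\nu}\langle\bm\mu|\sigma^{\otimes k}|\bm\nu\rangle e^{i(E_{\bm\mu}-E_{\bm\nu})t}$, where $c_{\bm\mu}=\prod_i\braket{\mu_i}{\psi}$ and $E_{\bm\mu}=\sum_i E_{\mu_i}$. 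There are two kinds of terms.

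The resonant terms are those with $\bm\nu$ a permutation of $\bm\mu$. They have no time dependence and reproduce exactly the infinite-time moment $M_k$. By the argument in the proof of Lemma~\ref{thm:lemma1}, $M_k\le k!(\max_\mu p_\mu)^k$.

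The non-resonant terms satisfy $|E_{\bm\mu}-E_{\bm\nu}|\ge\Delta^{(k)}$, so their time average over $[0,T]$ is at most $2/(T\Delta^{(k)})$ times the absolute coefficient. The sum of absolute coefficients is bounded by $\sum_{\bm\mu,\bm\nu}|c_{\bm\mu}||c_{\bm\nu}|\,\|\sigma^{\otimes k}\|\le(\sum_\mu|\braket{\mu}{\psi}|)^{2k}\le 2^{nk}$, using Cauchy--Schwarz and $\|\sigma\|\le 1$. Hence
\begin{equation*}
M_k^{(T)}\le k!(\max_\mu p_\mu)^k+\frac{2\cdot 2^{nk}}{T\Delta^{(k)}} .
\end{equation*}

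Next I would fix $k=k_c=e^{\alpha n}$ as in Lemma~\ref{thm:lemma1}. The first term is then at most $\exp(-\exp(\alpha n))(1-\varepsilon)^{2k}$ by the Stirling estimate there. For the second term to be comparable, I choose
\begin{equation*}
T=\frac{2^{nk_c}\,\exp(\exp(\alpha n))\,(1-\varepsilon)^{-2k_c}}{\Delta^{(k_c)}} .
\end{equation*}
With $k_c=\exp(\Theta(n))$ and $\varepsilon<1-e^{-\Theta(n)}$, the numerator is $\exp(\exp(O(n)))$, so $T$ has the stated form. Markov then bounds the bad fraction of $[0,T]$ near each $\sigma$ by $2\exp(-\exp(\alpha n))$, and the union bound over $\mathcal{N}$ completes the argument. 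Positivity $\Delta^{(k_c)}>0$ is exactly spectral ergodicity (Proposition~\ref{thm:generic-ergodic}), and energy dispersion is supplied by Proposition~\ref{thm:energy-dispersion}. The subsystem case $|A|<n$ should go through the same way using the SM version of Lemma~\ref{thm:lemma1}; only the bound on the coefficients changes, and a crude $2^{O(nk)}$ bound suffices.

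The step I expect to be the main obstacle is controlling the off-diagonal sum. The crude $2^{nk}$ bound is what forces the double-exponential prefactor in $T$, and it also has to survive the partial trace in the subsystem case. There, $\sigma\otimes I_{A^c}$ replaces $\sigma$ and the Cauchy--Schwarz estimate must be redone with operator norm $1$, which still works. A secondary concern is making sure the Stirling step holds with $k$ this large while leaving room for the net size $G^{O(G)}$; this is why I would take $\alpha'<\alpha<\gamma/4$ strictly.
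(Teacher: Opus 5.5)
Your proposal is correct and follows the paper's proof: you split the finite-time moment into the resonant part (exactly the infinite-time moment, controlled by energy dispersion) and non-resonant terms suppressed by $2/(T\Delta^{(k)})$, then rerun the covering-net union bound unchanged. The main difference is that the paper bounds the non-resonant coefficient sum by Cauchy--Schwarz in Hilbert--Schmidt norm, $(\tr\sigma^2)^{k/2}(\tr\psi^2)^{k/2}\le 1$, rather than by your $2^{nk}$, and it uses the optimized $k_*\le k_c$ together with the fact that $\Delta^{(k)}$ is nonincreasing in $k$; your extra factor is therefore harmless, but it is not what forces the double-exponential prefactor in $T$ --- that comes from having to undercut the double-exponentially small resonant bound $k!\,2^{-\gamma nk/2}$.
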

\noindent This theorem implies that typical states at times of order $\exp(\exp(O(n)))/\Delta^{(k_c)}$ have exponentially large circuit complexity. This time scale gives an upper bound on $t_\mathrm{typ}$. 

While the $k$-th spectral gap is conceptually simple, to the best of our knowledge, no rigorous lower bound on $\Delta^{(k)}$ is known for generic local extensive Hamiltonians, even in the simplest cases of $k=1$ or $k=2$. We find that $\Delta^{(k)}$ decays approximately as $\exp(-\Theta(nk))$ based on a heuristic argument (see the SM~\cite{Supple}). Under this empirical scaling, choosing $k=\exp(\Theta(n))$ gives $1/\Delta^{(k)}=\exp(\exp(\Theta(n)))$, and hence $t_\mathrm{typ}\lesssim \exp(\exp(\Omega(n)))$. This double-exponential scaling is essentially the strongest one that can be obtained from any argument based on counting~\cite{Supple}.

\end{document}


\def\bibsection{\section*{\refname}} 
\newcommand{\E}[0]{\mathop{{}\mathbb{E}}}
\newcommand{\Prob}{\mathop{\mathbb{P}}\limits}

\makeatletter
\newcommand\footnoteref[1]{\protected@xdef\@thefnmark{\ref{#1}}\@footnotemark}
\makeatother

\title{Supplementary Material: Sustained growth of quantum circuit complexity in many-body Hamiltonian dynamics}
\author{Wonjun Lee}
\email{wonjun98@mit.edu}
\affiliation{College of Natural Sciences, Korea Advanced Institute of Science and Technology, Daejeon, 34141, Republic of Korea}
\affiliation{Research Laboratory of Electronics, Massachusetts Institute of Technology, Cambridge, Massachusetts, 02139, United States}
\affiliation{MIT Center for Theoretical Physics -- a Leinweber Institute, Massachusetts Institute of Technology, Cambridge, Massachusetts, 02139, United States
}
\author{Sa\'ul Pilatowsky-Cameo}
\email{saulpila@mit.edu}
\affiliation{MIT Center for Theoretical Physics -- a Leinweber Institute, Massachusetts Institute of Technology, Cambridge, Massachusetts, 02139, United States
}
\author{Soonwon Choi}
\email{soonwon@mit.edu}
\affiliation{MIT Center for Theoretical Physics -- a Leinweber Institute, Massachusetts Institute of Technology, Cambridge, Massachusetts, 02139, United States
}

\newcommand{\maintextref}[1]{{#1}} 

\newcommand{\saul}[1]{\textcolor{blue}{[[Sa\'ul: #1]]}}
\newcommand{\WJL}[1]{{\color{purple} #1}}
\maketitle

\vspace{-1.2em}

In this Supplementary Material, we provide detailed information, complete proofs, and rigorous statements of the results stated in the main text. In Section~\ref{sec:01}, we introduce the notation, the physical setting, our notion of quantum circuit complexity, the class of Hamiltonians, the ensembles of product states, and the finite correlation length for Gibbs states. Sections~\ref{sec:02} and~\ref{sec:03} establish the two main technical properties used throughout the work: spectral ergodicity for generic local Hamiltonians and energy dispersion for typical product states. In Section~\ref{sec:04}, we prove the exponential lower bounds on late-time circuit complexity, Theorems 1 and 3 of the main text. Section~\ref{sec:05} contains the proofs of the four corollaries, namely sustained complexity growth, irremovable volume-law entanglement, generic no fast-forwarding, and the recurrence-time bound. In Section~\ref{sec:06}, we prove the low-complexity result for small subsystems and discuss examples in which small subsystem complexity follows from thermalization. Finally, in Section~\ref{sec:07}, we prove a finite-time result for reaching exponential complexity and analyze the scaling of the relevant spectral gaps.
\vspace{-1.2em}
\tableofcontents
\clearpage
\section{Preliminaries}
\label{sec:01}
\subsection{Setting and basic notations}\label{sec:setting}
\newcommand{\supp}{\mathrm{supp}}

We consider a hypercubic lattice $\Lambda$ in $D$ dimensions, with $n$ vertices. For vertices $i,j\in\Lambda$, $d(i,j)$ denotes the minimum number of edges in a path connecting them. For regions $A,B\subseteq\Lambda$, $d(A,B)$ is defined by $\min_{i\in A,j\in B}d(i,j)$. We allow for periodic boundary conditions, by identifying the edges of the lattice. For a region $A\subseteq\Lambda$, its diameter is defined by $\mathrm{diam}(A)=\max_{i,j\in A} d(i,j)$, and the ball of radius $r$ centered at $j\in\Lambda$ is $B_r(j) = \{i\in\Lambda|d(i,j)\leq r\}$.
At each vertex, we place a qubit, associated with a two-dimensional local Hilbert space $\mathbb{C}^2$. The complete Hilbert space is simply the tensor product of all the local Hilbert spaces. 

For a matrix $A$, we denote by $\|A\|_1$ its trace norm, by $\|A\|_\diamond$ its diamond norm, and by $\|A\|_\mathrm{op}$ its operator norm $\sup_{\left\|\ket{\phi}\right\|=1}\left\|A\ket{\phi}\right\|$. The trace distance between two states $\rho$ and $\sigma$ is $D(\rho,\sigma)=\frac{1}{2}\|\rho-\sigma\|_1$. The fidelity is defined as $F(\sigma,\rho)\coloneqq\mathrm{tr}[(\sqrt{\sigma}\rho \sqrt{\sigma})^{1/2}]^2$. For matrices $A$ and $B$ having the same size, $A\preceq B$ means that $B-A$ is positive semi-definite.

\subsection{Asymptotic notation}
We utilize big-$O$ and related notations. We write $f(n)\leq O(g(n))$ to signify that there exist constants $C>0$ and $n_*>0$ with $f(n)\leq C g(n)$ for all $n\geq n_*$. Similarly $f(n)\leq e^{O(g(n))}$ means $f(n)\leq e^{C g(n)}$ for all $n\geq n_*$. For lower bounds we write $f(n)\geq \Omega(g(n))$ to mean $f(n)\geq  Cg(n)$ for all $n\geq n_*$ and  $f(n)\leq e^{-\Omega(g(n))}$ to mean $f(n)\leq e^{-C g(n)}$ for all $n\geq n_*$.  To combine both upper and lower bounds, we write $f(n)=\Theta(g(n))$, which means $f(n)\leq O(g(n))$ and  $f(n)\geq \Omega(g(n))$ simultaneously. 

\subsection{Quantum circuit complexity}
The circuit complexity of a pure state is the minimum number of basic gates required to construct it, starting from a product state. In this work, we utilize a more general notion of circuit complexity, also applicable to mixed states, based instead on local quantum channels.

Our elementary channels will be two-qubit quantum channels, whose support consists of at most two qubits. 
We define quantum circuit complexity of an $n$-qubit state by asking how many two-qubit channels are required to prepare the state, starting from a product state on $m$ qubits with $m\geq n$. The first $n$ qubits are the system qubits $\mathsf{S}$, while all remaining qubits are treated as ancilla qubits $\mathsf{R}$. We emphasize that our results are valid for an arbitrary choice of the number of ancilla qubits. We denote the set of all two-qubit quantum channels on the $m$-qubit system by
\begin{equation}
    \mathcal{G}=\{\mathcal{N}\,\colon\,|\supp(\mathcal{N})|\leq 2\}.
\end{equation} 

\begin{definition}[$\varepsilon$-robust circuit complexity]\label{def:circuit-complexity} 
We say that an $n$-qubit state $\rho$ has {\it $\varepsilon$-robust relative circuit complexity} $\mathcal{C}_{\varepsilon}(\rho|\sigma)=\mathcal{C}$ with respect to an $m$-qubit reference state $\sigma$, where $m\geq n$, if it can be approximated to trace distance $\varepsilon$ by the application of $\mathcal{C}$ two-qubit channels $\mathcal{N}_i\in\mathcal{G}$ starting from $\sigma$,
\begin{equation}
   \frac{1}{2}\Big\|\rho-\tr_\mathsf{R}(\mathcal{N}_{\mathcal{C}}\circ\mathcal{N}_{\mathcal{C}-1}\circ\cdots \circ\mathcal{N}_1[\sigma])\Big\|_1\leq \varepsilon,
\end{equation}
and $\mathcal{C}$ is the smallest number for which this decomposition holds. For $\varepsilon=0$, we say $\rho$ has {\it exact relative circuit complexity} $\mathcal{C}_0(\rho|\sigma)$ with respect to $\sigma$. We further define {\it $\varepsilon$-robust circuit complexity} $\mathcal{C}_{\varepsilon}(\rho)$ as the minimization of $\mathcal{C}_\varepsilon(\rho|\sigma)$ over the set $\mathcal{M}_m$ of all $m$-qubit product states $\sigma$ and all integers $m\geq n$,
\begin{equation}
    \mathcal{C}_{\varepsilon}(\rho) = \min_{m\geq n}\min_{\sigma\in\mathcal{M}_m}\mathcal{C}_{\varepsilon}(\rho|\sigma).
\end{equation}
\end{definition}
We make a few remarks about the ancilla qubits. First, a circuit consisting of $G$ two-qubit channels can act nontrivially on at most $2G$ qubits, and any ancilla qubits initialized in a product state and left untouched by the circuit have no impact on the system after being traced out. Therefore, it is sufficient to consider at most $2G$ ancilla qubits. Second, we may assume without loss of generality that the ancilla qubits are initialized in a fixed product state. This assumption does not affect either the set of states implementable using $G$ two-qubit channels or the circuit complexities in \Cref{def:circuit-complexity}, since arbitrary two-qubit channels already include the freedom to apply arbitrary single-qubit channels to any ancilla qubits that participate in the circuit. 

The space of states with complexity at most $G$ is highly intricate. Nevertheless, it can be made tractable by discretizing it via a covering net. We prove the following bound on the size of a covering net for the space of channels formed by $G$ two-qubit channels. The result below improves the $(\varepsilon/G)$-dependence of the covering number appearing in Lemma 18 of Ref.~\cite{raza2024}.
\begin{prop}[Lemma 18 of Ref.~\cite{raza2024}]\label{thm:two-qubit-covering-size}
    For any integer $G>0$ and any $\varepsilon>0$, consider the space of channels that can be constructed by concatenating $G$ two-qubit channels $\mathcal{G}_G=\big\{\mathcal{N}_G\circ \mathcal{N}_{G-1}\circ \cdots \circ \mathcal{N}_1\,\colon\, \mathcal{N}_i\in\mathcal{G}, i\in \{1,\dots,G\}\big\}$ acting on $m$ qubits with $m=n+2G$.
    There exists an $\varepsilon$-covering net $\{\mathcal{S}_1,\dots,\mathcal{S}_s\}\subseteq\mathcal{G}_G$ whose size $s$ is upper bounded by
    \begin{equation}
        s \leq ((n+2G)^2(1+512G/\varepsilon)^{256})^G.
    \end{equation}
    Specifically, this means that for any $\mathcal{N}\in \mathcal{G}_G$, we have $\norm{\mathcal{N}-\mathcal{S}_i}_{\diamond}\leq \varepsilon$ for some $i\in\{1,\dots,s\}$.
\end{prop}
\begin{proof}
Note that it is enough to construct an $(\varepsilon/G)$-covering net of the space $\mathcal{G}$ with size $ s \leq m^2 (1+512G/\varepsilon)^{256}$, by the subadditivity of the diamond norm under channel composition~\cite{haah2025}. To that effect, we first construct a covering net of the set of Choi states of the channels in the two-copy space $\mathcal{H}_A\otimes\mathcal{H}_B$. We then use H\"older's inequality to upper bound the diamond norm of a two-qubit channel $\Phi$ by the trace norm of its Choi state $J(\Phi)=\sum_{ij}\ketbra{i}{j}_A\otimes\Phi(\ketbra{i}{j}_B)$ as
\begin{equation}
    \begin{split}
        \|\Phi\|_\diamond 
        &= \max_{\substack{-\sigma\otimes I_B\preceq Y \preceq \sigma\otimes I_B\\ \tr(\sigma)=1}}\tr(Y J(\Phi))\\
        &\leq \max_{\substack{-\sigma\otimes I_B\preceq Y \preceq \sigma\otimes I_B\\ \tr(\sigma)=1}}\|Y\|_\infty \|J(\Phi)\|_1\\
        &\leq \|J(\Phi)\|_1.
    \end{split}
\end{equation}

The Choi state $J(\Phi)$ can be decomposed by Pauli strings $\{\hat{P}_a^{(A)}\otimes \hat{P}_b^{(B)}\}_{a,b}$ acting on $\mathcal{H}_A\otimes\mathcal{H}_B$ that are orthonormal under the Hilbert-Schmidt inner product as
\begin{equation}
    J(\Phi) = \sum_{a,b\in\{I,X,Y,Z\}^{\otimes 2}} c_{a,b} \hat{P}_a^{(A)}\otimes \hat{P}_b^{(B)}.
\end{equation}
Since $\Phi$ is a positive map, we have $J(\Phi)\succeq 0$. In addition, since $\Phi$ is trace-preserving, it follows that $\tr_B(J(\Phi))=I_A$. From these, we can compute $\|J(\Phi)\|_1$ as
\begin{equation}
    \|J(\Phi)\|_1 = \tr(J(\Phi)) = \tr(I_A) = 4.
\end{equation}
This can be used to upper bound the Hilbert-Schmidt norm as 
\begin{equation}
    \|J(\Phi)\|_2^2=\sum_{a,b\in\{I,X,Y,Z\}^{\otimes 2}} c_{a,b}^2 \leq \|J(\Phi)\|_1^2 \leq 4^2.
\end{equation}
Therefore, each coefficient $c_{a,b}$ should be in $[-4,4]$. 

Now, let us construct a hypercubic lattice $\Lambda_l$ with a spacing $l>0$ in the space of Hermitian matrices acting on $\mathcal{H}_A\otimes\mathcal{H}_B$,
\begin{equation}
    \Lambda_l = \left\{\sum_{a,b\in\{I,X,Y,Z\}^{\otimes 2}}(z_{a,b}l) \hat{P}_a^{(A)}\otimes \hat{P}_b^{(B)}\,\Big| \,z_{a,b}\in\mathbb{Z},z_{a,b}l\in[-4,4]\right\}.
\end{equation}
Then, for any two-qubit channel $\Phi$, there exists $J_\Lambda\in\Lambda_l$ such that
\begin{equation}
    \|J(\Phi)-J_\Lambda\|_1 \leq 4\|J(\Phi)-J_\Lambda\|_2 \leq 4\times\frac{l}{2}\times 4^2.
\end{equation}
From this, we can construct the covering net $S_\lambda$ of two-qubit channels as follows:
\begin{equation}
    S_\lambda = \left\{\Pi_\mathcal{G}(\tilde{J})\,\big|\,\tilde{J}\in\Lambda_l\right\}.
\end{equation}
Here, $\Pi_\mathcal{G}$ is defined as
\begin{equation}
    \Pi_\mathcal{G}(\tilde{J})=\operatornamewithlimits{argmin}_{\tilde{\Phi}\in\mathcal{G}}\|\tilde{J}-J(\tilde{\Phi})\|_2.
\end{equation}
Then, the distance between $\Phi$ and $\Phi_\Lambda=\Pi_\mathcal{G}(J_\Lambda)$ is upper bounded by
\begin{equation}
    \begin{split}
        \|\Phi-\Phi_\Lambda\|_\diamond
        &\leq\|J(\Phi)-J(\Phi_\Lambda)\|_1 \\
        &\leq 4\|J(\Phi)-J(\Phi_\Lambda)\|_2 \\
        &\leq 4\|J(\Phi)-J_\Lambda\|_2 + 4\|J_\Lambda-J(\Phi_\Lambda)\|_2\\
        &\leq 8\|J(\Phi)-J_\Lambda\|_2\\
        &\leq 8^2 l.
    \end{split}
\end{equation}
Therefore, by choosing $l$ as $\lambda/8^2$, $S_\lambda$ indeed becomes a $\lambda$-covering net of $\mathcal{G}$.

Finally, let us compute the size of the covering. Since $S_\lambda$ is induced from $\Lambda_l$, its size is equivalent to that of $\Lambda_l$, which is bounded by
\begin{equation}
    |\Lambda_l|\leq (1+8/l)^{4^4} = (1+512/\lambda)^{256}.
\end{equation}
By setting $\lambda=\varepsilon/G$, we get
\begin{equation}
    |S_{\varepsilon/G}|\leq (1+512 G/\varepsilon)^{256}.
\end{equation}

For an $m$-qubit system, there are $\binom{m}{2}\leq m^2$ pairs of qubits. Multiplying the right-hand side by this gives the stated upper bound on the covering size.
\end{proof}

\subsection{Hamiltonian}
We consider a Hamiltonian on a hypercubic periodic lattice of arbitrary dimension. For simplicity, let us expand it in the Pauli basis
\begin{equation}
\label{eq:paliexpansion}
    H=\sum_P c_P P.
\end{equation}
 We allow for periodic boundary conditions where the boundary qubits are identified. For simplicity, we assume $c_I=0$ so that $\tr(H)=0$, and additionally we will demand three physical properties of $H$ \cite{huang2024randomproduct}:

\begin{enumerate}
    \item[(P1)] {\it Locality}. There exists a constant $r$ (called the range of $H$), such that $c_P=0$ if the diameter of $\mathrm{supp}(P)$ is larger than $r$.
    \item[(P2)] {\it Extensivity}. There exists a constant $h_{\min}>0$ such that for any vertex $v$, a ball $B_r(v)$ of radius $r\geq 1$ centered at $v$ contains a term $|c_P| \geq h_{\min}$ which acts within the region $\mathrm{supp}(P)\subseteq B_r(v)$. 
    \item[(P3)] {\it Boundedness}. For all Pauli terms $P$ of $H$, $|c_P|\leq 1$.
\end{enumerate}

\subsection{Ensembles of product states}
\label{sec:Ensemblesofproductstates}
We are interested in the dynamics of generic, physical states, with initially low complexity, and ask how much complexity increases solely due to the unitary evolution generated by $H$. The most interesting case, to which we restrict our attention, is to study initial product states, which have trivial complexity. We define the ensemble of all product states, where each qubit is Haar random, as follows:
\begin{equation}\label{eq:rand-prod-state}
  \textit{{(Random product states ensemble)}}\qquad\qquad \mathcal{E}_0=\Big\{\bigotimes_{i=1}^n \ket{\psi_i} \, \Big | \ket{\psi_i}\sim \mathrm{Haar}(\mathbb C^2)\Big\}.
\end{equation}
Typically, a state $\ket{\psi}\sim \mathcal{E}_0$ will have an infinite effective temperature under $H$, but oftentimes the physical systems of interest have a finite effective temperature. To model such a situation we consider the ensemble of random product states whose energy falls inside a defined energy window. Specifically, for any inverse temperature $\beta$, we consider
\begin{equation}\label{eq:mc-prod-state}
  \textit{{(Product-state microcanonical ensemble)}}\qquad \mathcal{E}_{\beta,\delta}=\Big\{\ket{\psi}\sim\mathcal{E}_0 \, \Big |\, |\expval{H}{\psi}-E(\beta)|<\delta\sqrt{n}\Big\},
\end{equation}
where  $E(\beta)=\tr(\exp(-\beta H) H )/\tr(\exp(-\beta H))$ is the thermal energy. We will require that $\delta \geq 2^{2\mathcal{V}+1}$, where $\mathcal{V}\leq(2r+1)^D$ is the number of vertices inside a ball of radius $r$ in the lattice $\Lambda$. The parameter $\delta$ is otherwise arbitrary apart from this lower bound. 

\subsection{Finite correlation length of Gibbs state}
Throughout the paper, we denote the Gibbs state at inverse temperature $\beta$ by $g_\beta=\exp(-\beta H)/\tr(\exp(-\beta H))$. Later, we will restrict attention to temperatures for which $g_\beta$ has a finite correlation length. This property has been formalized as follows~\cite{Kliesch2014,Brandao2019,Bluhm2022,chen2025}.
\begin{definition}[Uniform clustering]
    Consider a bounded local Hamiltonian $H$. We say that $H$ is {\it uniformly clustering} at inverse temperature $\beta$ if for any $l\geq 0$, any region $X$, any reduced Hamiltonian in $X$, and any subregions $A\subseteq X$ and $C\subseteq X$ such that $d(A,C)\geq l$, we have
    \begin{equation}
        \abs{\mathrm{Cov}_{g_{\beta,X}}(O_A,O_C)}\leq\mathrm{poly}(|A|,|C|)e^{-l/\xi}
    \end{equation}
    for some correlation length $\xi>0$ and any operators $O_A$ and $O_C$ with unit operator norm supported on $A$ and $C$, respectively, where $g_{\beta,X}=\tr_{X^c}(g_\beta)$.
\end{definition}
This condition holds true in 1D at any temperature~\cite{Araki1969,Bluhm2022}, and in arbitrary dimension at high enough temperature, independent of system size~\cite{Kliesch2014,Capel2025}.

With these definitions, our main result is an unconditional statement guaranteeing that with probability exponentially close to unity, an initial product state sampled from the ensembles above has an exponentially large late-time complexity upon evolution under a generic, local Hamiltonian.

In the next two sections, we will introduce two technical properties, one on the spectrum of the Hamiltonian, spectral ergodicity, and the other on the population in the energy eigenbasis of any state, called energy dispersion. We prove that these conditions are generic for local Hamiltonians and product states. Later, we will leverage these properties to prove a late-time exponentially large quantum circuit complexity. 

\section{Spectral ergodicity}
\label{sec:02}
Spectral ergodicity is a condition on the energy levels which allows one to compute infinite time averages and higher moments of states evolving unitarily under the Hamiltonian. We present the definition, and then we show that this property is satisfied generically for local Hamiltonians. 
\begin{definition}[$k$-th no-resonance condition and spectral ergodicity]
    Consider a Hamiltonian $H$ with energy levels $E_1,\dots,E_{2^n}$. For any integer $k>0$, we say $H$ satisfies the $k$-th no-resonance condition \cite{Kaneko2020,Mark2024,Riddell2023} if, for any two $k$-tuples of energy levels $(E_{\mu_1},\dots,E_{\mu_k})$ and $(E_{\nu_1},\dots,E_{\nu_k})$, their sums are equal
    \begin{equation}
        \sum_{i=1}^k E_{\mu_i}=\sum_{i=1}^k E_{\nu_i}
    \end{equation}
     if and only if the indices $(\mu_1,\dots,\mu_k)$ and $(\nu_1,\dots,\nu_k)$ constitute  permutations of each other. Whenever this is satisfied for arbitrary $k$, we say $H$ satisfies {\it spectral ergodicity}.
\end{definition}
\noindent We remark that the $k$-th no-resonance condition implies the $k'$-th no-resonance condition for all $k'\leq k$.

The $k$-th no-resonance condition has been leveraged as a basic assumption in many prior works studying dynamics of many-body systems~\cite{Kaneko2020,Mark2024,Riddell2023}. For $k=2$ this condition simply means the absence of gap degeneracies in the spectrum, which is a basic assumption present in most works that rigorously study quantum equilibration \cite{Reimann2008,Linden2009,Muller2015,Pilatowsky-Cameo2025}.  The notion of spectral ergodicity was leveraged in Ref.~\cite{Riddell2023} (under the name of \textit{generic spectrum}) to prove results on the concentration of quantum equilibration and recurrence times in dynamics.
\subsection{Proposition 1: Generic local Hamiltonians have spectral ergodicity}
Although expected to hold in general, the $k$-th no-resonance conditions have been difficult to prove. Reference~\cite{Huang2021} proved that generic local Hamiltonians have no gap degeneracies ($k=2$). Here, we generalize this result to generic local Hamiltonians for arbitrary $k$. We remark that Ref.~\cite{Riddell2023} proves that generic complex Hermitian matrices have spectral ergodicity. However, this result does not translate immediately to local Hamiltonians, since the set of local Hamiltonians has zero measure within the set of all complex Hermitian matrices.

\begin{prop}[Generic local Hamiltonians have spectral ergodicity; Proposition 1 in the main text]
   \label{thm:generic-ergodic}
Let $\mathcal{P}$ be a set of Pauli strings which contains all single-site $Z_i$ and $X_i$, as well as all nearest-neighbor $Z_iZ_j$ and $X_iZ_j$  strings. Then almost all Hamiltonians that can be constructed from terms $P\in \mathcal{P}$ have spectral ergodicity. Specifically, the set of coefficients $(c_P)_{P\in\mathcal{P}}\in [-1,1]^{|\mathcal{P}|}$ such that the Hamiltonian $H=\sum_{P\in\mathcal{P}} c_P P$ does not have spectral ergodicity has zero measure in $[-1,1]^{|\mathcal{P}|}$.
\end{prop}

We first remark that this proposition follows from the existence of a one-dimensional nearest-neighbor Hamiltonian that has spectral ergodicity. The specific Hamiltonian used in our proof [see Eq.~\eqref{eq:ergodic-ham}] contains single-site Pauli terms $Z_i$ and $X_i$ as well as nearest-neighbor Pauli terms $Z_iZ_j$ and $X_iZ_j$. \Cref{thm:generic-ergodic} therefore applies to any set of Pauli strings containing these terms. In particular, it applies to the set of all Pauli strings whose support has diameter at most $r\geq 1$, as considered in the main text. Operationally, \Cref{thm:generic-ergodic} states that if one were to sample the coefficients $c_P$ of the Hamiltonian $H=\sum_{P\in \mathcal{P}} c_P P$ at random, then with unit probability $H$ has spectral ergodicity. 
Furthermore, this statement remains true if we only consider Hamiltonians which satisfy our extensivity condition (P2), which requires that, in any ball of radius $r$, at least some coefficient $|c_P|\geq h_{\mathrm{min}}$ for a Pauli $P$ acting in the region, since the Hamiltonians that satisfy this condition define a subset of $[-1,1]^{|\mathcal{P}|}$ which has finite, nonzero measure.

\begin{proof}
 We will show that for a fixed $k$, the set of coefficients that generate Hamiltonians which violate the $k$-th no-resonance condition has measure zero. This implies the desired result since a countable union of measure-zero sets also has measure zero. Following Refs.~\cite{Keating2015,Huang2021,Riddell2023}, we begin by showing that one can prove the statement by constructing a single example of a Hamiltonian $H$ which satisfies the $k$-th no-resonance condition, and then explicitly construct such an example. 

Let $\bm c=(c_P)_{P\in\mathcal{P}}$ denote a tuple of coefficients defining the Hamiltonian $H(\bm c)$. To show that the statement reduces to finding a single example, we introduce the following function of $\bm c$:
\begin{equation}
    G_k(\bm c) = \prod_{(\mu_1,\dots,\mu_k)\notin[(\nu_1,\dots,\nu_k)]}\left(\sum_{i=1}^k E_{\mu_i}-\sum_{i=1}^k E_{\nu_i}\right),
\end{equation}
where $\{E_\mu\}_{\mu=1}^{2^n}$ are the energies of $H(\bm c)$ and $[(\nu_1,\dots,\nu_k)]$ is the set of all permutations of $(\nu_1,\dots,\nu_k)$. Importantly, $H(\bm c)$ satisfies the $k$-th no-resonance condition if and only if $G_k(\bm c)\neq 0$. Since $G_k(\bm c)$ is a symmetric polynomial in $\{E_\mu\}_{\mu=1}^{2^n}$ (it is invariant upon exchanging energies), due to the fundamental theorem of symmetric polynomials, $G_k(\bm c)$ is a polynomial in $F_j=\tr(H(\bm c)^j)$, and in particular is a polynomial in $\bm c$. Since the zeros of a multivariate polynomial $[-1,1]^{\abs{\mathcal{P}}}\rightarrow\mathbb{R}$ are of measure zero unless the polynomial is identically zero, it is sufficient to show that there exists a single set $\bm c_0$ of parameters such that the corresponding Hamiltonian $H(\bm c_0)$ satisfies the $k$-th no-resonance condition.

Lemma~\ref{thm:local-ergodic} below constructs a 1D spin-chain Hamiltonian with single-site $X_i$ and $Z_i$ terms and nearest-neighbor $Z_iZ_{i+1}$ and $X_iZ_{i+1}$ interactions, which satisfies the $k$-th no-resonance condition. Moreover, this property is preserved under an overall normalization of the Hamiltonian. Note that constructing the example in 1D is sufficient to prove the result in arbitrary dimension, since an arbitrary lattice can be covered by a single one-dimensional line, which {\it snakes} through the whole space~\cite{Huang2021}. 

\end{proof}

\begin{lemma}[A spin-chain Hamiltonian with nearest-neighbor interactions which satisfies the $k$-th no-resonance condition]\label{thm:local-ergodic}
    For any number of qubits $n$, consider the following Hamiltonian
    \begin{equation}\label{eq:ergodic-ham}
        H_n = Z_1 + \sum_{j=1}^{n-1} [h_j Q_j + J_j(g_j+Q_j)Z_{j+1}],
    \end{equation} 
    with $Q_j=X_j+Z_j$. For any $k$, there exists a choice of parameters $(h_1,g_1,J_1,\cdots,h_{n-1},g_{n-1},J_{n-1})\in\mathbb{R}^{3(n-1)}$ such that $H_n$ satisfies the $k$-th no-resonance condition.
\end{lemma}
\begin{proof}

This proof is inspired by Lemma 7 of Ref.~\cite{Huang2021}. We denote $\{E^{(n)}_\mu\}_{\mu=1}^N$ as the energies of $H_n$. Here, $N=2^n$ is the Hilbert space dimension. For any positive integer $q$, we will call $(m_1,\dots,m_q)\in\mathbb{Z}^q$ a {\it multiplicity vector} if it satisfies $\sum_{\mu=1}^q m_\mu=0$ and $0<\sum_{\mu=1}^q\abs{m_\mu}\leq 2k$. We will show that $H_n$ with appropriately chosen parameters satisfies the $k$-th no-resonance condition. This reduces to showing that
\begin{equation}\label{eq:no-less-than-k-res}
    \sum_{\mu=1}^N m_\mu E_\mu^{(n)}\neq 0
\end{equation}
for all multiplicity vectors $(m_1,\dots,m_N)\in\mathbb{Z}^N$.

Given a multiplicity vector $(m_\mu)_\mu$, moving the terms with negative $m_\nu$ to the right-hand side of Eq.~\eqref{eq:no-less-than-k-res} yields an inequality of the form 
\begin{equation}
    \sum_{m_\mu>0} m_\mu E_\mu^{(n)} \neq \sum_{m_\nu<0} (-m_\nu) E_\nu^{(n)}
\end{equation}
in which all coefficients are positive integers. This is precisely a no-resonance condition for an energy combination of order at most $k$. Conversely, because we require $\sum_{\mu=1}^N\abs{m_\mu}\leq 2k$, every energy combination of order at most $k$ gives the corresponding multiplicity vector. The two notions are therefore equivalent.

Before proceeding, we introduce the following Hamiltonian and its energies $\{\tilde{E}_\mu^{(n)}(x)\}_{\mu=1}^N$:
\begin{equation}
    \tilde{H}_n (x) = H_n + xQ_n.
\end{equation}

{\it Induction hypotheses.} The following two conditions hold:
\begin{enumerate}
    \item[(H1)$_n$] $\,\,$ For all multiplicity vectors $(m_1,\dots,m_N)\in\mathbb{Z}^N$, $\sum_{\mu=1}^N m_\mu E_\mu^{(n)} \neq 0$.
    \item[(H2)$_n$] $\,\,$ There exists an open interval $\mathcal{I}_n\subset\mathbb{R}$ such that $\tilde{E}_\mu^{(n)}(x)$ is a real analytic function in $\mathcal{I}_n$, and for all multiplicity vectors $(m_\mu)_\mu$, $\sum_{\mu=1}^N m_\mu\tilde{E}_\mu^{(n)\prime\prime}(x)$ is not identically zero in $\mathcal{I}_n$. 
\end{enumerate}
Here, the prime denotes the derivative.

For the base case $n=1$, the Hamiltonian is $H_1=Z_1$. Since we have $m_1+m_2=0$, $m_1=-m_2=s$, and from $0<|m_1|+|m_2|\leq 2k$ we have $s\neq 0$ and $|s|\leq k$. For (H1)$_1$, $m_1 E_1^{(1)}+m_2 E_2^{(1)}=m_1-m_2=2s\neq 0$. For (H2)$_1$, we have $\tilde{H}_1(x)=x X_1+(1+x)Z_1$. This gives $\tilde{E}_1^{(1)}(x)=\sqrt{2x^2+2x+1}$ and $\tilde{E}_2^{(1)}(x)=-\sqrt{2x^2+2x+1}$. Thus, $\sum_{\mu=1}^2m_\mu\tilde{E}_\mu^{(1)}=2s\sqrt{2x^2+2x+1}$ is a real analytic function in $\mathcal{I}_1=(-1,1)$. Additionally, we have
\begin{equation}
    \sum_{\mu=1}^2m_\mu\tilde{E}_\mu^{(1)\prime\prime}(x) = \frac{2s}{(2x^2+2x+1)^{3/2}},
\end{equation}
which is not identically zero in $\mathcal{I}_1$.

Now, we perform induction. Let us assume the induction hypotheses (H1)$_n$ and (H2)$_n$. We can write $H_{n+1}$ as
\begin{equation}
    H_{n+1} = H_n + h_n Q_n + J_nQ_nZ_{n+1} + J_ng_n Z_{n+1}
\end{equation} 
and its eigenvalues as
\begin{align}
        E_{\mu,+}^{(n+1)} &= \tilde{E}^{(n)}_\mu(h_n+J_n) + J_n g_n, &\text{and}&&
        E_{\mu,-}^{(n+1)} &= \tilde{E}^{(n)}_\mu(h_n-J_n) - J_n g_n,
\end{align}
where `$+$' and `$-$' represent the up and down states of the $(n+1)$-th qubit in the $z$-direction. The corresponding eigenstates have well-defined values of $Z_{n+1}$.

Next, (H2)$_n$ implies that for every multiplicity vector $(m_\mu)_\mu$, $\sum_{\mu=1}^N m_\mu\tilde{E}_\mu^{(n)}(x)$ and $\sum_{\mu=1}^N m_\mu\tilde{E}_\mu^{(n)\prime}(x)$ are real analytic functions that are not identically zero in $\mathcal{I}_n$. Since they are analytic in $\mathcal{I}_n$, they have discrete zero sets. Furthermore, since the number of different multiplicity vectors is finite, the union of all their zero sets cannot cover $\mathcal{I}_n$. Therefore, there exist $\xi>0$ and $\xi_0\in\mathcal{I}_n$ such that for all $x\in[\xi_0-\xi,\xi_0+\xi]$ and multiplicity vectors $(m_\mu)_\mu$, the following four properties hold: $x\in\mathcal{I}_n$, $\sum_{\mu=1}^N m_\mu\tilde{E}_\mu^{(n)}(x)\neq0$, $\sum_{\mu=1}^N m_\mu\tilde{E}_\mu^{(n)\prime}(x)\neq0$, and $\sum_{\mu=1}^N m_\mu\tilde{E}_\mu^{(n)\prime\prime}(x)\neq0$. We set $h_n$ as $\xi_0$. Additionally, we introduce the following non-zero constants:
\begin{align*}
        \alpha &= \min_{\substack{(m_1,\dots,m_N)\\ |x-h_n|\leq \xi}} \left|\sum_{\mu=1}^N m_\mu \tilde{E}_\mu^{(n)}(x)\right|,&
        \lambda &= \min_{\substack{(m_1,\dots,m_N)\\ |x-h_n|\leq \xi}} \left|\sum_{\mu=1}^N m_\mu \tilde{E}_\mu^{(n)\prime}(x)\right|,&
        \chi &= \min_{\substack{(m_1,\dots,m_N)\\ |x-h_n|\leq \xi}} \left|\sum_{\mu=1}^N m_\mu \tilde{E}_\mu^{(n)\prime\prime}(x)\right|.
\end{align*}

We will show that there exist $J_n,g_n>0$ that make the induction hypotheses hold for $H_{n+1}$. To this aim, we will analyze the simultaneous constraints required by each condition (H1)$_{n+1}$ and (H2)$_{n+1}$, and we will identify a choice for $J_n,g_n>0$ that satisfies these constraints. \\

\noindent {{\bf Claim 1:} (H1)$_{n+1}$ {\it is satisfied for $0<J_n<\min\{1,\xi,\alpha/(2C_1k+C_2k),\lambda/(C_2k)\}$ and $g_n>k(2C_0/J_n+2C_1+C_2)$, for some $C_0,C_1,C_2>0$.\\}}
{\it Proof.} 
Let $(m_{\mu,+},m_{\mu,-})_{\mu}\in\mathbb{Z}^{2N}$ be a multiplicity vector. Let us denote
\begin{align*}
        F&=\sum_{\mu=1}^N [m_{\mu,+} E_{\mu,+}^{(n+1)}+m_{\mu,-} E_{\mu,-}^{(n+1)}]\\
        &=J_ng_n\sum_{\mu=1}^N(m_{\mu,+}-m_{\mu,-})+\frac{1}{2}\sum_{\mu=1}^N(m_{\mu,+}+m_{\mu,-})(\tilde{E}^{(n)}_\mu(h_n+J_n)+\tilde{E}^{(n)}_\mu(h_n-J_n))\\
        &\quad\quad\quad+\frac{1}{2}\sum_{\mu=1}^N(m_{\mu,+}-m_{\mu,-})(\tilde{E}^{(n)}_\mu(h_n+J_n)-\tilde{E}^{(n)}_\mu(h_n-J_n)).
\end{align*}
We will show that, under the assumptions of the claim, there is no multiplicity vector $(m_{\mu,+},m_{\mu,-})_{\mu}$ that gives $F=0$. Since $\tilde{E}_\mu^{(n)}(x)$ is analytic in $(h_n-\xi,h_n+\xi)$, for $J_n<\xi$, we can rewrite $F$ as
\begin{equation}
    F = F_1 + F_2 + F_3 + F_4
\end{equation}
with
\begin{equation}
    \begin{split}
        F_1 &= J_ng_n\sum_{\mu=1}^N(m_{\mu,+}-m_{\mu,-})\\
        F_2 &= \sum_{\mu=1}^N(m_{\mu,+}+m_{\mu,-})\tilde{E}^{(n)}_\mu(h_n)\\
        F_3 &= J_n\sum_{\mu=1}^N(m_{\mu,+}-m_{\mu,-})\tilde{E}_\mu^{(n)\prime}(h_n+\xi_1)\\
        F_4 &= \frac{J_n^2}{2}\sum_{\mu=1}^N(m_{\mu,+}+m_{\mu,-})\tilde{E}_\mu^{(n)\prime\prime}(h_n+\xi_2)
    \end{split}
\end{equation}
for some $0\leq|\xi_1|,|\xi_2|\leq J_n$ due to the mean value theorem and its generalization for the second derivative. Additionally, for $x\in(h_n-\xi,h_n+\xi)$, $\tilde{E}_\mu^{(n)}(x)$, $\tilde{E}_\mu^{(n)\prime}(x)$, and $\tilde{E}_\mu^{(n)\prime\prime}$ are upper bounded by some positive constants $C_0$, $C_1$, and $C_2$, respectively. 

Let us first consider multiplicity vectors giving $|F_1|\neq 0$. In this case, $|F_1|$ is lower bounded by $J_ng_n$. Thus, for $g_n>k(2C_0/J_n+2C_1+J_n C_2)$, we have
\begin{equation}
    |F_2+F_3+F_4| \leq 2C_0k + 2J_nC_1k + J_n^2 C_2k < J_n g_n \leq |F_1|,
\end{equation}
which implies that we have $F\neq 0$ in this case. 

Second, let us consider multiplicity vectors giving $F_1=0$ and $F_2\neq 0$. Since we have
\begin{align}
        \sum_{\mu=1}^N (m_{\mu,+}+m_{\mu,-}) &= 0 &\text{and}&&
        \sum_{\mu=1}^N\abs{m_{\mu,+}+m_{\mu,-}} &\leq \sum_{\mu=1}^N(\abs{m_{\mu,+}}+\abs{m_{\mu,-}}) \leq 2k,
\end{align}
$(m_{\mu,+}+m_{\mu,-})_\mu$ can serve as a multiplicity vector. This implies that $|F_2|$ is lower bounded by $\alpha$. Thus, for $J_n < \min\{1,\alpha/(2C_1k+C_2k)\}$, we have
\begin{equation}
    |F_3+F_4| \leq 2 J_n C_1 k + J_n^2 C_2 k < \alpha \leq |F_2|.
\end{equation}
Again, this means that $F\neq 0$ in this case as well.

Third, let us consider multiplicity vectors giving $F_1=F_2=0$ and $F_3\neq 0$. Since $F_2=0$, we have $m_{\mu,+}+m_{\mu,-}=0$ for all $1\leq\mu\leq N$. Due to this, we can simplify $F_3$ as
\begin{equation}
    F_3 = 2J_n\sum_{\mu=1}^N m_{\mu,+} \tilde{E}_\mu^{(n)\prime}(h_n+\xi_1).
\end{equation}
Importantly, since $\sum_{\mu=1}^N(m_{\mu,+}-m_{\mu,-})=0$ due to $F_1=0$, $(m_{\mu,+})_\mu$ is a multiplicity vector. Thus, $|F_3|$ is lower bounded by $2J_n\lambda$, so for $J_n<\lambda/(C_2k)$, we have
\begin{equation}
    |F_4|\leq J_n^2 C_2 k < 2J_n\lambda \leq |F_3|.
\end{equation}
This implies that $F\neq 0$.

Finally, let us consider multiplicity vectors giving $F_1=F_2=F_3=0$. Again, since $F_2=0$, we have $m_{\mu,+}+m_{\mu,-}=0$ for all $1\leq\mu\leq N$. Additionally, $F_1=0$ makes $(m_{\mu,+})_\mu$ a multiplicity vector. This, together with $F_3=0$, implies that $m_{\mu,+}=0$ for all $1\leq\mu\leq N$. Altogether, we have $m_{\mu,+}=m_{\mu,-}=0$ for all $1\leq\mu\leq N$, contradicting the assumption that $(m_{\mu,+},m_{\mu,-})_\mu$ is a multiplicity vector.

\noindent {{\bf Claim 2:} (H2)$_{n+1}$ {\it is satisfied for $0<J_n<\min\{1,\Delta/4\sqrt{2},6C,\lambda/(2C_2k)\}$ and $g_n>\max\{\frac{10}{3}(\tilde{h}/J_n+4),2k(2C_1+C_2+1280),2560k/\lambda,2560k/(\chi J_n)\}$ with $\tilde{h}=\|\tilde{H}(h_n)\|_\mathrm{op}$ for some constants $C,\Delta>0$. Here, $C_1$ and $C_2$ are the constants from  Claim 1.\\}}
{\it Proof.} 
Let $(m_{\mu,+},m_{\mu,-})_{\mu}\in\mathbb{Z}^{2N}$ be a multiplicity vector. Next, let us write $\tilde{H}_{n+1}(x)$ explicitly,
\begin{equation}
    \tilde{H}_{n+1}(x) = \tilde{H}_n(h_n) + J_nQ_nZ_{n+1} + (J_ng_n+x) Z_{n+1} + xX_{n+1}.
\end{equation}
To relate derivatives of $\tilde{E}_\mu^{(n+1)}(x)$ with those of $\tilde{E}_\mu^{(n)}(x)$, it is helpful to rotate Pauli matrices at the $(n+1)$-th qubit. To this end, we introduce the following reparameterization: $y(x)=x+J_ng_n$. Additionally, let us introduce the following parameters: $\kappa(y) = \sqrt{y^2+(y-J_ng_n)^2}$, $\cos\theta=y/\kappa$, and $\sin\theta=(y-J_ng_n)/\kappa$. With the branch choice $\kappa(0)=J_ng_n$, these are analytic in $\abs{y}<J_ng_n/8$, and in that region, we have $4J_ng_n/5<\abs{\kappa(y)}<6J_ng_n/5$, $\abs{\cos\theta}<1/6$, and $\abs{\sin\theta}<\sqrt{2}$. 

Then, we apply the following similarity transformation at the $(n+1)$-th qubit:
\begin{equation}
    S_{n+1} = \begin{bmatrix}
        u & -v\\ v & u
    \end{bmatrix}
    \quad\text{ with }\quad
    u=\sqrt{\frac{1+\cos\theta}{2}}\quad\text{ and }\quad v=\frac{\sin\theta}{2u}.
\end{equation}
This transformation has unit determinant and gives
\begin{equation}
    S_{n+1}^{-1}\tilde{H}_{n+1}(x)S_{n+1} = \tilde{H}^\mathrm{unp}(y) + \tilde{H}^\mathrm{per}(y)
\end{equation}
with
\begin{equation}
    \tilde{H}^\mathrm{unp}(y) = \tilde{H}_n(h_n)+\kappa Z_{n+1}+J_n\cos\theta Q_n Z_{n+1}\quad\text{and}\quad \tilde{H}^\mathrm{per}(y)=-J_n\sin\theta Q_n X_{n+1}.
\end{equation}
Since we have $\sum_{\mu=1}^N m_\mu\tilde{E}_\mu^{(n)}(x)\neq0$ for all $(m_\mu)_\mu$ and $x\in(h_n-\xi,h_n+\xi)$, it follows that $\tilde{H}_n(h_n)$ has a non-zero gap. Let $\Delta>0$ be the gap of $\tilde{H}_n(h_n)$. 

We proceed with a perturbation theory calculation to get the spectrum of $S_{n+1}^{-1}\tilde{H}_{n+1}(x)S_{n+1}$ for $|y|< J_ng_n/8$, $J_n\leq\Delta/(4\sqrt{2})$, and $g_n\geq \frac{10}{3}(\|\tilde{H}_n(h_n)\|_\mathrm{op}/J_n+4)$. First, let us consider diagonal blocks of $\tilde{H}^\mathrm{unp}(y)$: $\tilde{H}_+^\mathrm{unp}(y)$ and $\tilde{H}_-^\mathrm{unp}(y)$ which correspond to $\langle Z_{n+1}\rangle=1$ or $-1$ sectors, respectively, i.e., $\tilde{H}^\mathrm{unp}_{\pm}(y)=\tilde{H}_n(h_n)\pm(\kappa+J_n\cos(\theta)Q_n)$, so $\tilde{H}^\mathrm{unp}(y)=\tilde{H}_+^\mathrm{unp}(y)\oplus\tilde{H}_-^\mathrm{unp}(y)$. Since $\tilde{H}_n(h_n)$ is nondegenerate with gap $\Delta$, Lemma~\ref{thm:uniform-spec-stab} with $(H^\mathrm{unp},V)=(\tilde{H}_n(h_n),z Q_n)$ for $z\in\mathbb{C}$ implies that $\tilde{H}_n(h_n)+zQ_n$ has uniquely labeled analytic eigenvalues $\tilde{E}_\mu^{(n)}(h_n+z)$ for $|z|<C$ for some positive constant $C<\Delta/(8/\sqrt{2})$. Then, for $J_n<6C$, eigenvalues of $\tilde{H}_\pm^\mathrm{unp}(y)$ are $\tilde{E}^\mathrm{unp}_{\mu,\pm}(y)=\tilde{E}_\mu^{(n)}(h_n\pm J_n\cos\theta)\pm\kappa$. 

Next, we introduce a trial energy $\omega$ to compute the spectrum of the Schur complements, or the effective Hamiltonians, of the sectors $\langle Z_{n+1}\rangle=\pm 1$. To this end, we first need to bound the operator norm of Green's function, $\|(\omega-\tilde{H}^\mathrm{unp}_{\mp}(y))^{-1}\|_\mathrm{op}$, when $\omega$ is near the $\langle Z_{n+1}\rangle=\pm 1$ sector to bound their self-energies, by leveraging the fact that energies of the diagonal blocks are separated by $O(J_ng_n)$, which is taken to be a large number. Specifically, we set $\omega\in D_{\mu,\pm}$ with $D_{\mu,\pm}=\{z\in\mathbb{C}:|z-\tilde{E}^\mathrm{unp}_{\mu,\pm}(y)|\leq\Delta/4\}$. We then use the Neumann series bound: $\|(I-T)^{-1}\|_\mathrm{op}\leq (1-\|T\|_\mathrm{op})^{-1}$ for any bounded operator $T$ with $\|T\|_\mathrm{op}<1$ (see Sec. 4.4 of Ch. 1 of Ref.~\cite{Kato1995}). To use this, we set $K_\mp=\omega-(\tilde{H}_n(h_n)\mp\kappa)$ and $T_\mp=\mp J_n\cos\theta K_\mp^{-1}Q_n$. Then, we have $\|K_\mp^{-1}\|_\mathrm{op}=1/\min_\mu|\omega-(\tilde{E}^{(n)}_\mu(h_n)\mp\kappa)|$. From $\abs{\kappa(y)}>4J_n g_n/5$, $\Delta\leq 2\|\tilde{H}_n(h_n)\|_\mathrm{op}$, and $g_n\geq \frac{10}{3}(\|\tilde{H}_n(h_n)\|_\mathrm{op}/J_n+4)$, the triangle inequality gives
\begin{equation}
    \|K_\mp^{-1}\|_\mathrm{op}^{-1} 
    \geq 2\abs{\kappa}-\max_\mu|\tilde{E}^{(n)}_\mu(h_n)|-\max_\mu|\tilde{E}^{(n)}_\mu(h_n\pm J_n\cos\theta)|-\frac{\Delta}{4}\\
    \geq \frac{17}{20}J_n g_n.
\end{equation}
Furthermore, using $\abs{\cos\theta}<1/6$ and $g_n\geq 10\sqrt{2}/3$, $\|T_\mp\|_\mathrm{op}$ is bounded by
\begin{equation}
    \|T_\mp\|_\mathrm{op}\leq \sqrt{2} J_n \abs{\cos\theta} \|K_\mp^{-1}\|_\mathrm{op} < \frac{10\sqrt{2}}{51 g_n} < \frac{1}{17}.
\end{equation}
Using the Neumann series, this allows us to get
\begin{equation}\label{eq:unp-green-bound}
    \|(\omega-\tilde{H}^\mathrm{unp}_{\mp}(y))^{-1}\|_\mathrm{op}\leq \|(I-T_\mp)^{-1}\|_\mathrm{op}\|K_\mp^{-1}\|_\mathrm{op}\leq \frac{\|K_\mp^{-1}\|_\mathrm{op}}{1-\|T_\mp\|_\mathrm{op}} < \frac{5}{4 J_ng_n}.
\end{equation}
The self-energy term is given by $\Sigma_\pm(\omega,y)=J_n^2\sin^2\theta\, Q_n(\omega-\tilde{H}^\mathrm{unp}_\mp(y))^{-1}Q_n$. Using Eq.~\eqref{eq:unp-green-bound} and $\abs{\sin\theta}<\sqrt{2}$, we have
\begin{equation}\label{eq:self-energy-bound}
    \|\Sigma_\pm(\omega,y)\|_\mathrm{op}<5J_n/g_n.
\end{equation}
Furthermore, for $\omega_1$ and $\omega_2$ in $D_{\mu,\pm}$, 
\begin{equation}
    (\omega_1-\tilde{H}^\mathrm{unp}_\mp(y))^{-1}-(\omega_2-\tilde{H}^\mathrm{unp}_\mp(y))^{-1} = (\omega_2-\omega_1)(\omega_1-\tilde{H}^\mathrm{unp}_\mp(y))^{-1}(\omega_2-\tilde{H}^\mathrm{unp}_\mp(y))^{-1}
\end{equation}
implies
\begin{equation}\label{eq:self-energy-difference-bound}
    \|\Sigma_\pm(\omega_2,y)-\Sigma_\pm(\omega_1,y)\|_\mathrm{op} \leq \frac{25}{4g_n^2}|\omega_2-\omega_1|.
\end{equation}

The effective Hamiltonian for each sector is $\tilde{H}^\mathrm{eff}_\pm(\omega,y)=\tilde{H}^\mathrm{unp}_\pm(y)+\Sigma_\pm(\omega,y)$. We introduce $V_\pm(y)=\pm J_n\cos\theta Q_n$ and $W_\pm(\omega,y)=\pm J_n\cos\theta Q_n+\Sigma_\pm(\omega,y)$. By setting $\eta=\sqrt{2}J_n/6+5J_n/g_n\leq\Delta/8$, their operator norms are both upper bounded by $\eta$. Then, we can use Lemma~\ref{thm:uniform-spec-stab} with $V=V_\pm(y)$ as well as $V=W_\pm(\omega,y)$, together with $\tilde{H}^\mathrm{unp}=\tilde{H}_n(h_n)\pm\kappa$, which is normal. This gives the Lipschitz constant $L\leq 2$, unique nondegenerate eigenvalues in $D_{\mu,\pm}$ for  $\tilde{H}^\mathrm{unp}_\pm(y)$ and $\tilde{H}^\mathrm{eff}_\pm(\omega,y)$. Denote $\{\tilde{E}^\mathrm{eff}_{\mu,\pm}\}_\mu$ as the eigenvalues of $\tilde{H}^\mathrm{eff}_\pm(\omega,y)$. From these with Eq.~\eqref{eq:self-energy-bound}, we can bound
\begin{equation}\label{eq:self-energy-from-unp}
    \abs{\tilde{E}^\mathrm{eff}_{\mu,\pm}(\omega,y)-\tilde{E}^\mathrm{unp}_{\mu,\pm}(y)}\leq L\|W_\pm(\omega,y)-V_\pm(y)\|_\mathrm{op} = L\|\Sigma_\pm(\omega,y)\|_\mathrm{op}\leq 10 J_n/g_n < \Delta/4.
\end{equation}
This shows that $\omega\rightarrow\tilde{E}^\mathrm{eff}_{\mu,\pm}(\omega,y)$ maps $D_{\mu,\pm}$ into itself. Furthermore, the same lemma and Eq.~\eqref{eq:self-energy-difference-bound} give
\begin{equation}\label{eq:self-energy-convergence}
    \abs{\tilde{E}^\mathrm{eff}_{\mu,\pm}(\omega_2,y)-\tilde{E}^\mathrm{eff}_{\mu,\pm}(\omega_1,y)}\leq L \|\Sigma_\pm(\omega_2,y)-\Sigma_\pm(\omega_1,y)\|_\mathrm{op}\leq \frac{25}{2g_n^2}\abs{\omega_2-\omega_1}
\end{equation}
with $25/(2g_n^2)<1$. Thus, repeated substitution of $\omega\rightarrow \tilde{E}^\mathrm{eff}_{\mu,\pm}(\omega,y)$ converges to the unique solution of the self-consistent equation $\tilde{E}^\mathrm{eff}_{\mu,\pm}(\omega,y)=\omega$ uniformly for all $\abs{y}<J_ng_n/8$ due to the contraction mapping theorem. We note that $\tilde{E}^\mathrm{eff}_{\mu,\pm}(\omega,y)$ becomes the energy of $\tilde{H}_{n+1}(y)$ when the self-consistent equation holds. Additionally, $\tilde{E}^\mathrm{eff}_{\mu,\pm}(\omega,y)$ is analytic, and Eq.~\eqref{eq:self-energy-convergence} implies that $\partial_\omega(\omega-\tilde{E}^\mathrm{eff}_{\mu,\pm}(\omega,y))\neq 0$. These together with the implicit function theorem uniformly applied to solutions of the self-consistent equation for $\abs{y}<J_ng_n/8$ imply that the energies of $\tilde{H}_{n+1}(y)$ are unique, nondegenerate, analytic, and furthermore have the same label $(\mu,\pm)$ in $\abs{y}<J_ng_n/8$. We denote its energies by $\{\tilde{E}_{\mu,\pm}^{(n+1)}(y)\}_\mu$.

Now, we define the correction term $R_{\mu,\pm}(y)=\tilde{E}_{\mu,\pm}^{(n+1)}(y)-\tilde{E}^\mathrm{unp}_{\mu,\pm}(y)$. Then, Eq.~\eqref{eq:self-energy-from-unp} gives
\begin{equation}
    \sup_{|y|<J_ng_n/8}\abs{R_{\mu,\pm}(y)}\leq 10 J_n/g_n.
\end{equation}
From this, we have
\begin{equation}
    \begin{split}
        \tilde{E}_{\mu,+}^{(n+1)}(x) &= \tilde{E}_\mu^{(n)}(h_n+J_n\cos\theta)+\kappa+R_{\mu,+}(y)\\
        \tilde{E}_{\mu,-}^{(n+1)}(x) &= \tilde{E}_\mu^{(n)}(h_n-J_n\cos\theta)-\kappa+R_{\mu,-}(y)
    \end{split}
\end{equation}
Differentiating, we obtain
\begin{equation}
    \begin{split}
        \tilde{E}_{\mu,+}^{(n+1)\prime}(x) &= J_n\left(\frac{d}{dy}\cos\theta\right)\tilde{E}_\mu^{(n)\prime}(h_n+J_n\cos\theta)+\kappa'+R_{\mu,+}'(y)\\
        \tilde{E}_{\mu,-}^{(n+1)\prime}(x) &= -J_n \left(\frac{d}{dy}\cos\theta\right) \tilde{E}_\mu^{(n)\prime}(h_n-J_n\cos\theta)-\kappa'+R'_{\mu,-}(y)
    \end{split}
\end{equation}
and
\begin{equation}
    \begin{split}
        \tilde{E}_{\mu,+}^{(n+1)\prime\prime}(x) &= J_n^2\left(\frac{d}{dy}\cos\theta\right)^2\tilde{E}_\mu^{(n)\prime\prime}(h_n+J_n\cos\theta)+\kappa''+J_n\left(\frac{d^2}{dy^2}\cos\theta\right)\tilde{E}_\mu^{(n)\prime}(h_n+J_n\cos\theta)+R''_{\mu,+}(y)\\
        \tilde{E}_{\mu,-}^{(n+1)\prime\prime}(x) &= J_n^2\left(\frac{d}{dy}\cos\theta\right)^2\tilde{E}_\mu^{(n)\prime\prime}(h_n-J_n\cos\theta)-\kappa''-J_n\left(\frac{d^2}{dy^2}\cos\theta\right)\tilde{E}_\mu^{(n)\prime}(h_n-J_n\cos\theta)+R''_{\mu,-}(y).
    \end{split}
\end{equation}
Using Cauchy's estimate with radius $J_ng_n/8$ and the following calculations:
\begin{equation}
    \begin{split}
        \kappa''|_{y=0} &= \frac{1}{J_n g_n}\\
        \frac{d}{dy}\cos\theta|_{y=0} &= \frac{1}{J_ng_n}\\
        \frac{d^2}{dy^2}\cos\theta|_{y=0} &= \frac{2}{J_n^2g_n^2},
    \end{split}
\end{equation}
when $y=0$, \textit{i.e.}, $x=-J_n g_n$, we have $|R'_{\mu,+}(0)|,|R'_{\mu,-}(0)|\leq 80/g^2_n$ and $|R''_{\mu,+}(0)|,|R''_{\mu,-}(0)|\leq 1280/J_ng^3_n$.

Now, let us denote
\begin{equation}
    G(y) = \sum_{\mu=1}^N[m_{\mu,+}\tilde{E}_{\mu,+}^{(n+1)\prime\prime}(x)+m_{\mu,-}\tilde{E}_{\mu,-}^{(n+1)\prime\prime}(x)].
\end{equation}
Then, (H2)$_{n+1}$ holds if there exists $x$ such that $|y(x)|< J_n g_n/8$, which is the convergence condition, and $G(y(x))\neq 0$ for all multiplicity vectors $(m_{\mu,+},m_{\mu,-})_\mu$. Since $\tilde{E}_{\mu,+}^{(n+1)}(x)$ and $\tilde{E}_{\mu,-}^{(n+1)}(x)$ are real analytic functions once they are convergent, if we have $G(y(x))\neq 0$ for some $x_*$ such that $|y(x_*)|<J_n g_n/8$, then there exists a finite open neighborhood $\mathcal{I}_{n+1}$ of $x_*$ such that $G(y(x))$ is not identically zero. Below, we show that we can choose $x_*=-J_ng_n$ independent of the multiplicity vector, i.e., $G(0)\neq 0$ for all multiplicity vectors $(m_{\mu,+},m_{\mu,-})_\mu$. Since $y(x_*)=0$, let us consider 
\begin{equation}
    G(0) = G_1+G_2+G_3+G_4
\end{equation}
with
\begin{equation}
    \begin{split}
        G_1 &= \frac{1}{J_ng_n}\sum_{\mu=1}^N(m_{\mu,+}-m_{\mu,-})\\
        G_2 &= \frac{2}{J_ng_n^2}\sum_{\mu=1}^N(m_{\mu,+}-m_{\mu,-})\tilde{E}_\mu^{(n)\prime}(h_n)\\
        G_3 &= \frac{1}{g_n^2}\sum_{\mu=1}^N(m_{\mu,+}+m_{\mu,-})\tilde{E}_\mu^{(n)\prime\prime}(h_n)\\
        |G_4| &\leq 2560k/J_ng_n^3.
    \end{split}
\end{equation}
To show $G(0)\neq 0$, we will use the same argument as when we proved $F\neq 0$.

First, let us consider multiplicity vectors giving $G_1\neq 0$. In this case, $|G_1|$ is lower bounded by $1/(J_ng_n)$. Thus, for $g_n>\max\{1,2k(2C_1+J_nC_2+1280)\}$, we have
\begin{equation}
    |G_2+G_3+G_4| \leq \frac{4C_1k}{J_ng_n^2}+\frac{2C_2k}{g_n^2}+\frac{2560k}{J_ng_n^3} < \frac{1}{J_ng_n} \leq |G_1|.
\end{equation}
Thus, we have $G(0)\neq 0$ in this case.

Second, let us consider multiplicity vectors giving $G_1=0$ and $G_2\neq 0$. Since $G_1=0$, we have $\sum_{\mu=1}^N(m_{\mu,+}-m_{\mu,-})=0$. Thus, $(m_{\mu,+}-m_{\mu,-})_\mu$ is a multiplicity vector. Then, $|G_2|$ is lower bounded by $2\lambda/(J_ng_n^2)$, and for $J_n<\lambda/(2C_2k)$ and $g_n>2560k/\lambda$, we have
\begin{equation}
    |G_3+G_4| \leq \frac{2C_2k}{g_n^2}+\frac{2560k}{J_ng_n^3} < \frac{2\lambda}{J_ng_n^2} \leq |G_2|.
\end{equation}
Therefore, in this case as well, we have $G(0)\neq 0$.

Third, let us consider multiplicity vectors giving $G_1=G_2=0$ and $G_3\neq 0$. As we discussed before, $(m_{\mu,+}+m_{\mu,-})_\mu$ is a multiplicity vector. Thus, $|G_3|$ is lower bounded by $\chi/g_n^2$. This implies that for $J_ng_n>2560k/\chi$, we have
\begin{equation}
    |G_4| \leq \frac{2560k}{J_ng_n^3} < \frac{\chi}{g_n^2} \leq |G_3|.
\end{equation}
Thus, we again have $G(0)\neq 0$ in this case.

Finally, let us show that it cannot occur that $G_1=G_2=G_3=0$ for any multiplicity vector. When $G_1=0$, we have that $(m_{\mu,+}-m_{\mu,-})_\mu$ is a multiplicity vector, and $G_2=0$ implies that $m_{\mu,+}=m_{\mu,-}$ for all $1\leq\mu\leq N$, but at the same time, $G_3=0$ would imply that $m_{\mu,+}=-m_{\mu,-}$ for all $1\leq\mu\leq N$, which is impossible given that $\sum_{\mu=1}^N(|m_{\mu,+}|+|m_{\mu,-}|)>0$. Therefore, we always have $G(0)\neq 0$.\\

Combining Claim 1 and Claim 2, we have shown that (H1)$_{n+1}$ and (H2)$_{n+1}$ are satisfied for $0<J_n<\min\{1,\xi,6C,\alpha/(2C_1k+C_2k),\lambda/(2C_2k),\Delta/4\sqrt{2}\}$ and $g_n>\max\{\frac{10}{3}(\tilde{h}+4),2620k\max\{C_0,C_1,C_2\}/\min\{J_n,\lambda,\chi J_n\}\}$ for some constants $\xi,C,C_0,C_1,C_2,\Delta,\tilde{h}>0$, which completes the inductive step.
\end{proof}

\begin{lemma}\label{thm:uniform-spec-stab}
    Let $H^\mathrm{unp}$ be normal with eigenvalues $\{E_\mu^\mathrm{unp}\}_\mu$ separated by $\Delta$. For any positive constant $\eta<\Delta/4$ and operator $V$ with $\|V\|_\mathrm{op}\leq \eta$, the perturbed Hamiltonian $H(V)=H^\mathrm{unp}+V$ has a unique non-degenerate analytic eigenvalue $E_\mu(V)$ with $\abs{E_\mu(V)-E_\mu^\mathrm{unp}}<\Delta/4$. Furthermore, for any operators $V_1$ and $V_2$ with $\|V_1\|_\mathrm{op}\leq\|V_2\|_\mathrm{op}\leq \eta$, $E_\mu(\cdot)$ is Lipschitz continuous with constant $L=1/(1-4\eta/\Delta)$:
    \begin{equation}
        \abs{E_\mu(V_1)-E_\mu(V_2)}\leq L\|V_1-V_2\|_\mathrm{op}.
    \end{equation}
\end{lemma}
\begin{proof}
    Let $\Gamma_\mu$ be $\{z\in\mathbb{C}:\abs{z-E^\mathrm{unp}_\mu}=\Delta/4\}$. Since $H^\mathrm{unp}$ is normal, its Green's function is diagonalizable and has
    \begin{equation}
        \max_{z\in\Gamma_\mu}\|(z-H^\mathrm{unp})^{-1}\|_\mathrm{op} = \sup_{z\in\Gamma_\mu}\frac{1}{\min_\nu\abs{z-E^\mathrm{unp}_\nu}}\leq \frac{4}{\Delta}.
    \end{equation}
    For $\|V\|_\mathrm{op}\leq\eta$, this gives
    \begin{equation}
        \max_{z\in\Gamma_\mu}\|V(z-H^\mathrm{unp})^{-1}\|_\mathrm{op}\leq\frac{4\eta}{\Delta}<1.
    \end{equation}
    Therefore, using the Neumann series (see Sec. 4.4 of Ch. 1 of Ref.~\cite{Kato1995}), for $z\in\Gamma_\mu$,
    \begin{equation}
        \|(z-H)^{-1}\|_\mathrm{op}
        \leq\frac{\|(z-H^\mathrm{unp})^{-1}\|_\mathrm{op}}{1-\|V(z-H^\mathrm{unp})^{-1}\|_\mathrm{op}}
        \leq\frac{1}{\Delta/4-\eta},
    \end{equation}
    which means that $z-H$ is invertible, and no eigenvalue of $H$ lies on $\Gamma_\mu$. Then, the total projection $P_\mu(V)$ is well-defined:
    \begin{equation}
        P_\mu(V) = \frac{1}{2\pi i}\oint_{\Gamma_\mu}dz(z-H)^{-1},
    \end{equation}
    whose rank equals the number of eigenvalues of $H$ in $\Gamma_\mu$, counted with their multiplicities (see Sec. 1.4 of Ch. 2 of Ref.~\cite{Kato1995}). The same is true for the interpolation $H(s)=H^\mathrm{unp}+V(s)$ with $V(s)=V_1+s(V_2-V_1)$, $V_1=0$, $V_2=V$, and $s\in[0,1]$. Thus, there is no eigenvalue that crosses $\Gamma_\mu$ along this interpolation. Since $H(0)$ has the unique eigenvalue $E_\mu^\mathrm{unp}$ in $\Gamma_\mu$ without multiplicity, the continuity implies that the eigenvalue of $H=H(1)$ in $\Gamma_\mu$ is non-degenerate and unique. Furthermore, that eigenvalue is given by $\tr(H P_\mu(V))$, which is analytic in the perturbation and can have the same label $\mu$ (see Sec. 2.1 of Ch. 2 of Ref.~\cite{Kato1995}). We denote the eigenvalue as $E_\mu(V)$.

    Now, let us prove the Lipschitz continuity. First, the operator norm of the projector is bounded by (see Sec. 1.5 of Ch. 2 of Ref.~\cite{Kato1995})
    \begin{equation}
        \|P_\mu(V)\|_\mathrm{op} \leq \frac{\Delta}{4}\max_{z\in\Gamma_\mu}\|(z-H)^{-1}\|_\mathrm{op} \leq \frac{1}{1-4\eta/\Delta} = L.
    \end{equation}
    Then, for $\|V_1\|_\mathrm{op}\leq\|V_2\|_\mathrm{op}\leq \eta$, the operator norm of $V(s)$ is bounded by $\|V(s)\|_\mathrm{op}\leq\eta$ due to the convexity of the operator norm. Furthermore, for left and right eigenvectors $v_\mu(s)$ and $u_\mu(s)$ of $H(s)$ associated with $E_\mu(V(s))$ such that $v_\mu^\dag(s)u_\mu(s)=1$, we have 
    \begin{equation}
        \frac{d}{ds}E_\mu(s) = v_\mu^\dag(s) (V_2-V_1) u_\mu(s).
    \end{equation}
    Since the projector is given by $P_\mu(V(s))=u_\mu(s)v^\dag_\mu(s)$, it follows that
    \begin{equation}
        \frac{d}{ds}E_\mu(s) = \tr(P_\mu(V(s))(V_2-V_1)).
    \end{equation}
    These give
    \begin{equation}
        \abs{\frac{d}{ds}E_\mu(s)}\leq \|P_\mu(V(s))\|_\mathrm{op}\|V_2-V_1\|_\mathrm{op}\leq L\|V_2-V_1\|_\mathrm{op}.
    \end{equation}
    Finally, integrating both sides over $s\in[0,1]$ gives the stated Lipschitz continuity.
\end{proof}

\subsection{Temporal moments from spectral ergodicity}
The no-resonance conditions are particularly useful for computing temporal averages and higher moments of states evolving under the unitary dynamics generated by the Hamiltonian~\cite{Mark2024}. The proposition below allows us to bound the $k$-th temporal moment of any initial state, assuming the $k$-th no-resonance condition. 
Below, the temporal average is defined as $\mathbb{E}_t[\cdot]=\lim_{T\rightarrow\infty}\frac{1}{T}\int_0^T(\cdot)dt$, $S_k$ is the symmetric group of order $k$, and $P_\pi$ is the representation of $\pi\in S_k$ in the $k$-copy tensor product Hilbert space. $P_\pi$ permutes the $k$ copies according to $\pi$ and acts trivially within each copy.
\begin{prop}[Bound on temporal moments under time-independent Hamiltonian evolution]\label{thm:no-k-res-temp-avg-symm}
    Let $H$ be an $n$-qubit Hamiltonian satisfying the $k$-th no-resonance condition, with eigenstates $\{\ket{\mu}\}_{\mu=1}^{2^n}$. Then, for any pure state $\ket{\psi}=\sum_{\mu=1}^{2^n}c_\mu\ket{\mu}$, the $k$-th temporal moment of its evolution $\ket{\psi(t)}=e^{-iHt}\ket{\psi}$ can be upper bounded as
    \begin{equation}
    \label{eq:lemmaineq}
        \E_{t}[\dyad{\psi(t)}^{\otimes k}] \preceq  \rho_d^{\otimes k}\sum_{\pi\in S_k} P_\pi
    \end{equation}
    with the energy diagonal state $\rho_d = \sum_{\mu=1}^{2^n}\abs{c_\mu}^2\ketbra{\mu}$.
\end{prop}
\begin{proof}
We begin by expanding the temporal moment in the energy eigenbasis utilizing a standard argument that was first established by Ref.~\cite{Mark2024}. Let us denote by $\bm\mu=(\mu_1,\cdots,\mu_k)$ a $k$-tuple of indices, and the corresponding tensor-product of energy eigenstates $\ket{\bm\mu}\coloneqq \ket{\mu_1}\otimes \cdots \otimes \ket{\mu_k}$, so that
\begin{equation}
    \E_t[\ketbra{\psi(t)}^{\otimes k}] = \sum_{ \bm\mu, \bm \nu}  \biggl( \prod_{i=1}^k c_{\mu_i}c_{\nu_i}^* \biggr)\E_t\biggl[\exp\biggl(-i\sum_{i=1}^k(E_{\mu_i}-E_{\nu_i})t\biggr)\biggr]\ketbra{\bm\mu}{\bm\nu}.
\end{equation}
The $k$-th no-resonance condition implies that the phase $\exp\bigl(-i\sum_{i=1}^k(E_{\mu_i}-E_{\nu_i})t\bigr)$ averages to zero unless $\bm \mu$ is a permutation of $\bm \nu$, in which case it equals exactly 1. Thus, by denoting the set of all permutations of $\bm \mu$ by $[\bm \mu]$ (its {\it orbit} over the action of the symmetric group) and defining $p_{\bm \mu}\coloneqq \prod_{i=1}^k |c_{\mu_i}|^2$, we can write
\begin{equation}
\label{eq:timeaverage}
    \E_t[\ketbra{\psi(t)}^{\otimes k}] = \sum_{ \bm\mu} \sum_{\bm\nu\in[\bm\mu] } p_{\bm \mu} \ketbra{\bm\mu}{\bm\nu}    = \sum_{[\bm \mu]} p_{\bm \mu}\sum_{\bm \eta,\bm\nu \in [\bm \mu]}\ketbra{\bm\eta }{\bm\nu},
\end{equation}
 where the second equality is obtained by expanding the sum over all tuples, $\sum_{\bm \mu}$, as the sum over all possible orbits, $\sum_{[\bm\mu]}$, followed by a sum over each element $\bm \eta \in [\bm \mu]$, where we used that $p_{\bm \mu}$ is invariant under permutations of~$\bm \mu$.

On the other hand, we can similarly expand the right-hand side of Eq.~\eqref{eq:lemmaineq} in the same basis,
\begin{equation}
\rho_d^{\otimes k}\sum_{\pi} P_\pi = \sum_{\bm\mu} p_{\bm\mu} \dyad{\bm\mu} \sum_\pi P_\pi = \sum_{[\bm\mu]} p_{\bm\mu}  \sum_{\bm \eta \in [\bm \mu]}\ket{\bm\eta}\sum_\pi\bra{\pi(\bm\eta)}= \sum_{[\bm\mu]} p_{\bm\mu} \frac{k!}{|[\bm\mu]|} \sum_{\bm\eta, \bm\nu \in [\bm\mu]} \ketbra{\bm\eta}{\bm\nu},\label{eq:diagstateexpansion}
\end{equation}
where the last equality follows from the orbit-stabilizer theorem, which guarantees that applying all possible permutations $\pi\in S_k$ to a fixed representative $\bm\eta\in[\bm \mu]$ yields each orbit element exactly $k!/|[\bm\mu]|$ times, where $|[\bm \mu]|\leq k!$ is the size of the orbit. Comparing Eqs.~\eqref{eq:timeaverage} and~\eqref{eq:diagstateexpansion}, we obtain the desired inequality by noting that the coefficients satisfy $p_{\bm \mu}\,\leq p_{\bm \mu}(k!/|[\bm\mu]|)$  and each term $\sum_{\bm \eta,\bm\nu \in [\bm \mu]}\ketbra{\bm\eta }{\bm\nu}$ is positive semidefinite since it is simply the density matrix of the unnormalized pure state $\sum_{\bm \eta\in [\bm \mu]}\ket{\bm\eta }$. \qedhere
\end{proof}
\Cref{thm:no-k-res-temp-avg-symm} will play an integral part in our bound on circuit complexity, as it will allow us to bound the temporal average of the fidelity between a time-evolved state and an arbitrary target state.

\section{Energy dispersion}\label{sec:03}
In this section, we introduce the notion of energy dispersion of a quantum state. We will be able to prove a bound on the circuit complexity for states that satisfy this property.

\begin{definition}[Effective dimension and energy dispersion]
    The {\it effective dimension}~\cite{Linden2009} of a pure state $\ket{\psi}$ under a Hamiltonian $H$ is defined by the participation ratio $D_\psi^\mathrm{eff}=1/\sum_{\mu=1}^{2^n}\abs{\braket{\psi}{\mu}}^4$. We say that $\ket{\psi}$ is energy dispersed under $H$, with constant $\gamma$, if the effective dimension is exponentially large with coefficient $\gamma$,
    \begin{equation}
        D_\psi^\mathrm{eff}\geq 2^{\gamma n}.
    \end{equation} 
\end{definition}
The effective dimension is a key quantity in studies of quantum equilibration~\cite{Linden2009,Pilatowsky-Cameo2025,huang2024randomproduct}. It is expected that low-entangled states will have an exponentially large effective dimension, and hence they will be energy dispersed. We will prove this expectation for the ensembles of random product states introduced in Sec.~\ref{sec:Ensemblesofproductstates}, following Refs.~\cite{huang2020,huang2024randomproduct}.

\subsection{Proposition 2: Typical product states are energy dispersed}
As recognized by Ref.~\cite{huang2020}, one can very easily show that a random product state, sampled from $\mathcal{E}_0$ [defined in Eq.~\eqref{eq:rand-prod-state}], is energy dispersed, for an arbitrary (even nonlocal) Hamiltonian.
\begin{prop}[Random product states are energy dispersed; Lemma 5 of Ref.~\cite{huang2020}]\label{thm:inf-energy-dispersion}
    Let $H$ be an arbitrary Hamiltonian. Then
    \begin{equation}
        \mathbb{E}_{\psi\sim\mathcal{E}_0}[(D_\psi^\mathrm{eff})^{-1}]\leq \left(\frac{2}{3}\right)^n.
    \end{equation}
   
    Further, for $0<\eta<\log(3/2)$, with probability at least $1-\exp(-\eta n)$, a state $\psi$ sampled from $\mathcal{E}_0$ is energy dispersed with constant $\gamma < (\log(3/2)-\eta)/\log 2$. 
\end{prop}

In what follows, we present a generalization of the above result to the finite-temperature ensembles $\mathcal{E}_{\beta,\delta}$, at high temperature. We follow the idea from Ref.~\cite{huang2024randomproduct}, which proved this in the particular case where the ensembles are discretized to contain only stabilizer product states.

We denote by $\mathcal{V}\leq(2r+1)^D$ the number of vertices inside a ball of radius $r$. Under (P2), we have $\mathcal{V}\geq 2Dr+1$. Let us define
\begin{equation}
\label{eq:betacrit}
    \beta_\mathrm{c}^{-1} = 2^{5D+5+6\mathcal{V}}\mathcal{V}(D+1)! r^{2D}/h_\mathrm{min}^2.
\end{equation}
We note that $\beta_\mathrm{c}$ falls below the inverse temperature threshold required for uniform clustering~\cite{Kliesch2014,Capel2025}, so we will use this property below. We now present a proof that typical product states are energy dispersed for inverse temperatures below $\beta_\mathrm{c}$. 
\begin{prop}[Typical product states are energy dispersed; Proposition 2 in the main text]
   \label{thm:energy-dispersion}
    Let $H$ be a Hamiltonian satisfying (P1)-(P3). For any inverse temperature $|\beta|< \beta_\mathrm{c}$, with probability larger than $1-c\exp(-\eta n)$ for $c=4\,(4^{\mathcal{V}+1}e\mathcal{V}/h_\mathrm{min})^{4\mathcal{V}}$ and $0<\eta<\log(3/2)(1-|\beta|/\beta_\mathrm{c})$, a sampled state $\psi\sim \mathcal{E}_{\beta,\delta}$ [defined in Eq.~\eqref{eq:mc-prod-state}] is energy dispersed with $\gamma < (\log(3/2)(1-|\beta|/\beta_\mathrm{c})-\eta)/\log 2$ for any $\delta\geq 2^{2\mathcal{V}+1}$.
\end{prop}

We will prove \cref{thm:energy-dispersion} via a sequence of lemmas below. For this, let us introduce the following notation. For a given vertex $v\in\Lambda$, we define $H_v$ as the sum of all Pauli terms of $H$ whose supports are contained within a ball of radius $r$ centered at $v$. By extensivity, $H_v$ has at least one Pauli term with an infinity norm lower bounded by $h_\mathrm{min}$. Additionally, we denote $\mathcal{E}_\mathrm{Haar}$ as the ensemble of single qubit Haar random states. Finally, we introduce the following temperature scale for the thermal energy: 
\begin{equation}\label{eq:betath}
    \beta_\mathrm{th}^{-1}=\beta_\mathrm{c}^{-1} h_\mathrm{min}/2^{2\mathcal{V}}.
\end{equation}

We first need the following lemma, showing that the thermal energy at high temperature is upper bounded by an extensive quantity.
\begin{lemma}[Upper bound on thermal energy] \label{thm:therm-energy-bound}
    For $\abs{\beta}<\beta_\mathrm{c}$, $E(\beta)$ is bounded by
    \begin{equation}
        |E(\beta)| \leq n h_\mathrm{min}(|\beta|/4\mathcal{V}\beta_\mathrm{th}).
    \end{equation}
\end{lemma}
\begin{proof}
    We follow the argument in the proof of Theorem 5 of Ref.~\cite{huang2024randomproduct}, by leveraging the exponentially decaying correlation of the Gibbs state $g_\beta$, which is guaranteed for $|\beta|<\beta_\mathrm{c}$ due to Ref.~\cite{Kliesch2014}. In what follows, we take $\beta>0$, as the bound for $\beta<0$ directly follows by an identical argument.

    First of all, we write the derivative of $E(\beta)$ in terms of the variance of the energy:
    \begin{equation}
        \frac{dE(\beta)}{d\beta}=-\operatorname{Var}_{g_\beta}(H)\leq 0.
    \end{equation}
    From this, $E(\beta)$ is not positive and can be naturally bounded by
    \begin{equation}
        E(\beta) = -\int_0^\beta dx \operatorname{Var}_{g_x}(H) \geq -\beta \max_{0\leq x\leq\beta}\left[\operatorname{Var}_{g_x}(H)\right].
    \end{equation}
    Thus, we focus on bounding the variance. To this end, let $\tilde{H}_i$ be obtained from $H_i$ by distributing each Pauli term equally among all balls with radius $r$ that contain its support, so that $H=\sum_i\tilde{H}_i$. Then, we have 
    \begin{equation}
        \mathrm{Var}_{g_x}(H)\leq \sum_{i,j}|\mathrm{Cov}_{g_x}(\tilde{H}_i,\tilde{H}_j)|.
    \end{equation}
    Due to Ref.~\cite{Kliesch2014}, $|\mathrm{Cov}_{g_x}(\tilde{H}_i,\tilde{H}_j)|$ is exponentially decaying in $d(B_i,B_j)$:
    \begin{equation}
        |\operatorname{Cov}_{g_\beta}(\tilde{H}_i,\tilde{H}_j)|\leq \frac{4a\|\tilde{H}_i\|_\infty\|\tilde{H}_j\|_\infty}{\log(3)(1-e^{-1/\xi(\beta)})}e^{-d(B_i,B_j)/\xi(\beta)}
    \end{equation}
    for all $d(B_i,B_j)\geq L_0(\beta,a)$, where $B_i$ is the ball of radius $r$ centered at $i$, and $\xi(\beta)$ is the correlation length given by
    \begin{equation}
        \xi(\beta)^{-1} = \abs{\log(\alpha e^{2|\beta|h_\mathrm{max}}(e^{2|\beta|h_\mathrm{max}}-1))}.
    \end{equation}
    Here, $a$ is the number of boundary edges between $\mathrm{supp}(B_i)$ and its complement, $\alpha$ is the growth constant of the lattice, and $L_0(\beta,a)$ is the microscopic length scale given by
    \begin{equation}
        L_0(\beta,a)=\xi(\beta)\abs{\log(\log(3)(1-e^{-1/\xi(\beta)})/a)}.
    \end{equation}
    Let $h_\mathrm{max}$ be the maximum value of $\|\tilde{H}_i\|_\infty$ over all lattice sites. For microscopic fluctuations in $d(B_i,B_j)< L_0(\beta,a)$, the covariance can be upper bounded by
    \begin{equation}
        |\operatorname{Cov}_{g_\beta}(\tilde{H}_i,\tilde{H}_j)|\leq \abs{\tr(g_\beta \tilde{H}_i \tilde{H}_j)} + \abs{\tr(g_\beta \tilde{H}_i)}\abs{\tr(g_\beta \tilde{H}_j)} \leq 2\|\tilde{H}_i\|_\infty\|\tilde{H}_j\|_\infty\leq 2h_\mathrm{max}^2
    \end{equation}
    using H\"older's inequality. Before we proceed further, we introduce 
    \begin{equation}
        L_0^\mathrm{max}=\max_{0\leq x\leq \beta}L_0(x,a).
    \end{equation}
    Additionally, we denote $S_l$ as the surface of a hypercube of side length $l$. Then, the variance can be upper bounded by
    \begin{equation}
        \begin{split}
            \frac{1}{n}\operatorname{Var}_{g_x}(H) 
            &\leq \frac{1}{n}\sum_{i,j}\abs{\operatorname{Cov}_{g_x}(\tilde{H}_i,\tilde{H}_j)}\\
            &\leq \sum_{d(B_i,B_v)< L_0^\mathrm{max}} 2h_\mathrm{max}^2 + \sum_{d(B_i,B_v)\geq L_0^\mathrm{max}}\frac{4ah_\mathrm{max}^2}{\log(3)(1-e^{-1/\xi(x)})}e^{-d(B_i,B_v)/\xi(x)}\\
            &\leq 2h_\mathrm{max}^2(2L_0^\mathrm{max}+4r)^D + \sum_{l\geq L_0^\mathrm{max}}|S_{2l+2r+1}|\frac{4ah_\mathrm{max}^2}{\log(3)(1-e^{-1/\xi(x)})}e^{-l/\xi(x)}
        \end{split}
    \end{equation}
    for some fixed vertex $v\in\Lambda$. Since $|S_{2l+2r+1}|$ is upper bounded by
    \begin{equation}
        |S_{2l+2r+1}| = (2l+2r+1)^D-(2l+2r-1)^D = 2\sum_{q=0}^{D-1}(2l+2r+1)^q (2l+2r-1)^{D-1-q} \leq 2D(2l+2r+1)^{D-1},
    \end{equation}
    we have
    \begin{equation}
        \frac{1}{n}\operatorname{Var}_{g_x}(H) \leq 2^{D+1}h_\mathrm{max}^2(L_0^\mathrm{max}+2r)^D + \frac{8ah_\mathrm{max}^2D}{\log(3)(1-e^{-1/\xi(x)})}\sum_{l\geq L_0^\mathrm{max}} (2l+2r+1)^{D-1}e^{-l/\xi(x)}.
    \end{equation}
    We can further simplify this by bounding the summation as
    \begin{equation}
        \begin{split}
            \sum_{l\geq L_0^\mathrm{max}} (2l+2r+1)^{D-1}e^{-l/\xi(x)}
            &\leq 2^{D-1}(1+r)^{D-1}\sum_{l=0}^\infty(1+l)^{D-1}e^{-l/\xi(x)} \\
            &= 2^{D-1}(1+r)^{D-1}\frac{A_{D-1}(e^{-1/\xi(x)})}{(1-e^{-1/\xi(x)})^D}\\
            &\leq \frac{2^{D-1}(1+r)^{D-1}(D-1)!}{(1-e^{-1/\xi(x)})^D}
        \end{split}
    \end{equation}
    with the Eulerian polynomials $A_n$. Here, we use an identity due to Euler, $\sum_{j=0}^\infty x^j(1+j)^n=A_n(x)(1-x)^{-(n+1)}$, and the fact that $A_{D-1}(e^{-1/\xi(x)})\leq(D-1)!$. This gives
    \begin{equation}
        \frac{1}{n}\operatorname{Var}_{g_x}(H) \leq 2^{D+1}h_\mathrm{max}^2(L_0^\mathrm{max}+2r)^D + \frac{2^{D+2}(1+r)^{D-1}ah_\mathrm{max}^2 D!}{\log(3)(1-e^{-1/\xi(x)})^{D+1}}.
    \end{equation}
    For a local Hamiltonian on a $D$-dimensional hypercubic lattice, the growth constant is upper bounded by $\alpha\leq 2De$~\cite{Kliesch2014}. From this, we can bound the maximum correlation length as    
    \begin{equation}
        \xi_\mathrm{max}^{-1} 
        = \min_{0\leq x\leq\beta}[\xi(x)^{-1}] = \abs{\log(\alpha e^{2|\beta|h_\mathrm{max}}(e^{2|\beta|h_\mathrm{max}}-1))}\geq -\log(2De\times e^{2|\beta|h_\mathrm{max}}(e^{2|\beta|h_\mathrm{max}}-1))\geq \log 2
    \end{equation}
    for the range of $\beta$ we consider. Additionally, since $a\geq 2$ for all $D\geq 1$, we can also bound $L_0^\mathrm{max}$ as
    \begin{equation}
        L_0^\mathrm{max}=\max_{0\leq x\leq \beta}\left[\xi(x)\abs{\log\left(\frac{a}{\log(3) (1-e^{-1/\xi(x)})}\right)}\right]\leq\xi_\mathrm{max}\log\left(\frac{a}{\log(3) (1-e^{-1/\xi_\mathrm{max}})}\right)\leq \log_2(2a).
    \end{equation}
     Since $a$ is the number of boundary edges of a ball of radius $r$, we have $a\leq 2D(2r+1)^{D-1}$. Altogether, we finally get the stated bound on $E(\beta)$:
    \begin{equation}
        E(\beta)\geq -2 n\beta h_\mathrm{max}^2 \left[ 2^D (\log_2(2a)+2r)^D + 2^{2D+2}(r+1)^{D-1} a D! \right] \geq -n\beta h_\mathrm{max}^2 \times 2^{5D+3}Dr^{2D}D!.
    \end{equation}    
    The last inequality comes from the fact that
    \begin{equation}
        \begin{split}
            (\log_2(2a)+2r)^D 
            &\leq ( 2 + \log_2 D + (D-1)\log_2 (2r+1) + 2r )^D \\
            &\leq ( D+1 + \log_2 D + (D+1)r )^D \\
            &\leq (1+1/D)^D D^{D+1} (1+r)^D\\
            &\leq e D^{D+1} (1+r)^D\\
            &\leq e^{D+1} D\cdot D! (1+r)^D\\
            &\leq 2^{2D+2} D\cdot D! (1+r)^D.
        \end{split}
    \end{equation}
    This together with the fact that $h_\mathrm{max}\leq 4^\mathcal{V}$ gives the stated bound.
\end{proof}

For the lemmas below, we introduce a set $R$ of vertices such that the supports of $\{H_i\}_{i\in R}$ have distances at least $r+1$. The support of each Pauli term in the Hamiltonian intersects with at most one vertex in $R$. We set the number $m$ of vertices in $R$ as 
    \begin{align}\label{eq:m-def}
            m &= \lceil m_* \rceil &\text{with}&&
            m_* &= \frac{|E(\beta)|}{h_\mathrm{min}-w}.
    \end{align}
    Here, $w>0$ is defined as $w=h_\mathrm{min}/x_*$, where $x_*$ is the greater solution of 
    \begin{equation}\label{eq:energy-width-def}
        xe^{-x}=\frac{h_\mathrm{min}}{4^{\mathcal{V}+1}e\mathcal{V}}.
    \end{equation}

    We briefly justify that this choice of $m$ is well-defined and consistent with (P1)-(P3). First, $m_*$ is non-negative since we have $h_\mathrm{min}/w>2$ given that $\mathcal{V}\geq 2Dr+1$ and $h_\mathrm{min}\leq 1$. Additionally, the definitions of $\beta_\mathrm{c}$ and $\beta_\mathrm{th}$ in Eqs.~\eqref{eq:betacrit} and~\eqref{eq:betath} give 
    \begin{equation}
        \beta_\mathrm{th}/\beta_\mathrm{c}=2^{2\mathcal{V}}/h_\mathrm{min}\geq 2^{2\mathcal{V}}\geq (2+3r)^D.
    \end{equation}
    Then, due to the upper bound on the thermal energy established in Lemma~\ref{thm:therm-energy-bound}, for $|\beta|<\beta_\mathrm{c}$, we have $|\beta|<\beta_\mathrm{th}/(2+3r)^D$ and
    \begin{equation}
        m_* \leq \frac{2|E(\beta)|}{h_\mathrm{min}} \leq n(|\beta|/\beta_\mathrm{th}) \leq n/(2+3r)^D.
    \end{equation}
    Since each Pauli term of $H$ has diameter at most $r$, a sphere-packing of $\{\mathrm{supp}(H_i)\}_{i\in R}$ implies that the maximum achievable value of $m$ is strictly greater than $n/(2+3r)^D$. Therefore, we can safely choose $m=\lceil m_*\rceil$.

Using the set $R$, we construct an ensemble $\mathcal{E}_\mathcal{R}$ of product states in $\mathcal{R}=\bigcup_{i\in R}\mathrm{supp}(H_i)$,
    \begin{equation}
        \label{eq:ERdef}\mathcal{E}_\mathcal{R}\coloneqq\{\psi \in\mathcal{E}_\mathrm{Haar}^{\otimes |\mathcal{R}|} : \abs{\expval{H_\mathcal{R}}{\psi} - E(\beta)} < \frac{\zeta}{2}\sqrt{n}+2h_\mathrm{min}\}
    \end{equation}
    with the measure inherited from $\mathcal{E}_\mathrm{Haar}^{\otimes |\mathcal{R}|}$, 
    with $H_\mathcal{R}=\sum_{i\in R}H_i$, and where we will select $\zeta\in [\sqrt{8/3}\, 4^{\mathcal{V}},\delta-2]$ below (this interval is nonempty whenever $\delta\geq 2^{2\mathcal{V}+1}$).

Our next lemma lower bounds the number of states in the ensemble $\mathcal{E}_\mathcal{R}$ at high temperatures.
\begin{lemma}\label{thm:E_R-prob-bound}
    For $|\beta|< \beta_\mathrm{c}$, and the ensemble $\mathcal{E}_\mathcal{R}$ defined in Eq.~\eqref{eq:ERdef}, we have
    \begin{equation}\label{eq:E_R-prob-bound}
        \operatornamewithlimits{Pr}_{\psi\sim\mathcal{E}_\mathrm{Haar}^{\otimes n}}\left[\psi_\mathcal{R}\in\mathcal{E}_\mathcal{R}\right] > \left(1-\frac{4h_\mathrm{min}^2}{\zeta^2}\right)\left(\frac{h_\mathrm{min}}{4^{\mathcal{V}+1}e\mathcal{V}}\right)^{(|\beta|/\beta_\mathrm{th})n+4\mathcal{V}}.
    \end{equation}
\end{lemma}
\begin{proof}
    First of all, if we have $m=0$ in Eq.~\eqref{eq:m-def}, \textit{i.e.}, $E(\beta)=0$, then $H_\mathcal{R}=0$, and $\mathcal{E}_\mathcal{R}$ is identical to $\mathcal{E}_\mathrm{Haar}^{\otimes |\mathcal{R}|}$. Thus, in this case, the stated inequality holds.
    
    Next, let us consider the case when $m>0$. For $w$ defined in Eq.~\eqref{eq:energy-width-def}, we construct two intervals: $I_w^-=[E(\beta)/m-w,E(\beta)/m]$ and $I_w^+=[E(\beta)/m,E(\beta)/m+w]$. For each vertex $i\in R$, let $\mathcal{E}_{i,w}^{\pm}$ be the ensembles of product states in $R_i\coloneq\mathrm{supp}(H_i)$ having $\langle H_i\rangle\in I_w^\pm$ and induced from $\mathcal{E}_\mathrm{Haar}^{\otimes |R_i|}$. These ensembles are nonempty as we have $|E(\beta)/m\pm w|<h_\mathrm{min}$ due to the definition of $m$ in Eq.~\eqref{eq:m-def} and $h_\mathrm{min}/w>2$. The ensemble averages of $\langle H_i\rangle$ over $\mathcal{E}_{i,w}^+$ or $\mathcal{E}_{i,w}^-$ are also in $I_w^+$ or $I_w^-$, respectively. Since changing the sign of each ensemble can only shift the mean energy up to $2w$, there exists an ensemble of states $\mathcal{E}_{R,w}=\bigotimes_{i\in R}\mathcal{E}_{i,w}$ with $\mathcal{E}_{i,w}=\mathcal{E}_{i,w}^+$ or $\mathcal{E}_{i,w}^-$ whose mean energy is in $[E(\beta)-2w,E(\beta)+2w]$. By construction, the energy fluctuation of each $\mathcal{E}_{i,w}$ is upper bounded by
    \begin{equation}
        \operatornamewithlimits{Var}_{\psi\sim\mathcal{E}_{i,w}}(\expval{H_i}{\psi}) \leq w^2.
    \end{equation}
    Additionally, since $\mathcal{E}_{R,w}$ is a product state ensemble, and since for distinct $i,j\in R$, $H_i$ and $H_j$ act on disjoint regions, the random variables $\expval{H_i}{\psi}$ and $\expval{H_j}{\psi}$ are independent. Thus, the total variance is additive:
    \begin{equation}
        \operatornamewithlimits{Var}_{\psi\sim\mathcal{E}_{R,w}}(\expval{H_\mathcal{R}}{\psi}) \leq m w^2.
    \end{equation}    
    Then, due to Markov's inequality, we have
    \begin{equation}\label{eq:eps-ball-Markov}
        \operatornamewithlimits{Pr}_{\psi\sim\mathcal{E}_{R,w}}\left[\abs{\expval{H_\mathcal{R}}{\psi} - E(\beta)}\geq \frac{\zeta}{2}\sqrt{n}+2w\right]\leq \frac{4w^2 m}{\zeta^2 n}.
    \end{equation}
    
    We note that the probability in Eq.~\eqref{eq:E_R-prob-bound} can be obtained by estimating the volume of each $\mathcal{E}_{i,w}$ relative to the volume of the product states in $\mathrm{supp}(H_i)$. We therefore bound this volume below. To this end, we expand $H_i$ into a sum of Pauli terms:
    \begin{equation}
        H_i = \sum_j c_j P_{\alpha_j}\qquad\text{with}\qquad \alpha_j\in\{I,X,Y,Z\}^{|R_i|}.
    \end{equation}    
    Let $\ket{\phi}$ be a product state in $R_i$, $\bigotimes_{j\in R_i}\ket{\phi_j}$. We can parametrize each single qubit state $\ket{\phi_j}$ by a Bloch vector $\vec{r}_j=[1,r_X^{(j)},r_Y^{(j)},r_Z^{(j)}]$. Furthermore, let $\{\vec{r}_{*,j}\}$ be the set of Bloch vectors that corresponds to a product state having $\langle H_i\rangle=E(\beta)/m\pm w/2$, with the sign chosen according to whether $\mathcal{E}_{i,w}=\mathcal{E}_{i,w}^+$ or $\mathcal{E}_{i,w}^-$. We introduce $f(\{\vec{r}_j\})=\expval{H_i}{\phi}$. This can be expanded as
    \begin{equation}
        f(\{\vec{r}_j\}) = \sum_j c_j \prod_{l\in\mathrm{supp}(P_{\alpha_j})} (\vec{r}_l)_{(\alpha_j)_l}.
    \end{equation}
    We can then construct an inequality for the Lipschitz continuity of $f$ as follows: 
    \begin{equation}
        \begin{split}
            |f(\{\vec{r}_i\})-f(\{\vec{r}_{*,i}\})|
            &\leq \sum_j \abs{c_j} \abs{\prod_{l\in\mathrm{supp}(P_{\alpha_j})}(\vec{r}_l)_{(\alpha_j)_l}-\prod_{l\in\mathrm{supp}(P_{\alpha_j})}(\vec{r}_{*,l})_{(\alpha_j)_l}}\\
            &\leq \sum_j\abs{c_j}\sum_{l\in\mathrm{supp}(P_{\alpha_j})}\abs{(\vec{r}_l)_{(\alpha_j)_l}-(\vec{r}_{*,l})_{(\alpha_j)_l}}\\
            &= \sum_j\abs{c_j}\sum_{l\in\mathrm{supp}(P_{\alpha_j})}\abs{\hat{e}_{(\alpha_j)_l}\cdot(\vec{r}_l-\vec{r}_{*,l})}\\
            &\leq  \sum_j\abs{c_j}\sum_{l\in\mathrm{supp}(P_{\alpha_j})}\|\vec{r}_l-\vec{r}_{*,l}\|_2\\
            &\leq 3\cdot 4^{\mathcal{V}-1}\mathcal{V} r_\mathrm{max}
        \end{split}
    \end{equation}
    with $r_\mathrm{max}=\max_{l\in R_i}\|\vec{r}_l-\vec{r}_{*,l}\|_2$. 
    
    Now, for each $l\in R_i$, let $\theta_l$ denote the geodesic angle between $\vec{r}_l$ and $\vec{r}_{*,l}$. Then, we get $r_\mathrm{max}=2\sin(\theta_l/2)$. Thus, if for every $l\in R_i$, $\vec{r}_l$ lies within the spherical cap on the Bloch sphere with pole $\vec{r}_{*,l}$ and opening angle 
    \begin{equation}
        \theta_l\leq\theta_\mathrm{max}\coloneq2\sin^{-1}\left( \frac{w}{3\cdot 4^{\mathcal{V}}\mathcal{V}} \right),
    \end{equation}
    then $r_\mathrm{max}\leq w/(6\cdot 4^{\mathcal{V}-1}\mathcal{V})$. This gives $|f(\{\vec{r}_i\})-f(\{\vec{r}_{*,i}\})|\leq w/2$, and hence $\phi\in\mathcal{E}_{i,w}$. The normalized area of such a spherical cap is 
    \begin{equation}
        \frac{\mathrm{Area}}{4\pi} = \sin^2(\theta_\mathrm{max}/2) = \sin^2\left(\sin^{-1}\left(\frac{w}{3\cdot 4^{\mathcal{V}}\mathcal{V}}\right)\right) = \left(\frac{w}{3\cdot4^{\mathcal{V}}\mathcal{V}}\right)^2. 
    \end{equation}
    Since $|\mathrm{supp}(H_i)|$ is at most $\mathcal{V}$, the probability of having $\langle H_i\rangle$ in either $I_w^+$ or $I_w^-$ is lower bounded by
    \begin{equation}
        \operatornamewithlimits{Pr}_{\psi\sim\mathcal{E}_\mathrm{Haar}^{\otimes |R_i|}}\left[\expval{H_i}{\psi}\in I_w^\pm\right]\geq \left(\frac{w}{3\cdot4^{\mathcal{V}}\mathcal{V}}\right)^{2\mathcal{V}}\geq \left(\frac{w}{4^{\mathcal{V}+1}\mathcal{V}}\right)^{2\mathcal{V}}.
    \end{equation}
    This together with Eq.~\eqref{eq:eps-ball-Markov} gives
    \begin{equation}
        \begin{split}
            &\operatornamewithlimits{Pr}_{\psi\sim\mathcal{E}_\mathrm{Haar}^{\otimes n}}\left[\abs{\expval{H_\mathcal{R}}{\psi} - E(\beta)} < \frac{\zeta}{2}\sqrt{n}+2w \right]\\
            &\geq \operatornamewithlimits{Pr}_{\psi\sim\mathcal{E}_\mathrm{Haar}^{\otimes n}}\left[\psi_\mathcal{R}\in \mathcal{E}_{R,w}\right]\operatornamewithlimits{Pr}_{\psi\sim\mathcal{E}_\mathrm{Haar}^{\otimes n}}\left[\abs{\expval{H_\mathcal{R}}{\psi} - E(\beta)}< \frac{\zeta}{2}\sqrt{n}+2w \,\mid\, \psi_\mathcal{R}\in\mathcal{E}_{R,w}\right]\\
            &>\left(1-\frac{4w^2 m}{\zeta^2 n}\right) \left(\frac{w}{4^{\mathcal{V}+1}\mathcal{V}}\right)^{2m\mathcal{V}}.
        \end{split}
    \end{equation}
    From Eq.~\eqref{eq:energy-width-def}, $\mathcal{V}\geq 2D+1$, and $h_\mathrm{min}\leq 1$, we have $2<h_\mathrm{min}/w\leq 2\log(4^{\mathcal{V}+1}e\mathcal{V}/h_\mathrm{min})$. These together with the definitions of $m$ and $m_*$ in Eq.~\eqref{eq:m-def} give 
    \begin{equation}
        \begin{split}
            \operatornamewithlimits{Pr}_{\psi\sim\mathcal{E}_\mathrm{Haar}^{\otimes n}}\left[\psi_\mathcal{R}\in\mathcal{E}_\mathcal{R}\right] 
            &> \left(1-\frac{4w^2 m}{\zeta^2 n}\right) \left(\frac{w}{4^{\mathcal{V}+1}\mathcal{V}}\right)^{2\mathcal{V}(m_*+1)}\\
            &\geq \left(1-\frac{4h_\mathrm{min}^2}{\zeta^2}\right)\left(\frac{h_\mathrm{min}}{4^{\mathcal{V}+1}e\mathcal{V}}\right)^{4\mathcal{V}|E(\beta)|/h_\mathrm{min}}e^{2\mathcal{V}(1-h_\mathrm{min}/w)}\\
            &\geq \left(1-\frac{4h_\mathrm{min}^2}{\zeta^2}\right)\left(\frac{h_\mathrm{min}}{4^{\mathcal{V}+1}e\mathcal{V}}\right)^{4\mathcal{V}(|E(\beta)|/h_\mathrm{min}+1)}\\
            &\geq \left(1-\frac{4h_\mathrm{min}^2}{\zeta^2}\right)\left(\frac{h_\mathrm{min}}{4^{\mathcal{V}+1}e\mathcal{V}}\right)^{(|\beta|/\beta_\mathrm{th})n+4\mathcal{V}},
        \end{split}
    \end{equation}
    where the last inequality comes from Lemma~\ref{thm:therm-energy-bound}.
\end{proof}

Finally, we present our last lemma, upper bounding the average inverse effective dimension at high temperature. 
We note that \cref{thm:energy-dispersion} follows directly from Lemma~\ref{thm:exp-supp-IPR} and Markov's inequality.
    
\begin{lemma}[Exponential suppression of average inverse effective dimension]\label{thm:exp-supp-IPR}
    Let $H$ be a Hamiltonian satisfying (P1)-(P3). For a pure state $\ket{\psi}$, let $D^\mathrm{eff}_\psi$ be the effective dimension of $\ket{\psi}$ with respect to $H$. Then, it follows that 
    \begin{equation}
        \mathbb{E}_{\psi\sim\mathcal{E}_{\beta,\delta}}\left[(D^\mathrm{eff}_\psi)^{-1}\right]<4\cdot\left(\frac{4^{\mathcal{V}+1}e\mathcal{V}}{h_\mathrm{min}}\right)^{4\mathcal{V}}\left(\frac{2}{3}\right)^{(1-|\beta|/\beta_\mathrm{c})n}   
    \end{equation}
    for any $|\beta|< \beta_\mathrm{c}$ with $ \delta \geq 2^{2\mathcal{V}+1}$.
\end{lemma}
\begin{proof}
    We can derive the stated upper bound on the averaged $1/D_\psi^\mathrm{eff}$ using the set $R$. To this end, let us first upper bound it in terms of the probability of having $\psi\in\mathcal{E}_{\beta,\delta}$ when $\psi$ is sampled from $\mathcal{E}_\mathrm{Haar}^{\otimes n}$. By the definition of $\mathcal{E}_{\beta,\delta}$, $\mathbb{E}_{\psi\sim\mathcal{E}_{\beta,\delta}}[1/D^\mathrm{eff}_\psi]$ is the conditional expectation value of $1/D^\mathrm{eff}_\psi$ given $\psi\in\mathcal{E}_{\beta,\delta}$, which can be written as
    \begin{equation}
        \mathbb{E}_{\psi\sim\mathcal{E}_{\beta,\delta}}\left[\frac{1}{D^\mathrm{eff}_\psi}\right] = \frac{\mathbb{E}_{\psi\sim\mathcal{E}_\mathrm{Haar}^{\otimes n}} \left[\mathbbm{1}_{\psi\in\mathcal{E}_{\beta,\delta}}/{D^\mathrm{eff}_\psi}\right] }{ \operatornamewithlimits{Pr}_{\psi\sim\mathcal{E}_\mathrm{Haar}^{\otimes n}}\left[\psi\in\mathcal{E}_{\beta,\delta}\right]},
    \end{equation}
    where $\mathbbm{1}_{\psi\in\mathcal{E}_{\beta,\delta}}$ is the indicator function that gives one if $\psi\in\mathcal{E}_{\beta,\delta}$ and zero otherwise. Since $\mathbbm{1}_{\psi\in\mathcal{E}_{\beta,\delta}}\leq 1$ for any state $\psi$, we can upper bound the average of $1/D^\mathrm{eff}_\psi$ over $\mathcal{E}_{\beta,\delta}$ as
    \begin{equation}\label{eq:ave-eff-dim-inter-bound}
        \mathbb{E}_{\psi\sim\mathcal{E}_{\beta,\delta}}\left[\frac{1}{D^\mathrm{eff}_\psi} \right]
        \leq \frac{\mathbb{E}_{\psi\sim\mathcal{E}_\mathrm{Haar}^{\otimes n}} \left[{1}/{D^\mathrm{eff}_\psi}\right] }{ \operatornamewithlimits{Pr}_{\psi\sim\mathcal{E}_\mathrm{Haar}^{\otimes n}}\left[\psi\in\mathcal{E}_{\beta,\delta}\right]} \leq \frac{\left({2}/{3}\right)^n }{ \operatornamewithlimits{Pr}_{\psi\sim\mathcal{E}_\mathrm{Haar}^{\otimes n}}\left[\psi\in\mathcal{E}_{\beta,\delta}\right]},
    \end{equation}
    where the last inequality comes from \Cref{thm:inf-energy-dispersion}.
    
    Next, let us lower bound the probability of having $\psi\in\mathcal{E}_{\beta,\delta}$. Then, since we have $h_\mathrm{min}\leq 1$, $n\geq 1$, and $\zeta \leq \delta-2$ [defined below Eq.~\eqref{eq:ERdef}], we can lower bound the probability using the triangle inequality as
    \begin{equation}
        \operatornamewithlimits{Pr}_{\psi\sim\mathcal{E}_{\mathrm{Haar}}^{\otimes n}}[\psi\in\mathcal{E}_{\beta,\delta}]\geq \operatornamewithlimits{Pr}_{\psi\sim\mathcal{E}_{\mathrm{Haar}}^{\otimes n}}\left[\psi_\mathcal{R}\in\mathcal{E}_\mathcal{R} \text{ and } |\expval{H_{\mathcal{R}^c}}{\psi}|<\zeta\sqrt{n}/2\right]
    \end{equation}
    with $H_{\mathcal{R}^c}=H-H_\mathcal{R}$. By the definition of conditional probability, this becomes:
    \begin{equation}\label{eq:prod-of-R-Haar}
        \operatornamewithlimits{Pr}_{\psi\sim\mathcal{E}_{\mathrm{Haar}}^{\otimes n}}[\psi\in\mathcal{E}_{\beta,\delta}]\geq\operatornamewithlimits{Pr}_{\psi\sim\mathcal{E}_{\mathrm{Haar}}^{\otimes n}}\left[\psi_\mathcal{R}\in\mathcal{E}_\mathcal{R}\right] \operatornamewithlimits{Pr}_{\psi\sim\mathcal{E}_{\mathrm{Haar}}^{\otimes n}}\left[ \abs{\expval{H_{\mathcal{R}^c}}{\psi}}<\zeta\sqrt{n}/2 \,\mid\, \psi_{\mathcal{R}}\in \mathcal{E}_\mathcal{R}\right].
    \end{equation}
    Here, $\psi_\mathcal{R}$ is the reduced density matrix of $\psi$ in $\mathcal{R}=\bigcup_{i\in R}\mathrm{supp}(H_i)$, which is a product state. The first term on the right-hand side is bounded by Lemma~\ref{thm:E_R-prob-bound}, so here we focus on bounding the second term on the right-hand side. 
    Because $\psi_{\mathcal{R}}$ is independent of $\psi_{\mathcal{R}^c}$ over the full product Haar ensemble, the latter term is equal to
    \begin{equation}
        \begin{split}
            \operatornamewithlimits{Pr}_{\psi_{\mathcal{R}}\sim \mathcal{E}_\mathcal{R},\psi_{\mathcal{R}^c}\sim\mathcal{E}_{\mathrm{Haar}}^{\otimes|\mathcal{R}^c|}}\left[ \abs{\expval{H_{\mathcal{R}^c}}{\psi}}<\zeta\sqrt{n}/2\right]
            &=\mathbb{E}_{\psi_\mathcal{R}\sim \mathcal{E}_\mathcal{R}} \operatornamewithlimits{Pr}_{\psi_{\mathcal{R}^c}\sim\mathcal{E}_{\mathrm{Haar}}^{\otimes |\mathcal{R}^c|}}\left[ \abs{\expval{H_{\mathcal{R}^c}}{\psi}}<\zeta\sqrt{n}/2\right]\\
            &=\mathbb{E}_{\phi\sim \mathcal{E}_\mathcal{R}} \operatornamewithlimits{Pr}_{\psi\sim\mathcal{E}_{\mathrm{Haar}}^{\otimes n}}\left[ \abs{\expval{H_{\mathcal{R}^c}}{\psi}}<\zeta\sqrt{n}/2 \,\mid\, \psi_\mathcal{R}=\phi\right].    
        \end{split}
    \end{equation}
    Since every Pauli term in $H_{\mathcal{R}^c}$ has nontrivial support on $\mathcal{R}^c$, using Markov's inequality and the fact that $\tr(H_{\mathcal{R}^c})=0$, the above is bounded by
    \begin{equation}
        > 1-
        \frac{4}{\zeta^2 n }\mathbb{E}_{\phi\sim \mathcal{E}_\mathcal{R}} \operatornamewithlimits{Var}_{\psi\sim \mathcal{E}_\mathrm{Haar}^{\otimes n}}\left[ \expval{H_{\mathcal{R}^c}}{\psi} \,\mid\, \psi_{\mathcal{R}}=\phi\right].
    \end{equation}
    Now, let us upper bound the conditional variance on the right-hand side. To this end, we introduce the projection map $\mathrm{proj}_A(\cdot)$ which replaces all Pauli matrices of a Pauli string $P$ outside the region $A$ with identity matrices. We then define the set $S_{\mathcal{R}^c}$ of Pauli strings appearing in $H_{\mathcal{R}^c}$ projected onto $\mathcal{R}^c$: $S_{\mathcal{R}^c}=\{\mathrm{proj}_{\mathcal{R}^c}(P)\,|\,P\in H_{\mathcal{R}^c}\}$. Using this, we can expand $\expval{H_{\mathcal{R}^c}}{\psi}$ as
    \begin{equation}
        \begin{split}
            \expval{H_{\mathcal{R}^c}}{\psi} 
            &= \sum_{P\in H_{\mathcal{R}^c}} c_P \expval{\mathrm{proj}_\mathcal{R}(P)}{\psi_\mathcal{R}} \expval{\mathrm{proj}_{\mathcal{R}^c}(P)}{\psi_{\mathcal{R}^c}}\\
            &= \sum_{Q\in S_{\mathcal{R}^c}}\expval{Q}{\psi_{\mathcal{R}^c}}\sum_{P\in H_{\mathcal{R}^c},\mathrm{proj}_{\mathcal{R}^c}(P)=Q} c_P \expval{\mathrm{proj}_\mathcal{R}(P)}{\psi_\mathcal{R}}.
        \end{split}
    \end{equation}
    For each Pauli string $Q\in S_{\mathcal{R}^c}$, we choose a vertex $v\in\mathrm{supp}(Q)$. Then, any Pauli string $P$ in the Hamiltonian giving $\mathrm{proj}_{\mathcal{R}^c}(P)=Q$ is supported in $B_r(v)$ by the locality of the Hamiltonian. Therefore, the number $N_Q$ of such Pauli strings is upper bounded by $4^\mathcal{V}$. With $\psi_{\mathcal{R}}=\phi$, this then implies
    \begin{equation}
        \begin{split}
            \operatornamewithlimits{Var}_{\psi\sim \mathcal{E}_\mathrm{Haar}^{\otimes n}}\left[ \expval{H_{\mathcal{R}^c}}{\psi} \,\mid\, \psi_{\mathcal{R}}=\phi\right]
            &\leq \sum_{Q\in S_{\mathcal{R}^c}} 3^{-\abs{\mathrm{supp}(Q)}} \left(\sum_{P\in H_{\mathcal{R}^c},\mathrm{proj}_{\mathcal{R}^c}(P)=Q} c_P \expval{\mathrm{proj}_\mathcal{R}(P)}{\psi_\mathcal{R}}\right)^2\\
            &\leq N_Q \sum_{Q\in S_{\mathcal{R}^c}} 3^{-\abs{\mathrm{supp}(Q)}} \sum_{P\in H_{\mathcal{R}^c},\mathrm{proj}_{\mathcal{R}^c}(P)=Q} c_P^2 \expval{\mathrm{proj}_\mathcal{R}(P)}{\psi_\mathcal{R}}^2\\
            &\leq \frac{1}{3}4^\mathcal{V}\sum_{Q\in S_{\mathcal{R}^c}}\sum_{P\in H_{\mathcal{R}^c},\mathrm{proj}_{\mathcal{R}^c}(P)=Q} c_P^2\\
            &\leq \frac{n}{3}4^{2\mathcal{V}}.
        \end{split}
    \end{equation}     
    Here, we use the moments of Pauli matrices for single-qubit Haar-random states in the first line, the Cauchy-Schwarz inequality in the second line, $\abs{\mathrm{supp}(Q)}\geq 1$ for $Q\in S_{\mathcal{R}^c}$ in the third line, and the boundedness of the Hamiltonian in the last line.
    
    Altogether, the probability of sampling a state in $\mathcal{E}_{\beta,\delta}$ from $\mathcal{E}_\mathrm{Haar}^{\otimes n}$ is bounded by
    \begin{equation}
    \label{eq:ineqq}
        \operatornamewithlimits{Pr}_{\psi\sim\mathcal{E}_\mathrm{Haar}^{\otimes n}}\left[\psi\in\mathcal{E}_{\beta,\delta}\right] 
        > \left(1-\frac{4^{2\mathcal{V}+1}}{3\zeta^2}\right)\left(1-\frac{4h_\mathrm{min}^2}{\zeta^2}\right)\left(\frac{h_\mathrm{min}}{4^{\mathcal{V}+1}e\mathcal{V}}\right)^{(|\beta|/\beta_\mathrm{th})n+4\mathcal{V}},
    \end{equation}
    where the first term on the right-hand side of Eq.~\eqref{eq:prod-of-R-Haar} is lower bounded using Lemma~\ref{thm:E_R-prob-bound}. Thus, we see that since $\zeta \geq \sqrt{8/3}\, 4^{\mathcal{V}}$, $h_\mathrm{min}\leq 1$, and $\mathcal{V}\geq 2D+1$, with the definitions of $\beta_\mathrm{c}$ and $\beta_\mathrm{th}$ in Eqs.~\eqref{eq:betacrit} and~\eqref{eq:betath},
    we can finally get the stated upper bound on the averaged $1/D^\mathrm{eff}_\psi$ using Eq.~\eqref{eq:ave-eff-dim-inter-bound} as
    \begin{equation}
        \mathbb{E}_{\psi\sim\mathcal{E}_{\beta,\delta}}\left[\frac{1}{D^\mathrm{eff}_\psi}\right]<4\cdot\left(\frac{4^{\mathcal{V}+1}e\mathcal{V}}{h_\mathrm{min}}\right)^{4\mathcal{V}}\left(\frac{2}{3}\right)^{(1-|\beta|/\beta_\mathrm{c})n}.\qedhere
    \end{equation}
\end{proof}

\subsection{Temporal moments of fidelity}
The key property of energy dispersed states that we will leverage for our bound on circuit complexity is captured in the following lemma.
\begin{lemma}\label{thm:k-th-fidelity-bound} Let $H$ have spectral ergodicity and $\psi$ be energy dispersed with constant $\gamma$. Then for any subsystem $A$ with $a$ qubits, any state $\sigma$ of $A$, and any integer $k>0$, the $k$-th temporal moment of the fidelity between $\sigma$ and $\psi_A(t)$ can be bounded as
    \begin{equation}
        \mathbb{E}_t[F(\psi_A(t),\sigma)^k]\leq 2^{k((1-\gamma/2)n-a)} \frac{(2^{n-a}+k-1)!}{(2^{n-a}-1)!}.
    \end{equation}
\end{lemma}
\begin{proof}
The fidelity can be upper bounded by the trace between $\psi_A(t)$ and $\sigma$ as
\begin{equation}
    F(\psi_A(t),\sigma)\leq r\tr(\psi_A(t)\sigma) = r\expval{\sigma}{\psi(t)},
\end{equation}
where $r$ is the rank of $\psi_A(t)$, which is upper bounded by $2^{n-a}$. Then, the $k$-th moments of the fidelity can be upper bounded by
\begin{equation}
\label{eq:sumofpermswithsigmadiag}
    \mathbb{E}_{t}[F(\psi_A(t),\sigma)^{k}]\leq 2^{(n-a)k}\mathbb{E}_{t}[\expval{\sigma}{\psi(t)}^k]
    \leq 2^{(n-a)k} \sum_{\pi\in S_k} \tr(\sigma^{\otimes k}\rho_d^{\otimes k} P_\pi),
\end{equation}
where the second inequality is \Cref{thm:no-k-res-temp-avg-symm}, $\rho_d$ is the energy diagonal state corresponding to $\ket{\psi}$ and $P_\pi$ is the unitary operator which permutes the $k$ replicas according to the permutation $\pi$. 
Each permutation of $k$ elements can be decomposed into disjoint cycles. The trace term on the right of Eq.~\eqref{eq:sumofpermswithsigmadiag} only depends on the cycle structure of the permutation $\pi\in S_k$. The decomposition into cycles of any permutation is characterized by a list of numbers $k_1,\dots,k_k\geq0$ with $\sum_{l=1}^k l k_l=k$, where $k_l$ is the number of cycles of length $l$ appearing in the decomposition. One can see that $\tr((\sigma \rho_d)^{\otimes k} P_\pi)=\prod_{l=1}^k \tr((\sigma \rho_d)^l)^{k_l}$. To bound $\tr((\sigma \rho_d)^l)$, let $p_\mathrm{max}$ be the maximum eigenvalue of $\rho_d$. Then, it follows that 
\begin{equation}        
    \tr(\left(\sigma \rho_d\right)^l) = \tr(\left(\sigma^{1/2} \rho_d \sigma^{1/2}\right)^l)
    \leq p_\mathrm{max}^l\tr(\sigma^l)
    \leq p_\mathrm{max}^l2^{n-a},
\end{equation}
where we used the positive semidefiniteness of $\sigma$ in the first equality, Weyl's inequality~\cite[Corollary III.2.3]{Bhatia_1997} in the second inequality, and the fact that $\sigma$ only acts on $A$ in the last inequality. Since $\psi$ is energy dispersed with constant $\gamma$, we have $p_\mathrm{max}^2 \leq \tr(\rho_d^2) \leq 2^{-\gamma n}$.

With the above, we can replace the summation over all permutations in Eq.~\eqref{eq:sumofpermswithsigmadiag} by a sum over all possible cycle structures $(k_1,\dots,k_k)$ of the symmetric group:
\begin{equation}\label{eq:psi-sigma-overlap-bound}
    \begin{split}
        \mathbb{E}_{t}[\expval{\sigma}{\psi(t)}^k] &\leq \sum_{(k_1,\dots,k_k)} \frac{k!}{\prod_{l=1}^k l^{k_l}k_l!}\prod_{l=1}^k \tr(\left(\sigma \rho_d\right)^l)^{k_l}\\
        &\leq \sum_{(k_1,\dots,k_k)} \frac{k!}{\prod_{l=1}^k l^{k_l}k_l!}\prod_{l=1}^k p_\mathrm{max}^{l k_l} 2^{(n-a)k_l}\\
        &= p_\mathrm{max}^k \sum_{\pi\in S_k} 2^{(n-a)c(\pi)} \\
        &\leq 2^{((1-\gamma/2)n-a)k}\times 2^{(a-n)k} \frac{(2^{n-a}+k-1)!}{(2^{n-a}-1)!},
    \end{split}
\end{equation}
where we used the energy dispersion, $p_\mathrm{max}\leq 2^{-\gamma n/2}$, and $c(\pi)$ is the number of cycles of $\pi$. Here, we also used the fact that $\sum_{l=1}^k l k_l=k$ and the following identity for Stirling numbers of the first kind ${k \brack l}$:
\begin{equation}
    \sum_{\pi\in S_k} x^{c(\pi)} = \sum_{l=1}^k {k \brack l} x^l = \frac{(x+k-1)!}{(x-1)!}. 
\end{equation}
This together with Eq.~\eqref{eq:sumofpermswithsigmadiag} gives the stated bound.
\end{proof}

\section{Exponential late-time complexity}
\label{sec:04}

In this section we present the full proofs of Theorems 1 and 3 in the main text, restated here in fully rigorous form.

\subsection{Proof of Theorem 1}
\begin{theorem}[Exponential late-time complexity; Theorem 1 in the main text]\label{th:theoremS1}
    Let $|\beta|< \beta_\mathrm{c}$ [defined in Eq.~\eqref{eq:betacrit}], and define $\gamma=\frac{1}{3}(1-|\beta|/\beta_\mathrm{c})$. For almost all\footnote{\label{nt:1}{\it Almost all} means that the set of coefficient tuples $(c_P)_{P\in\mathcal{P}_r}$ which produce a Hamiltonian $H$ [Eq.~\eqref{eq:paliexpansion}] which violates our theorem has zero measure in $[-1,1]^{|\mathcal{P}_r|}$, where $\mathcal{P}_r$ is the set of Pauli strings whose support has diameter at most $r$.} Hamiltonians $H$ which satisfy properties (P1)-(P3), with probability larger than $1-c \exp(-\tfrac{\gamma}{2} n)$ over the choice of the initial state $\psi\sim \mathcal{E}_{\beta,\delta}$ (or $\mathcal{E}_{0}$ at infinite temperature), we have 
\begin{equation}
\label{eq:theorem2}
\operatornamewithlimits{Pr}_{t}\Big[\mathcal{C}_\varepsilon(\psi_A(t))\leq  \exp(\frac{\gamma}{8}n)\Big]\leq \exp(-\exp(\frac{\gamma}{8} n)),
\end{equation}
for any $n\geq n_*\coloneqq\frac{16}{\gamma}\big(7+\log\tfrac{2}{\gamma(1-\varepsilon)}\big)$ and any subsystem $A$ with size $|A|>(1-\gamma/8)n$, where  $c=4\left(\frac{4^{\mathcal{V}+1}e\mathcal{V}}{h_\mathrm{min}}\right)^{4\mathcal{V}}$, $0<\varepsilon<1-\frac{2}{\gamma}\exp(7-\gamma n/16)$, and $\delta\geq 2^{2\mathcal{V}+1}$.
\end{theorem}
\begin{proof} Due to \Cref{thm:generic-ergodic}, almost all Hamiltonians $H$ which satisfy properties (P1)-(P3) have spectral ergodicity, and due to Theorem~\ref{th:theoremS2} below, if the initial state is energy dispersed with constant $\gamma$, \textit{i.e.}, $D_\psi^\mathrm{eff}\geq 2^{\gamma n}$, we obtain Eq.~\eqref{eq:theorem2} as desired. A state sampled from $\mathcal{E}_{\beta,\delta}$ (or $\mathcal{E}_{0}$) is indeed energy dispersed, with probability larger than $1-c\exp(-\gamma n /2)$, due to \Cref{thm:energy-dispersion} (or \Cref{thm:inf-energy-dispersion}) applied to  $\eta=\gamma/2\ < \gamma \times 3\log(3/2)=\log(3/2)(1-|\beta|/\beta_\mathrm{c})$, where we used $(3\log(3/2)-1/2)/\log 2\approx 1.03>1$ to guarantee that $\gamma<(\log(3/2)(1-|\beta|/\beta_\mathrm{c})-\eta)/\log 2$ as required by \Cref{thm:energy-dispersion}.
\end{proof}
\subsection{Proof of Theorem 3}

\begin{theorem}[Exponential complexity from spectral ergodicity and energy dispersion; Theorem 3 in the main text]
   \label{th:theoremS2}
    Let $H$ be a Hamiltonian having spectral ergodicity, and $\psi$ be energy dispersed with constant $\gamma>0$ under $H$. Then, for any subsystem $A$ with size $|A|> (1-\gamma/8)n$, and any $0<\varepsilon<1-\frac{2}{\gamma}\exp(7-\gamma n/16)$,
    \begin{equation}
        \operatornamewithlimits{Pr}_{t}\Big[\mathcal{C}_\varepsilon(\psi_A(t))\leq  \exp(\frac{\gamma}{8}n)\Big]\leq \exp(-\exp({\frac{\gamma}{8}\,n}))
    \end{equation}
    for system size $n\geq n_*\coloneq \frac{16}{\gamma}\big(7+\log\tfrac{2}{\gamma(1-\varepsilon)}\big)$.
\end{theorem}
\noindent Before providing the proof, we make a few remarks. First, Theorem~\ref{th:theoremS2} does not require any extra assumptions on $\psi$ or $H$. This statement holds even for non-product states and nonlocal Hamiltonians, as long as the conditions of spectral ergodicity and energy dispersion are met. Second, by optimizing the exponents, the lower bound on the subsystem size can be improved to $\abs{A}>(1-\gamma/4)n$. This bound can be further reduced to $\abs{A}>(1-\gamma_\mathrm{max}/2)n$ using the maximum probability $2^{-\gamma_\mathrm{max}n}$ of the energy distribution $\abs{\braket{\mu}{\psi}}^2$ with eigenstates $\{\ket{\mu}\}_\mu$ of $H$. These two points also apply to Lemma~\ref{th:temp-deloc} below. Third, the following proof also has implications for recent circuit complexity lower bounds for the Scrooge ensemble associated with the diagonal state $\rho_d$ [Corollary 7 of Ref.~\cite{mcginley2025}]. Under energy dispersion, the $k$-th moments of the Scrooge ensemble approximate $\rho_d^{\otimes k}\sum_{\pi\in S_k}P_\pi$ in Eq.~\eqref{eq:lemmaineq} up to $O(k^22^{-\gamma_\mathrm{max} n/2})$ relative error [Theorem 1 of Ref.~\cite{mcginley2025}]. Any ensemble having such moments exactly allows one to apply the proof below to bound the exponential constant-error robust complexity of states in the ensemble.
This also works up to constant relative error, as such an error introduces a constant multiplicative factor in the upper bound of Eq.~\eqref{eq:generalinequalitynolog} below.

\begin{proof}
    The following argument holds for an arbitrary choice of an $m$-qubit reference product state $\phi$ with $m\geq n$. Let $S_G=\{\sigma\,|\,\mathcal{C}_0(\sigma|\phi)\leq G\}$ be the set of all states with exact relative circuit complexity at most $G> 0$ with respect to $\phi$ [introduced in \Cref{def:circuit-complexity}]. Let $\lambda=\exp(8-n)$ and define $\{\sigma_1,\dots,\sigma_s\}\subseteq S_G$ to be a $\lambda$-covering net in trace distance. By Proposition~\ref{thm:two-qubit-covering-size} and the monotonicity of the trace distance under the partial trace of the ancilla qubits, we can select $s\leq((n+2G)^2(1+512G/\lambda)^{256})^G$. Then, taking $D(\cdot,\cdot)$ to denote the trace distance,
    \begin{align}\label{eq:complexity-from-covering-size}
        \operatornamewithlimits{Pr}_{t}[\mathcal{C}_\varepsilon(\psi_A(t)|\phi)\leq G] &=\operatornamewithlimits{Pr}_{t}[\exists \,\sigma\in S_G \,|\, D(\psi_A(t),\sigma)\leq \varepsilon ] & \text{(definition)}
        \\&\leq\operatornamewithlimits{Pr}_{t}[\exists \,i\in \{1,\dots,s\} \,|\, {D}(\psi_A(t),\sigma_i)\leq \varepsilon+\lambda ] & \text{($\lambda$-net)}
        \\&\leq\sum_{i=1}^s\operatornamewithlimits{Pr}_{t}[{D}(\psi_A(t),\sigma_i)\leq \varepsilon+\lambda ]. & \text{(union bound)}
    \end{align} 
A similar argument appears in Ref.~\cite{haah2025}; here, we improve the maximum approximation error to be exponentially close to unity. Since $\frac{1}{2}(1+\varepsilon)\geq \varepsilon+\lambda $ follows from our condition $n\geq n_*$, we can use Lemma~\ref{th:temp-deloc} with $\frac{1}{2}(1+\varepsilon)$ tolerance (this choice is allowed for the chosen range of $\varepsilon$):
\begin{equation}
    \operatornamewithlimits{Pr}_{t}[{D}(\psi_A(t),\sigma_i)\leq \varepsilon+\lambda ]\leq \operatornamewithlimits{Pr}_{t}\left[{D}(\psi_A(t),\sigma_i)\leq \frac{1}{2}(1+\varepsilon)\right] \leq  \exp(-\frac{(1-\varepsilon)^2}{8\times 2^{(1-\gamma/2)n-a}})
\end{equation}
with $a=|A|$. Consequently, we have
\begin{equation}
\label{eq:generalinequalitynolog}
   \operatornamewithlimits{Pr}_{t}[\mathcal{C}_\varepsilon(\psi_A(t)|\phi)\leq G] \leq  ((n+2G)^2(1+512G/\lambda)^{256})^G  \exp(-\frac{(1-\varepsilon)^2}{8\times 2^{(1-\gamma/2)n-a}}).
\end{equation}

We now select $G=\lfloor\exp(\gamma n/8)\rfloor+\lceil n/2\rceil\leq 2\exp(\gamma n/8)$. Taking the logarithm of Eq.~\eqref{eq:generalinequalitynolog}, we obtain 
the following chain of inequalities
\begin{align}\label{eq:generalinequalitylog}
    \log\operatornamewithlimits{Pr}_{t}[\mathcal{C}_\varepsilon(\psi_A(t)|\phi)\leq G] &\leq  G \log(\left(n+2G\right)^2\Big(1+\frac{512 G}{\lambda}\Big)^{256})-\frac{(1-\varepsilon)^2}{8\cdot 2^{(1-\gamma/2)n-a}}
    \\& \leq  2\exp(\frac{\gamma n}{8})\left[2n + 256\left(8 + \frac{\gamma n}{4} + \log \frac{1}{\lambda}\right)\right]  -\frac{(1-\varepsilon)^2}{8\cdot 2^{(1-\gamma/2)n-a}} 
    \\& \leq   \exp(\frac{\gamma n}{8})\times 644 n -\frac{(1-\varepsilon)^2}{8}\exp(\frac{\gamma n}{8} 3\log 2) 
    \\&\leq -\exp(\frac{\gamma n}{8}),\label{eq:generalinequalityloglast}
\end{align}
where we used  $\log(1+z)<1+\log z$, $(z>1)$ and $\log(n+2G)\leq n$ in the second inequality; $\lambda=\exp(8-n)$, $\gamma\leq 1$, and $a>(1-\gamma/8)n$ in the third inequality; and $n \geq n_*$ in the last one. 

Finally, we replace $\mathcal{C}_\varepsilon(\psi_A(t)|\phi)$ by $\mathcal{C}_{\varepsilon}(\psi_A(t))$, which is the minimum value of $\mathcal{C}_\varepsilon(\psi_A(t)|\phi)$ over all product states $\phi$. Let $\phi_{*,t}$ be an optimal product state giving $\mathcal{C}_{\varepsilon}(\psi_A(t))=\mathcal{C}_\varepsilon(\psi_A(t)|\phi_{*,t})$. As discussed below \Cref{def:circuit-complexity}, the value of $\mathcal{C}_\varepsilon(\psi_A(t)|\phi)$ is independent of the choice of product state on the ancilla qubits, since the single-qubit channels on the relevant ancilla qubits that participate in the subsequent circuit can be absorbed into the circuit. Thus, it remains only to compare the reduced state $\phi_{*,t,\mathsf{S}}$ on the system qubits $\mathsf{S}$ with $\phi_\mathsf{S}$.

Since Eq.~\eqref{eq:generalinequalityloglast} applies to an arbitrary choice of the product state $\phi$, we may take $\phi$ to be a fixed pure product state on $(n+2G)$ qubits independent of $t$. Importantly, $\phi_{*,t,\mathsf{S}}$ can be prepared from $\phi_\mathsf{S}$ using at most $\lceil n/2\rceil$ two-qubit channels. Therefore, the circuit achieving $\mathcal{C}_\varepsilon(\psi_A(t)|\phi_{*,t})$ can also be used to produce $\psi_A(t)$ from $\phi$, after mapping $\phi_\mathsf{S}\rightarrow\phi_{*,t,\mathsf{S}}$, up to single-qubit channels on the relevant ancilla qubits of $\phi_{*,t}$. Consequently, we have 
$\mathcal{C}_\varepsilon(\psi_A(t)|\phi)\leq \mathcal{C}_\varepsilon(\psi_A(t))+\lceil n/2\rceil$. Since this bound holds independent of $t$, we get
\begin{equation}
    \log\operatornamewithlimits{Pr}_{t}[\mathcal{C}_\varepsilon(\psi_A(t))\leq \exp(\gamma n/8)]\leq -\exp(\frac{\gamma n}{8}).\qedhere
\end{equation}

\end{proof}

\subsection{Proof of Lemma 1}
\begin{lemma}[Double-exponential temporal delocalization; Lemma 1 in the main text]
   \label{th:temp-deloc}
    Let $H$ be a Hamiltonian having spectral ergodicity, and $\psi$ be energy dispersed with constant $\gamma>0$ under $H$. Then, for any subsystem $A$ with size $|A|> (1-\gamma/8)n$, and any $0<\varepsilon<1-\frac{1}{\gamma}\exp(7-\gamma n/16)$,
    \begin{equation}
        \operatornamewithlimits{Pr}_{t}\Big[D(\psi_A(t),\sigma)\leq\varepsilon\Big]\leq \exp(-\frac{(1-\varepsilon)^2}{2\times 2^{(1-\gamma/2)n-a}})\leq \exp(-\exp({\frac{\gamma}{8}\,n}))
    \end{equation}
    for any system size $n\geq \frac{16}{\gamma}\big(7+\log\tfrac{1}{\gamma(1-\varepsilon)}\big)$.
\end{lemma}
\begin{proof}
We denote the fidelity by $F(\cdot,\cdot)$. We first bound the stated probability by the temporal moments of the fidelity using $D\geq 1-\sqrt{F}$ and Markov's inequality:
\begin{equation}
    \operatornamewithlimits{Pr}_{t}[D(\psi_A(t),\sigma)\leq \varepsilon]
    \leq\operatornamewithlimits{Pr}_{t}[F(\psi_A(t),\sigma)\geq (1-\varepsilon)^2]
    \leq\frac{\mathbb{E}_{t}[F(\psi_A(t),\sigma)^k]}{(1-\varepsilon)^{2k}}
\end{equation}
for any positive integer $k$. In Lemma~\ref{thm:k-th-fidelity-bound}, we bound the $k$-th temporal moments as 
\begin{equation}\label{eq:k-th-fidelity-bound}
    \mathbb{E}_t[F(\psi_A(t),\sigma)^k]\leq  2^{k((1-\gamma/2)n-a)} \frac{(2^{n-a}+k-1)!}{(2^{n-a}-1)!},
\end{equation}
where $a=|A|$. This gives
\begin{equation}\label{eq:generalbound}
  \operatornamewithlimits{Pr}_{t}[D(\psi_A(t),\sigma)\leq \varepsilon]  \leq \left(\frac{ 2^{(1-\gamma/2)n-a}}{(1-\varepsilon)^2}\right)^k \frac{(2^{n-a}+k-1)!}{(2^{n-a}-1)!}.
\end{equation}

We are free to select the parameter $k$ in Eq.~\eqref{eq:generalbound} to obtain the desired bound. To select $k$ we study the minimum value of the function $g(k)$,
\begin{equation}
    g(k)=\left(\frac{1}{xy}\right)^k\frac{(x+k-1)!}{(x-1)!},
   \qquad\text{with}\qquad x = 2^{n-a}>0\qquad\text{and}\qquad y = \frac{(1-\varepsilon)^2}{2^{(2-\gamma/2)n-2a}}> 1.
\end{equation}
 Since we have ${g(k+1)}/{g(k)} = {(x+k)}/{(xy)}$ 
it follows that $g(k)$ decreases when $x+k<xy$ and increases when $x+k>xy$. Thus, the minimum of $g(k)$ is achieved either for $k=\lceil x(y-1)\rceil$ or $k=\lfloor x(y-1)\rfloor$. Let $k_*$ be the integer giving the minimum $g(k)$. Then, note that
\begin{align}
g(k_*) e^{x(y-1)}&= g(k_*) \sum_{k=0}^{\infty} \frac{\bigl(x(y-1)\bigr)^k}{k!} 
\leq \sum_{k=0}^{\infty} g(k)\,\frac{\bigl(x(y-1)\bigr)^k}{k!} = \sum_{k=0}^{\infty} \frac{(x+k-1)!}{(x-1)!\,k!}
   \left(1 - \frac{1}{y}\right)^{\!k} \nonumber= \left(\frac{1}{y}\right)^{-x} = y^x,
\end{align}
where we used the negative binomial series $\sum_{k=0}^{\infty} \binom{x+k-1}{k} \tau^k = (1-\tau)^{-x}$, valid for $|\tau|<1$, evaluated at $\tau = 1 - 1/y$. Thus, we find that
\begin{equation}\label{eq:boundforgstar}
    g(k_*)\leq y^x e^{-(y-1)x}\leq e^{-y x/2},
\end{equation}
where the last inequality follows from $y\geq 6$ (one can check that in fact $y\gg6$ using $a > (1-\gamma/8)n$ and $n\geq \frac{16}{\gamma}\log\tfrac{\exp(7)}{\gamma(1-\varepsilon)}$).
Substituting the inequality in Eq.~\eqref{eq:boundforgstar} into Eq.~\eqref{eq:generalbound} with $k=k_*$ yields
\begin{equation*}
   \operatornamewithlimits{Pr}_{t}[D(\psi_A(t),\sigma)\leq \varepsilon] \leq  \exp(-\frac{(1-\varepsilon)^2}{2\times 2^{(1-\gamma/2)n-a}})\leq\exp(-\exp(\frac{\gamma}{8}\, n)).\qedhere
\end{equation*}
\end{proof}

\section{Corollaries}
\label{sec:05}
In this section, we provide full proofs for the corollaries of Theorems 1 and 2 stated in the main text. 

\subsection{Corollary 1: Sustained growth}

Here, we prove Corollary 1 in the main text, on the sustained growth of circuit complexity. The idea is to use the Hamiltonian simulation bound developed in Ref.~\cite{Haah2023}: the time evolution operator $e^{-iH\Delta t}$ of a bounded local Hamiltonian can be approximately implemented by a circuit with $n\Delta t\,\mathrm{polylog}(n\Delta t/\eta)$ gates up to $\eta$ error in the operator norm [defined in Sec.~\ref{sec:setting}]. This bound implies that the complexity cannot increase too rapidly over any fixed time interval. Since the complexity is exponentially large at late times by \Cref{th:theoremS1}, while it is initially vanishing, one may expect that there are exponentially many intermediate time intervals in which the complexity grows before saturation. However, this argument does not prove the sustained growth of circuit complexity with a fixed approximation tolerance $\varepsilon$. Hamiltonian simulation over each time interval introduces an additional error, so the Hamiltonian simulation bound can only give an ordering between robust complexities with different tolerances. Importantly, robust circuit complexity may change drastically as the tolerance varies. 

To manage this difficulty, we introduce the averaged circuit complexity $\bar{\mathcal{C}}_{\varepsilon,\Delta\varepsilon}=\frac{1}{2\Delta\varepsilon}\int_{\varepsilon-\Delta\varepsilon}^{\varepsilon+\Delta\varepsilon}d\varepsilon'\mathcal{C}_{\varepsilon'}$. We note that this integral is well-defined as $\mathcal{C}_{\varepsilon}$ is a monotonic function in $\varepsilon$. This averaging stabilizes the complexity under a sufficiently small change in the tolerance, which allows us to show a sustained growth in time. 

\begin{corollary}[Sustained growth of complexity over exponentially long time; Corollary 1 in the main text]\label{thm:cor1}
    Under the conditions of Theorem~\ref{th:theoremS1}, for any $\Delta\varepsilon$ and $\varepsilon$ such that $\exp(-O(n))<\Delta\varepsilon<1/4$ and $\Delta\varepsilon(1+1/2^{n+1})\leq\varepsilon<1-\Delta\varepsilon-\frac{2}{\gamma}\exp(7-\gamma n/16)$, before $\bar{\mathcal{C}}_{\varepsilon,\Delta\varepsilon}(\psi(t))$ saturates at an exponentially large value, its growth is sustained over an exponentially long time, in the sense that for any constant $\Delta t>0$, there are exponentially many times $t_1<t_2<\cdots <t_M$ separated by at least $\Delta t\leq t_{m+1}-t_m$ such that
    \begin{equation}
         \bar{\mathcal{C}}_{\varepsilon,\Delta\varepsilon}(\psi(t_{m+1}))> \bar{\mathcal{C}}_{\varepsilon,\Delta\varepsilon}(\psi(t_{m}))
    \end{equation}
    for each $m=1,\ldots,M-1$.
\end{corollary}
\begin{proof}
    
    By \Cref{th:theoremS1}, with probability larger than $1-c\exp(-\gamma n/2)$ over the choice of $\psi\sim\mathcal{E}_{\beta,\delta}$, for sufficiently large $n$, there exists a late time $t_+$ such that $\mathcal{C}_{\varepsilon+\Delta\varepsilon}(\psi(t_+))>\exp(\gamma n/8)$. This implies that for all $x\in[0,\varepsilon+\Delta\varepsilon]$, $\mathcal{C}_x(\psi(t_+))>\exp(\gamma n/8)$, and thus, $\bar{\mathcal{C}}_{\varepsilon,\Delta\varepsilon}(\psi(t_+))>\exp(\gamma n/8)$. Let us fix such a choice of $\psi$, and let $t_+$ be the earliest time at which this inequality holds. 

    Let $J_+$ be the integer such that $J_+\Delta t<t_+\leq(J_+ +1)\Delta t$. By definition, for $x\in[0,1]$, $\mathcal{C}_x(\psi(t))$ is upper bounded by the gate count of any circuit preparing $\psi(t)$ up to $x$ error. For $H$ satisfying (P1)-(P3), its time evolution operator $e^{-iH\Delta t}$ can be approximated by a circuit with $G_{\Delta t,\eta}=n\Delta t\,\mathrm{polylog}(n\Delta t/\eta)$ gates up to $\eta$ error in the operator norm~\cite{Haah2023}. We set $\eta$ as $\Delta\varepsilon/2^{n+1}\geq\exp(-O(n))$, which gives $G_{\Delta t,\eta}=\mathrm{poly}(n)$. The Hamiltonian simulation bound gives
    \begin{equation}
        \mathcal{C}_x(\psi((j+1)\Delta t))\leq \mathcal{C}_{x-\eta}(\psi(j\Delta t))+G_{\Delta t,\eta}
    \end{equation}
    for $x\in[\eta,1]$ and all $j=0,\ldots,J_+$. We now integrate both sides over $x\in[\varepsilon-\Delta\varepsilon,\varepsilon+\Delta\varepsilon]$ to match the error tolerances of the complexities. This gives
    \begin{equation}
        \begin{split}
            \bar{\mathcal{C}}_{\varepsilon,\Delta\varepsilon}(\psi((j+1)\Delta t))
            &\leq \frac{1}{2\Delta\varepsilon}\int_{\varepsilon-\Delta\varepsilon}^{\varepsilon+\Delta\varepsilon}dx\,\mathcal{C}_{x-\eta}(\psi(j\Delta t))+G_{\Delta t,\eta}\\
            &= \bar{\mathcal{C}}_{\varepsilon,\Delta\varepsilon}(\psi(j\Delta t))+\frac{1}{2\Delta\varepsilon}\left[\int_{\varepsilon-\Delta\varepsilon-\eta}^{\varepsilon-\Delta\varepsilon}-\int_{\varepsilon+\Delta\varepsilon-\eta}^{\varepsilon+\Delta\varepsilon}\right]dx\,\mathcal{C}_x(\psi(j\Delta t))+G_{\Delta t,\eta}\\
            &\leq \bar{\mathcal{C}}_{\varepsilon,\Delta\varepsilon}(\psi(j\Delta t)) + \frac{\eta}{2\Delta\varepsilon}\mathcal{C}_0(\psi(j\Delta t))+G_{\Delta t,\eta},
        \end{split}
    \end{equation}
    where we use the monotonicity of the complexity in the third line. Since any $n$-qubit state can be exactly prepared by $2^{n+1}$ arbitrary two-qubit gates~\cite{Shende2006}, $\mathcal{C}_0(\psi(j\Delta t))$ is at most $2^{n+1}$, giving
    \begin{equation}\label{eq:unit-time-c-bound}
        \bar{\mathcal{C}}_{\varepsilon,\Delta\varepsilon}(\psi((j+1)\Delta t))\leq  \bar{\mathcal{C}}_{\varepsilon,\Delta\varepsilon}(\psi(j\Delta t)) + G_{\Delta t,\eta} + 1.
    \end{equation}
    Therefore, for each constant time step, the averaged complexity can grow at most polynomially.

    Next, we construct a sequence $\{l_m\}$ of times from $l_1=0$ by recursively defining
    \begin{equation}
        l_{m+1} = \operatornamewithlimits{argmin}_{l_m<j\leq J_+}\{\bar{\mathcal{C}}_{\varepsilon,\Delta\varepsilon}(\psi(j\Delta t))>\bar{\mathcal{C}}_{\varepsilon,\Delta\varepsilon}(\psi(l_m\Delta t))\}.
    \end{equation}
    Denote by $M$ the length of this sequence. By definition, we have $\bar{\mathcal{C}}_{\varepsilon,\Delta\varepsilon}(\psi((l_{m+1}-1)\Delta t))\leq \bar{\mathcal{C}}_{\varepsilon,\Delta\varepsilon}(\psi(l_m\Delta t))$. This, together with Eq.~\eqref{eq:unit-time-c-bound}, implies
    \begin{equation}
        \bar{\mathcal{C}}_{\varepsilon,\Delta\varepsilon}(\psi(l_{m+1}\Delta t))\leq \bar{\mathcal{C}}_{\varepsilon,\Delta\varepsilon}(\psi((l_{m+1}-1)\Delta t))+G_{\Delta t,\eta}+1\leq \bar{\mathcal{C}}_{\varepsilon,\Delta\varepsilon}(\psi(l_m\Delta t))+G_{\Delta t,\eta}+1
    \end{equation}
    for each $m=1,\ldots,M-1$. Since, by Eq.~\eqref{eq:unit-time-c-bound} again, 
    \begin{equation}
        \bar{\mathcal{C}}_{\varepsilon,\Delta\varepsilon}(\psi(J_+\Delta t))>\exp(\gamma n/8)-G_{\Delta t,\eta}-1,
    \end{equation}
    and using the definition of $\{l_m\}$, we get
    \begin{equation}
        \bar{\mathcal{C}}_{\varepsilon,\Delta\varepsilon}(\psi(l_1\Delta t))+(M-1)(G_{\Delta t,\eta}+1)\geq \bar{\mathcal{C}}_{\varepsilon,\Delta\varepsilon}(\psi(l_M\Delta t))\geq \bar{\mathcal{C}}_{\varepsilon,\Delta\varepsilon}(\psi(J_+\Delta t))>\exp(\gamma n/8)-G_{\Delta t,\eta}-1.
    \end{equation}
    Using $\bar{\mathcal{C}}_{\varepsilon,\Delta\varepsilon}(\psi(l_1\Delta t))=0$, this with $G_{\Delta t,\eta}=\mathrm{poly}(n)$ implies
    \begin{equation}
        M>\exp(\gamma n/8)/(G_{\Delta t,\eta}+1)\geq \exp(\Omega(n)).
    \end{equation}
    Finally, setting $t_m=l_m\Delta t$ gives $t_{m+1}-t_m\geq\Delta t$, $t_M<t_+$, and $\bar{\mathcal{C}}_{\varepsilon,\Delta\varepsilon}(\psi(t_{m+1}))>\bar{\mathcal{C}}_{\varepsilon,\Delta\varepsilon}(\psi(t_m))$ for each $m=1,\ldots,M-1$. This completes the proof.
\end{proof}

\subsection{Corollary 2: Irremovable entanglement}
Consider a partition $\{A_m\}_{m=1}^M$ of the lattice into $M$ regions, with average region size $\bar a=\tfrac{1}{M}\sum_m |A_m|=\frac{n}{M}$, and define the average subsystem entropy $\bar S(\psi)=\frac{1}{M}\sum_{m=1}^M S(\psi_{A_m})$. 

\begin{corollary}[Irremovable volume-law entanglement; Corollary 2 in the main text] 
    Let $\eta\in (0,1/4)$ and $b\geq 4$. Assume that $\max_m |A_m|\leq O(\bar a)$. Under the conditions of Theorem~\ref{th:theoremS1}, at typical late times, the average subsystem entropy of $\psi(t)$ follows a volume law, \textit{i.e.}, $\bar S(t)\geq \Omega(\bar a)$. Furthermore, the state cannot be mapped to a state $\phi$ satisfying $\bar S(\phi)\leq O(\bar a^{1-b\eta})$ by any quantum circuit with $\exp(O(n^{1-\eta}))$ two-qubit gates for $ \Omega(n^\eta)\leq M\leq O(n^{1-3/b})$.
\end{corollary}
\begin{proof}
We show this corollary in two steps. First, we show that any low-entangled state can be prepared by a low-complexity circuit. Second, we suppose for contradiction that a late-time state can be disentangled by a low-complexity circuit, from which we can prepare the late-time state easily by first preparing the resulting disentangled state and subsequently applying the inverse of the disentangling unitary. This contradicts Theorem~\ref{th:theoremS1}, which shows that the late-time state has exponential circuit complexity.

Before proceeding, we first construct a coarse-grained partition. Define $L=\lfloor C_L n^{\eta}\rfloor \in\mathbb{N}$ for some constant $C_L>0$ such that $L\leq M$ and the cumulative sizes $l_j=\sum_{i=1}^j |A_i|$. For each $j=1,\ldots,L-1$, let $r_j$ be the smallest index such that $l_{r_j}\geq nj/L$, and set $r_0=0$, $r_L=M$. Then, define the coarse-grained regions $I_j=\bigcup_{i=r_{j-1}+1}^{r_j} A_i$. By construction, we have $|I_j|\leq O(n/L+\bar{a})\leq O(n^{1-\eta})$ for all $j=1,\ldots,L-1$. By subadditivity, we have
\begin{equation}\label{eq:coarse-grained-en-subadd}
    \sum_{j=1}^L S(\phi_{I_j}) \leq \sum_{i=1}^M S(\phi_{A_i}) = M\bar{S}(\phi).
\end{equation}

Let us begin with the second part. Let $t_*$ be a typical late time for which Theorem~\ref{th:theoremS1} gives $\mathcal{C}_\varepsilon(\psi(t_*))>\exp(\gamma n/8)$. We argue by contradiction. Suppose that there is a circuit $U$ made of at most $\exp(O(n^{1-\eta}))$ two-qubit gates such that $U\ket{\psi(t_*)}=\ket{\phi}$ and $\bar S(\phi)\leq C\bar a^{1-b\eta}$, where $C$ is a constant independent of $n$.

Order the coarse-grained regions as $I_1,\ldots,I_L$ and regard them as the sites of a one-dimensional matrix product state after blocking. The Hilbert space dimension of each blocked site is at most $d\leq 2^{C_d n^{1-\eta}}$ for some constant $C_d>0$. We denote $d_\mathrm{max}$ as the maximum Hilbert space dimension among the blocks.

For each $j=1,\ldots,L-1$, let $B_j=\bigcup_{m=1}^j I_m$. By subadditivity and Eq.~\eqref{eq:coarse-grained-en-subadd},
\begin{equation}
    S(\phi_{B_j})
    \leq \sum_{m=1}^j S(\phi_{I_m})
    \leq M\bar S(\phi)
    \leq C M\bar a^{1-b\eta}.
\end{equation}
Now write the Schmidt decomposition of $\phi$ across the cut $B_j|B_j^c$ as $\ket{\phi}=\sum_i s_i\ket{i}_{B_j}\ket{i}_{B_j^c}$, with the Schmidt values ordered in non-increasing order. If $p=\sum_{i=1}^{\chi_{\mathrm c}}s_i^2$, then the normalized truncation to the largest $\chi_{\mathrm c}$ Schmidt values has error $\sqrt{1-p}$. Since $s_i^2\leq 1/\chi_{\mathrm c}$ for $i>\chi_{\mathrm c}$, we have
$S(\phi_{B_j})\geq (1-p)\log\chi_{\mathrm c}$. Hence, the error at this cut is at most $\varepsilon/L$ provided $\log\chi_{\mathrm c}\geq L^2S(\phi_{B_j})/\varepsilon^2$.
Thus, it is enough to choose $\chi_{\mathrm c}$ so that
\begin{equation}
    \log\chi_{\mathrm c}
    \geq
    \frac{C}{\varepsilon^2}M L^2\bar a^{1-b\eta}.
\end{equation}
Using $L=\Theta(n^\eta)$, $\bar a=n/M$, $\Omega(n^{3/b})\leq\bar{a}\leq O(n^{1-\eta})$, and $\eta<1/4$, we have $ML^2\bar{a}^{1-b\eta}\leq O(n^{1-\eta})$. Therefore, we may choose
\begin{equation}
    \chi_{\mathrm c}\leq \exp(O(n^{1-\eta})).
\end{equation}

By standard MPS truncation bounds (see Lemma~1 of Ref.~\cite{Verstraete2006}), truncating all consecutive cuts in this way gives a matrix product state $\tilde\phi$ with maximum bond dimension $\chi_{\mathrm c}$ and total error at most $\varepsilon$. By the usual sequential generation of MPS, such a state can be prepared using $O(Ld_\mathrm{max}\chi_{\mathrm c}^2)$ two-qubit
gates~\cite{Schon2005,Shende2006,perezgarcia2007,Iten2016}. Since $L=\Theta(n^\eta)$, $d_\mathrm{max}\leq 2^{C_d n^{1-\eta}}$, and $\chi_{\mathrm c}\leq \exp(O(n^{1-\eta}))$, this preparation uses $\exp(O(n^{1-\eta}))$ two-qubit gates.

We can therefore prepare $\psi(t_*)$ up to error $\varepsilon$ by first preparing $\tilde\phi$ and then applying $U^{-1}$. The total number of two-qubit gates is still $\exp(O(n^{1-\eta}))$. Since $\eta>0$, this is subexponential in $n$, contradicting $\mathcal{C}_\varepsilon(\psi(t_*))>\exp(\gamma n/8)$ for sufficiently large $n$. Therefore, no such circuit $U$ exists.

The remaining part is to show that $\bar S(t)\geq \Omega(\bar a)$. In fact, by the argument above with trivial $U$, we have already shown that $\bar S(t)\geq \Omega(\bar a^{1-b\eta})$. Since the argument is uniform over fixed choices of $\eta\in(0,1/4)$ and $b\geq 4$, we may choose them so that $b\eta$ is an arbitrarily small fixed constant, subject to the corresponding conditions on $M$, but we can remove the $-b\eta$ in the exponent by modifying techniques from Ref.~\cite{Huang2021}, as shown in \Cref{thm:extensive-entanglement2} below.
\end{proof}

\begin{prop}[Extensive average entanglement from spectral ergodicity and energy dispersion]
Let $H$ be a Hamiltonian having spectral ergodicity, and let $\psi$ be
energy dispersed with constant $\gamma>0$ under $H$. Assume that the partition
$\{A_m\}_{m=1}^M$ with $M\leq n$ introduced above satisfies $a_{\max}\coloneqq \max_m |A_m|\leq \gamma n/4$.
Then, 
\begin{equation}
    \label{eq:probovert312}
    \operatornamewithlimits{Pr}_t\left[\bar S(\psi(t))
        \leq \bar a \frac{\gamma  \log 2}{2}
    \right]
    \leq 2^{-\gamma n/8}
\end{equation}
for all $n\geq \frac{16}{\gamma}(7+\log_2\frac{2}{\gamma})$.
\label{thm:extensive-entanglement2}
\end{prop}

\begin{proof} We follow the proof of Theorem 1 in Ref.~\cite{Huang2021}. Let $\rho_d$ be the diagonal state of $\psi$ in the energy eigenbasis,
and let $\rho_{d,A_m}$ be its reduced state on $A_m$. Fix
$\alpha=\gamma/8$. Applying \Cref{thm:relaxation} below to each region and
using the union bound gives
\begin{equation}
    \operatornamewithlimits{Pr}_t\left[\forall m\colon \left\|\psi_{A_m}(t)-\rho_{d,A_m}\right\|_1<2^{-\alpha n+1} \right]>1-\frac{M}{4}2^{-\gamma n+2a_{\max}+2\alpha n}.    
\end{equation}

On this event, the trace distance between
$\psi_{A_m}(t)$ and $\rho_{d,A_m}$ is at most $2^{-\alpha n}$ for every
$m$. By the continuity bound for the von Neumann entropy,
\begin{equation}
    |\bar{S}(\psi(t))-\bar S(\rho_d)|\leq \frac{1}{M}\sum_{m=1}^M
    \left|S(\psi_{A_m}(t))-S(\rho_{d,A_m})\right|
    \leq
    2^{-\alpha n}(\bar a+\alpha n+2)\log 2 ,    
\end{equation}
so it is enough to lower bound the average entropy of the diagonal state.
We have
\begin{equation}
   \bar{S}(\rho_d)=\frac{1}{M}\sum_{m=1}^M S(\rho_{d,A_m})\geq \frac{S(\rho_d)}{M}\geq \frac{ \gamma n \log2}{M}=\gamma \bar a\log2,
\end{equation}
where the first inequality is the subadditivity of entropy and the second inequality follows from comparing the von Neumann
entropy with the second R\'enyi entropy and the energy dispersion. Thus, on the same high-probability
event we have
$\bar S(\psi(t))
    \geq
    \left(\gamma\bar a
    -2^{-\alpha n}(\bar a+\alpha n+2)\right)\log 2 .$
For $\alpha=\gamma/8$ and $a_{\max}\leq \gamma n/4$, the second term is at most
$\gamma\bar a/2$ if $n\geq \frac{16}{\gamma}(7+\log_2\frac{2}{\gamma})$, giving Eq.~\eqref{eq:probovert312}.
\end{proof}

\subsection{Corollary 3: Generic no fast-forwarding}

We first introduce the following notion of fast-forwarding. This notion of fast-forwarding allows arbitrary classical preprocessing in the construction of the optimal circuit and arbitrary two-qubit gates.
\begin{definition}[Fast-forwarding]
    A Hamiltonian $H$ can be \textit{$\varepsilon$-approximate fast-forwarded} with complexity $G\geq 0$, if for any time $t$ and any initial state $\phi$, there exists a quantum circuit with $G$ two-qubit gates which prepares $\phi(t)$ to trace distance $\varepsilon$ starting from $\phi$.
\end{definition}

\begin{corollary}[Generic no fast-forwarding; Corollary 3 in the main text]
    For almost all\footnoteref{nt:1} Hamiltonians $H$ which satisfy properties (P1)-(P3) and any positive constant $\varepsilon<1-\exp(-O(n))$, $H$ cannot be $\varepsilon$-approximate fast-forwarded with complexity $G=\mathrm{poly}(n)$. Furthermore, Hamiltonians allowing such fast-forwarding, including all-commuting, 1D finite-range quadratic fermionic (without boundary terms), or number-conserving quadratic bosonic (with particle-number cutoff) Hamiltonians that satisfy (P1)-(P3) must form a measure zero set in the ensemble of all Hamiltonians satisfying (P1)-(P3).
\end{corollary}
\begin{proof}
    By Theorem~\ref{th:theoremS1}, there exist an initial state $\psi$ and a time $t_*>0$ such that $\mathcal{C}_\varepsilon(\psi(t_*))>\exp(\gamma n/8)$. Therefore, there cannot exist an algorithm that can prepare $\psi(t)$ using a polynomial-size circuit up to $\varepsilon$ error in the trace distance, implying that $H$ cannot be $\varepsilon$-approximate fast-forwarded with complexity $G=\mathrm{poly}(n)$.

    Commuting local Hamiltonians, efficiently block-diagonalizable Hamiltonians, 1D finite-range quadratic fermionic Hamiltonians, and number-conserving quadratic bosonic Hamiltonians are known to be fast-forwardable~\cite{Atia2017,Gu2021}. For commuting local Hamiltonians, the evolution factorizes into a product of local rotations. For 1D finite-range quadratic fermions, the Jordan-Wigner transformation maps the system to a local qubit Hamiltonian whose dynamics remains Gaussian and can be compiled using polynomially many two-qubit gates once the single-particle phase rotation angles are worked out. For number-conserving quadratic bosons, the analogous statement holds after a qubit encoding with a suitable occupation cutoff. Therefore, these Hamiltonians form a measure zero set in the ensemble of all Hamiltonians satisfying (P1)-(P3) due to Theorem~\ref{th:theoremS1}.
\end{proof}

\subsection{Corollary 4: Bound on recurrence time}
As a consequence of the spectral ergodicity established in \Cref{thm:generic-ergodic} and the energy dispersion established in \Cref{thm:inf-energy-dispersion,thm:energy-dispersion}, we present a bound on the recurrence time, which follows from the main result of Ref.~\cite{Riddell2023}. 

We follow the notion of recurrence introduced in Ref.~\cite{Riddell2023}. Let $\Delta$ denote the recurrence duration. We say that an $(\varepsilon,\Delta)$-recurrence occurs on an interval $\mathcal{I}=[t_*,t_*+\Delta]$ if 
\begin{equation}
    1-F(\psi(t),\psi)\leq \varepsilon
\end{equation}
for all $t\in\mathcal{I}$. For local Hamiltonians, which have $\mathrm{Var}_\psi(H)\leq O(n^2)$, the quantum speed limit~\cite{Anandan1990,Taddei2013} suggests a physically natural choice of the time scale $\Delta=\Theta(\sqrt{\varepsilon}/n)$.

Let $t_\Delta^k$ denote the initial time of the interval associated with the $k$-th $(\varepsilon,\Delta)$-recurrence. We define the average $(\varepsilon,\Delta)$-recurrence time by 
\begin{equation}
    T_\mathrm{rec}(\varepsilon,\Delta) \coloneq \lim_{k\rightarrow\infty} \frac{t_\Delta^k}{k}.
\end{equation}
We note that, depending on the choice of $\Delta$, there may be no time interval satisfying the recurrence condition. In this case, we say that $(\varepsilon,\Delta)$-recurrence does not occur and set $T_\mathrm{rec}(\varepsilon,\Delta)\coloneq\infty$. Then, the following corollary provides a lower bound on $T_\mathrm{rec}(\varepsilon,\Delta)$. 

\begin{corollary}[Double-exponential recurrence for generic local Hamiltonians; Corollary 4 in the main text]
    Under the conditions of Theorem~\ref{th:theoremS1}, for any constant $0<\varepsilon<1$, and $\Delta\geq 1/\mathrm{poly}(n)$, the average recurrence time $T_\mathrm{rec}(\varepsilon,\Delta)$ of $\psi$ is at least $\exp(\exp(\Omega(n)))$. 
\end{corollary}
\begin{proof}
    We leverage the following result: 
    \begin{lemma}[Corollary 2 of Ref.~\cite{Riddell2023}]
        Let $H$ be a Hamiltonian having spectral ergodicity, and $\psi$ be an initial state with effective dimension $D_\psi^\mathrm{eff}$. Then,
        \begin{equation}
            T_\mathrm{rec}(\varepsilon,\Delta)\geq \frac{\Delta}{2e}\exp(\frac{1-\varepsilon}{e}D_\psi^\mathrm{eff}).
        \end{equation}
    \end{lemma}
    \noindent As discussed in the proof of \Cref{th:theoremS1}, $\psi$ is energy dispersed with $\gamma>0$, and thus $D_\psi^\mathrm{eff}$ is exponentially large. Therefore, the average recurrence time $T_\mathrm{rec}(\varepsilon,\Delta)$ is lower bounded by $\Delta\exp(\exp(\Omega(n)))$.
\end{proof}

\section{Low complexity of small regions}
\label{sec:06}

In the previous sections, we focused on the circuit complexity of global states. In this regime, \Cref{th:theoremS1} shows that whenever the subsystem is larger than a certain constant fraction of the whole system, the circuit complexity of the late-time reduced density matrix is exponentially large. On the other hand, for a smaller region, this result breaks down. For example, if the subsystem has a constant size, then its reduced state can be trivially prepared by a circuit with a constant number of local channels. Furthermore, for a growing, but still small region $A$ with $|A|=\omega(1)$, one expects relaxation within $A$ to occur on a relatively short timescale in generic thermalizing systems. Consequently, the circuit complexity of the subsystem is also expected to saturate at a relatively small value. Here, we make this intuition precise for thermalizing systems whose local late-time states are described by thermal ensembles. Before proceeding, we briefly review  quantum thermalization.

\begin{definition}[Temporal-thermal equivalence]
    Consider a Hamiltonian $H$ and an initial state $\psi$. Let $\rho_d$ be the energy diagonal state of $\psi$ with respect to $H$. For $0<\varepsilon<1$, we say that $\psi$ satisfies {\it $\varepsilon$-approximate temporal-thermal equivalence} under $H$ on a region $A$ at inverse temperature $\beta$ if we have
    \begin{equation}
        \frac{1}{2}\left\|\tr_{A^c}(\rho_d - g_\beta)\right\|_1 \leq \varepsilon.
    \end{equation} 
\end{definition}

We say that an initial state thermalizes on a region $A$ under a Hamiltonian if two conditions hold: 1) its reduced state in $A$ relaxes to that of its energy diagonal state, and 2) it satisfies temporal-thermal equivalence on $A$. The first condition is guaranteed when the initial state is sufficiently energy dispersed as originally recognized by  Refs.~\cite{Reimann2008,Linden2009} and shown in Lemma~\ref{thm:relaxation} below. Moreover, at sufficiently high temperature, the Hamiltonian satisfies the uniform clustering property, and its Gibbs state becomes efficiently preparable, as shown in Lemma~\ref{thm:local-gibbs-recovory-complexity} below using algorithms developed in Refs.~\cite{Brandao2019,chen2025}.

Combining these ingredients, we show that assuming temporal-thermal equivalence of an initial state,  uniform clustering of the Hamiltonian (which provably occurs at high temperature), and energy dispersion of the initial state, the late-time complexity of the reduced state in $A$ is small.

\begin{lemma}[Late-time relaxation of small subsystem \cite{Linden2009}]\label{thm:relaxation}
    Consider a Hamiltonian $H$ for $n$ qubits with nondegenerate spectral gaps. Let $\psi$ be an initial state that is energy dispersed with constant $\gamma$. Then, for any subsystem $A$ and $\varepsilon>0$, we have
    \begin{equation}
        \operatornamewithlimits{Pr}_t[\left\|\tr_{A^c}(\psi(t)-\rho_d)\right\|_1\geq \varepsilon]\leq \varepsilon^{-2} 2^{-\gamma n+2|A|},
    \end{equation}
    where $\rho_d$ is the energy diagonal state of $\psi$.
\end{lemma}
\begin{proof}
    We generalize the proofs of Theorem 1 of Ref.~\cite{Cotler2022} and Theorem 3 of Ref.~\cite{fan2025} to an energy dispersed initial state under time-independent Hamiltonian dynamics. Let $\rho_{d,A}$ be the subsystem of the diagonal density matrix in $A$, i.e., $\rho_{d,A}=\tr_{A^c}(\rho_d)$. Then, for any $\varepsilon> 0$, the probability of $\psi_A(t)$ being $\varepsilon$-far from $\rho_{d,A}$ is upper bounded by
    \begin{equation}
        \begin{split}
            \operatornamewithlimits{Pr}_t[\left\|\psi_A(t)-\rho_{d,A}\right\|_1\geq \varepsilon]
            &\leq \operatornamewithlimits{Pr}_t\left[2^{|A|}\left\|\psi_A(t)-\rho_{d,A}\right\|_2^2\geq \varepsilon^2\right]\\
            &=\operatornamewithlimits{Pr}_t\left[\tr([\psi_A(t)]^2) + \tr(\rho_{d,A}^2) - 2 \tr(\psi_A(t) \rho_{d,A}) \geq \varepsilon^2 2^{-|A|}\right]\\
            &\leq \frac{2^{|A|}}{\varepsilon^2}\mathbb{E}_t[\tr([\psi_A(t)]^2) + \tr(\rho_{d,A}^2) - 2 \tr(\psi_A(t) \rho_{d,A})],
        \end{split}
    \end{equation}
    where we used Markov's inequality at the last line. Since $H$ has nondegenerate spectrum, we have $\mathbb{E}_t[\psi(t)]=\rho_d$, and the upper bound becomes
    \begin{equation}
        = \frac{2^{|A|}}{\varepsilon^2} \left(\mathbb{E}_t[\tr([\psi_A(t)]^2)] - \tr(\rho^2_{d,A})\right).
    \end{equation}
    Next, let us compute $\mathbb{E}_t[\tr([\psi_A(t)]^2)]$. We can expand and bound it as
    \begin{equation}
        \begin{split}
            \mathbb{E}_t[\tr([\psi_A(t)]^2)] 
            &= \tr(\rho_{d,A}^2) + \sum_{\mu_1\neq\mu_2}p_{\mu_1}p_{\mu_2}\tr(\tr_A(\ketbra{\mu_1})\tr_A(\ketbra{\mu_2}))\\
            &\leq \tr(\rho_{d,A}^2) + \tr(\rho_{d,A^c}^2),
        \end{split}
    \end{equation}
    where we used the fact that $H$ has nondegenerate spectral gaps. Here, $p_\mu$ is the energy distribution defined by $\abs{\braket{\mu}{\psi}}^2$. 
    Therefore, we have
    \begin{equation}
        \operatornamewithlimits{Pr}_t[\left\|\psi_A(t)-\rho_{d,A}\right\|_1\geq \varepsilon] \leq \frac{2^{|A|}}{\varepsilon^2} \tr(\rho^2_{d,A^c}).
    \end{equation}
    Since $\psi$ is energy dispersed with constant $\gamma$, it follows that
    \begin{equation}
        \tr(\rho^2_{d,A^c})\leq 2^{-\gamma n+|A|}.
    \end{equation}
    These give the stated bound.
\end{proof}

To prove the efficient implementation of the thermal state, we leverage the following two results:
\begin{lemma}[Corollary 3.3 of Ref.~\cite{chen2025}]\label{thm:global-gibbs-recovory-complexity}
    Under the assumption of uniform clustering for $H$ at inverse temperature $\beta$, there exists a channel with (dissipative) gate complexity $e^{O(\log^D(n/\varepsilon))}$ outputting a state $\rho$ such that $\|g_\beta-\rho\|_1\leq\varepsilon$.
\end{lemma}
\begin{lemma}[Theorem 3.2 of Ref.~\cite{chen2025}; Theorem 5 of Ref.~\cite{Brandao2019}]\label{thm:local-indistinguishability}
    Consider a Hamiltonian $H$ on a $D$-dimensional lattice and an inverse temperature $\beta$. Suppose the pair $(H,\beta)$ is uniformly clustering with correlation length $\xi$. Then, the pair satisfies local indistinguishability: For any $ABC=X\subseteq\Lambda$, with the distance $d(A,C)=l$, we have that
    \begin{equation}
        \|\tr_{BC}(g_\beta^X)-\tr_B(g_\beta^{AB})\|_1\leq e^{c'\beta}|\partial_B C|\left(\mathrm{poly}(|A|,l^D)e^{-l/2\xi}+e^{-l/c}\right)
    \end{equation}
    for some universal constant $c$ and a constant $c'>0$ which depends on $\beta$ and the locality of $H$, and where $\partial_B C$ is the boundary of $C$ with $B$. Here, $g_\beta^X$ is the Gibbs state of the reduced Hamiltonian in $X$ with the inverse temperature $\beta$.
\end{lemma}
\begin{lemma}[Efficient implementation of reduced state of Gibbs state]\label{thm:local-gibbs-recovory-complexity}
    Let $A$ be a contiguous region in a $D$-dimensional cube of length $s$ on a $D$-dimensional lattice. Under the assumption of uniform clustering for $H$ at inverse temperature $\beta$, there exists a channel with (dissipative) gate complexity $e^{O(\log^D(|A|/\varepsilon))}$ outputting a state $\rho_A$ such that $\left\|\tr_{A^c}(g_\beta)-\rho_A\right\|_1\leq\varepsilon$.
\end{lemma}
\begin{proof}
    We consider a buffer region $B$ that encloses $A$ such that $AB$ is a cube of length $s+2l$, and $d(A,C)\geq l$ with the complement $C$ of $AB$. Let $\partial_B C$ be the boundary of $C$ with $B$. Then, $|\partial_B C|$ is given by $O((s+2l)^{D-1})$, and using Lemma~\ref{thm:local-indistinguishability}, we have
    \begin{equation}
        \begin{split}
            \|\tr_{A^c}(g_\beta)-\tr_B(g_\beta^{AB})\|_1
            &\leq e^{c'\beta}O((s+2l)^{D-1})\left(\mathrm{poly}(s^D,l^D)e^{-l/2\xi}+e^{-l/c}\right)\\
            &\leq O\left(\mathrm{poly}(s^D,l^D)e^{-l/\xi'}\right)
        \end{split}
    \end{equation}
    with $\xi'=\max(2\xi,c)$. We can choose $l$ such that this error is upper bounded by $\varepsilon/2$ as follows:
    \begin{equation}
        l = m \log(|A|/\varepsilon)
    \end{equation}
    for some constant $m>0$. Since $g_\beta^{AB}$ is the Gibbs state of the reduced Hamiltonian of $H$ in $AB$, it can be directly implemented by the local Lindbladian evolution in Ref.~\cite{chen2025} with the gate complexity of $e^{O(\log^D(|A|/\varepsilon))}$ according to Lemma~\ref{thm:global-gibbs-recovory-complexity} up to $\varepsilon/2$ error. Thus, $\tr_{A^c}(g_\beta)$ can be approximately implemented using the same number of gates up to $\varepsilon$ error in the trace distance.
\end{proof}
We remark that the circuit construction of this lemma utilizes a local Lindbladian given by the sum of single-qubit Pauli jump operators~\cite{chen2025}. Therefore, the gates are geometrically local. 

\subsection{Proof of Theorem 2}
\begin{theorem}[Thermalizing systems must have small complexity for small subsystems; Theorem 2 in the main text]
    \label{thm:low-c-from-therm}
    Let $H$ be a Hamiltonian, $\psi$ be an initial state, and $A$ be a contiguous region. Assume that $\psi$ has a sufficiently low inverse temperature $\beta$ that ensures uniform clustering for $H$. For any positive constant $\varepsilon<1$, suppose $\psi$ thermalizes under $H$ on $A$ with probability larger than $P_t$ and error $2\varepsilon/3$ in trace distance. Then, we have
    \begin{equation}
        \operatornamewithlimits{Pr}_{t}\Big[\mathcal{C}_\varepsilon(\psi_A(t))\leq  \exp(O(\log^D(|A|/\varepsilon)))\Big]>P_t.
    \end{equation}
\end{theorem}
\begin{proof}
    By the definition of thermalization, $\psi_A(t)$ can be well approximated by the reduced density matrix $g_{\beta,A}$ of the Gibbs state $g_\beta$ in $A$:
    \begin{equation}
        \operatornamewithlimits{Pr}_{t}\Big[D(\psi_A(t),g_{\beta,A})\leq \frac{2\varepsilon}{3}\Big]>P_t.
    \end{equation}
    At the same time, by Lemma~\ref{thm:local-gibbs-recovory-complexity} and uniform clustering for $H$, $g_{\beta,A}$ can be approximated by the reduced density matrix $\tilde{g}_{\beta,A}$ of $\tilde{g}_\beta$ in $A$ which can be implemented using $e^{O(\log^D(|A|/\varepsilon))}$ local two-qubit channels up to $\varepsilon/3$ error in the trace distance. Therefore, we have
    \begin{equation}
        \operatornamewithlimits{Pr}_t\left[D(\psi_A(t),\tilde{g}_{\beta,A})\leq \varepsilon \right] > P_t,
    \end{equation}
    which implies the stated bound on the circuit complexity.
\end{proof}

\subsection{Proof of Theorem 4}
\begin{theorem}[Small subsystem complexity from energy dispersion and temporal-thermal equivalence; Theorem 4 in the main text]
    \label{thm:theorem3}
    Let $H$ be a Hamiltonian with non-degenerate spectral gaps that is uniformly clustering at inverse temperature $\beta$. Let $\psi$ be an energy dispersed state with constant $\gamma$. For any positive constant $\varepsilon<1$, assume that $\psi$ satisfies $(\varepsilon/3)$-approximate temporal-thermal equivalence under $H$ on a contiguous region $A$. Then, we have
    \begin{equation}
        \operatornamewithlimits{Pr}_{t}\Big[\mathcal{C}_\varepsilon(\psi_A(t))\leq  \exp(O(\log^D(|A|/\varepsilon)))\Big]>1- 9\varepsilon^{-2}2^{-\gamma n+2|A|}.
    \end{equation}
\end{theorem}
\begin{proof}
    Due to Lemma~\ref{thm:relaxation} and the energy dispersion of $\psi$, $\psi_A(t)$ can be approximated by the reduced density matrix $\rho_{d,A}$ of the energy diagonal state $\rho_d$ in $A$ up to $\varepsilon/3$ error in trace distance:
    \begin{equation}
        \operatornamewithlimits{Pr}_t\left[D(\psi_A(t),\rho_{d,A})< \frac{\varepsilon}{3}\right]>1- \frac{9}{\varepsilon^2} 2^{-\gamma n+2|A|}.
    \end{equation}
    Next, the temporal-thermal equivalence implies that $\rho_{d,A}$ can be approximated by the reduced density matrix $g_{\beta,A}$ of $g_\beta$ in $A$ as
    \begin{equation}
        D(\rho_{d,A},g_{\beta,A}) \leq \frac{\varepsilon}{3}. 
    \end{equation}
    These imply that $\psi$ thermalizes under $H$ on $A$ with error $2\varepsilon/3$ and probability larger than 
    \begin{equation}
        P_t=1- \frac{9}{\varepsilon^2} 2^{-\gamma n+2|A|}.
    \end{equation}
    Then, Theorem~\ref{thm:low-c-from-therm} gives the stated bound on the circuit complexity.
\end{proof}

\subsection{Examples with low subsystem complexity}
In Theorem~\ref{thm:theorem3}, we show that late-time reduced states have small circuit complexity, provided that the initial state satisfies temporal-thermal equivalence and is energy dispersed. We now describe situations in which these conditions can be rigorously proven.

First, temporal-thermal equivalence under generic local Hamiltonians is expected to hold for typical initial product states. Indeed, it has been rigorously established in several settings. Specifically, it has been proven for generic translation-invariant systems at sufficiently high temperature~\cite{huang2020,Pilatowsky-Cameo2025}, a consequence of the weak eigenstate thermalization hypothesis (ETH)~\cite{Biroli2010,Mori2016,Brandao2019qec,Alhambra2020}. We introduce a version of the weak ETH as follows.

\begin{definition}[Weak eigenstate thermalization hypothesis]
    Consider a Hamiltonian $H$ at inverse temperature $\beta$ with eigenstates $\{\ket{\mu}\}_\mu$. Let $\{p_\mu\}_\mu$ be a probability distribution over eigenstates. For $\varepsilon,\eta>0$, we say that $H$ satisfies the {\it $(\varepsilon,\eta)$-weak eigenstate thermalization hypothesis} on a region $A$ with respect to $\{p_\mu\}_\mu$ if for any Hermitian observable $O_A$ supported in $A$ with $\|O_A\|_\mathrm{op}\leq 1$, its eigenstates satisfy
    \begin{equation}\label{eq:weak-ETH}
        \operatornamewithlimits{Pr}_{\mu\sim p_\mu}\left[\abs{\expval{O_A}{\mu}-\tr(O_A g_\beta)}\geq \varepsilon\right]\leq \eta.
    \end{equation}
\end{definition}

\begin{lemma}[Temporal-thermal equivalence from weak ETH]\label{thm:Gibbs-ED}
    Consider a Hamiltonian $H$ with inverse temperature $\beta$, eigenstates $\{\ket{\mu}\}_\mu$, and an initial state $\psi$. For $\varepsilon,\eta>0$, assume that $H$ satisfies the $(\varepsilon,\eta)$-weak ETH on a region $A$ with respect to $\{\abs{\braket{\mu}{\psi}}^2\}_\mu$. Then, $\psi$ satisfies $(\eta+\varepsilon/2)$-approximate temporal-thermal equivalence under $H$ on $A$.
\end{lemma}
\begin{proof}
    We denote $p_\mu$ as the energy distribution $\abs{\braket{\mu}{\psi}}^2$ and $\rho_d$ as the energy diagonal state of $\psi$. Due to the $(\varepsilon,\eta)$-weak ETH, we have
    \begin{equation}
        \operatornamewithlimits{Pr}_{\mu\sim p_\mu}\left[\abs{\expval{O_A}{\mu}-\tr(O_A g_\beta)}\geq \varepsilon\right]\leq \eta.
    \end{equation}
    This can then be used to obtain
    \begin{equation}
        \abs{\tr(O_A(\rho_d-g_\beta))} = \abs{\mathbb{E}_{\mu\sim p_\mu}\left[\expval{O_A}{\mu}-\tr(O_A g_\beta)\right]} \leq 2\eta+\varepsilon.
    \end{equation}
    Here, we use the definition of the energy diagonal state, $\rho_d=\sum_\mu p_\mu \ketbra{\mu}$. Since this is valid for all observables supported in $A$ having unit operator norm, by choosing $O_A$ as the optimal distinguisher, we have
    \begin{equation}
        \left\|\tr_{A^c}(\rho_d-g_\beta)\right\|_1 \leq 2\eta+\varepsilon.
    \end{equation}
\end{proof}

With this lemma, the late-time complexity of a reduced state of a state satisfying the weak ETH is guaranteed to be small as follows.

\begin{corollary}[Small late-time subsystem complexity from weak ETH]
    Let $H$ be a uniformly clustering Hamiltonian at inverse temperature $\beta$ with non-degenerate spectral gaps and eigenstates $\{\ket{\mu}\}_\mu$. Let $\psi$ be an $n$-qubit initial state that is energy dispersed with constant $\gamma$. Assume that $H$ satisfies the $(\frac{2\varepsilon}{9},\frac{2\varepsilon}{9})$-weak ETH on a region $A$ with respect to $\{\abs{\braket{\mu}{\psi}}^2\}_\mu$ for some positive constant $\varepsilon<1$. Then, we have
    \begin{equation}
        \operatornamewithlimits{Pr}_{t}\Big[\mathcal{C}_\varepsilon(\psi_A(t))\leq  e^{O(\log^D(|A|/\varepsilon))}\Big]>1- O(\varepsilon^{-2}2^{-\gamma n+2|A|}).
    \end{equation}
\end{corollary}
\begin{proof}    
    Under the $(\frac{2\varepsilon}{9},\frac{2\varepsilon}{9})$-weak ETH assumption, Lemma~\ref{thm:Gibbs-ED} implies that $\psi$ satisfies $(\varepsilon/3)$-approximate temporal-thermal equivalence under $H$ on $A$. Then, this together with Theorem~\ref{thm:theorem3} gives the stated bound.
\end{proof}

 For random product states at infinite temperature, as we discussed in Sec.~\ref{sec:03}, Ref.~\cite{huang2020} proved that the energy dispersion condition typically holds. The same reference also shows the temporal-thermal equivalence for translation-invariant systems, so we obtain the following result on complexity.

\begin{corollary}[Random product states have small late-time subsystem complexity]
\label{thm:rand-prod-states-small-complexity}
    Let $H$ be an $n$-qubit local extensive translation-invariant Hamiltonian with non-degenerate spectral gaps. For $\alpha>0$, consider a $D$-dimensional cubic region $A$ with size $|A|\leq(\frac{1}{2}-\alpha)\log_2 n$. Then, for positive constants $\nu<\alpha<1/2$, $\eta<\log(3/2)$, and $\varepsilon<1$, with probability at least $1-\kappa-\exp(-\eta n)$ over the choice of $\psi\sim\mathcal{E}_0$ [defined in Eq.~\eqref{eq:rand-prod-state}],  
    \begin{equation}
        \operatornamewithlimits{Pr}_{t}\Big[\mathcal{C}_\varepsilon(\psi_A(t))=0\Big]>1- 9\varepsilon^{-2}2^{-\gamma n}
    \end{equation}
    with $\kappa\leq O(n^{-\nu})$ and $\gamma < (\log(3/2)-\eta)/\log 2$.
\end{corollary}
\begin{proof}
    We follow an argument used in the proof of Theorem 2 of Ref.~\cite{huang2020}. Let $O_A$ denote a traceless observable supported on $A$ with $\|O_A\|_\mathrm{op}\leq 1$ and define its translation average by $\bar{O}_A=\frac{1}{n}\sum_{{\bf x}\in\Lambda}\mathbb{T}^{\bf x}O_A\mathbb{T}^{-\bf x}$. Since $A$ is a $D$-dimensional cube, we have 
    \begin{equation}
        \frac{1}{2^n}\tr(\bar{O}_A^\dag\bar{O}_A)
        = \frac{1}{2^n n}\sum_{{\bf x}\in\Lambda}\tr(O_A^\dag\mathbb{T}^{\bf x}O_A\mathbb{T}^{-\bf x})
        \leq \frac{2^D|A|}{n}.
    \end{equation}
    Here, we used the fact that the overlap vanishes whenever the supports of $O_A$ and $\mathbb{T}^{\bf x}O_A\mathbb{T}^{-\bf x}$ are disjoint.
    Since eigenstates are translation-invariant, we have $\expval{O_A}{\mu}=\expval{\bar{O}_A}{\mu}$ for any eigenstate $\ket{\mu}$ and the following bound~\cite{Biroli2010,Keating2015}:
    \begin{equation}
        \frac{1}{2^n}\sum_\mu\abs{\expval{O_A}{\mu}}^2 \leq \frac{1}{2^n}\tr(\bar{O}_A^\dag\bar{O}_A)\leq\frac{2^D|A|}{n}.
    \end{equation}
    Since $\tr(O_Ag_0)=\tr(O_A)=0$, where $g_0=I/2^n$ is the infinite-temperature state, Lemma~\ref{thm:weak-ETH-tracenorm} gives the following temporal-thermal equivalence:
    \begin{equation}
        \operatornamewithlimits{Pr}_{\psi\sim\mathcal{E}_0}\left[\frac{1}{2}\left\|\tr_{A^c}(\rho_d-g_0)\right\|_1 \geq \frac{\varepsilon}{3}\right] \leq 3\times 2^{|A|+1}\varepsilon^{-1}\sqrt{\frac{2^D|A|}{n}}.
    \end{equation}
    Then, Theorem~\ref{thm:theorem3} implies that with probability at least $1-3\times 2^{|A|+1}\varepsilon^{-1}\sqrt{2^D|A|/n}$ over the choice of $\psi\sim\mathcal{E}_0$, if $\psi$ is energy dispersed with $\gamma$, we have
    \begin{equation}
        \operatornamewithlimits{Pr}_{t}\Big[\mathcal{C}_\varepsilon(\psi_A(t))=0\Big]>1- 9\varepsilon^{-2}2^{-\gamma n+2\abs{A}}.
    \end{equation}
    Here, the resulting upper bound on the circuit complexity comes from the fact that the infinite temperature state is a product state, and any product state has zero circuit complexity by definition.
    
    Finally, we fix $\alpha>\nu>0$ and set $|A|\leq(\frac{1}{2}-\alpha)\log_2 n$. Then, we get $\kappa=3\times 2^{|A|+1}\varepsilon^{-1}\sqrt{2^D|A|/n}\leq O(n^{-\nu})$. Additionally, due to Proposition~\ref{thm:inf-energy-dispersion}, $\psi$ sampled from $\mathcal{E}_0$ is energy dispersed with $\gamma$ with probability higher than $1-\exp(-\eta n)$. Combining these gives the stated bound and the claimed success probability.
\end{proof}

Reference~\cite{Pilatowsky-Cameo2025} extends these results below infinite temperature, but for a different kind of ensemble compared to the ones discussed in Sec.~\ref{sec:Ensemblesofproductstates}. Specifically, consider the stabilizer product state ensemble $\mathcal{E}_\beta^\mathrm{(stab)}$ generated by the high-temperature Gibbs sampling algorithm of Ref.~\cite{bakshi2025}. By construction, this ensemble satisfies $\mathbb{E}_{\psi\sim\mathcal{E}_\beta^\mathrm{(stab)}}[\ketbra{\psi}]=g_\beta$, for inverse temperatures below the  threshold:
\begin{equation}\label{eq:betasep}
    \beta_\mathrm{s}=1/(100\mathcal{V}D).
\end{equation}
As recognized by Ref.~\cite{huang2024randomproduct}, a state sampled from this ensemble is energy dispersed with high probability when $|\beta|< (\beta_\mathrm{s}/2)(2/3)^\mathcal{V}$, a condition which is satisfied for $|\beta|<\beta_\mathrm{c}$ [defined in Eq.~\eqref{eq:betacrit}].

\begin{lemma}[Stabilizer product states are energy dispersed \cite{huang2024randomproduct}]\label{thm:stab-energy-dispersion}
    Let $H$ be a Hamiltonian satisfying (P1)-(P3) with inverse temperature $|\beta|<\beta_\mathrm{c}$. Then, for $0<\eta<\eta_\mathrm{max}\coloneq\log(3/2)-\log(1+\frac{|\beta|}{\beta_\mathrm{s}}\left(\frac{3}{2}\right)^\mathcal{V})$, with probability at least $1-e^{-\eta n}$, a state sampled from $\mathcal{E}_\beta^\mathrm{(stab)}$ is energy dispersed with $\gamma<(\eta_\mathrm{max}-\eta)/\log 2$.
\end{lemma}
\begin{proof}
    We explicitly bound the energy dispersion, as compared with Lemma 4 of Ref.~\cite{huang2024randomproduct} which only provides asymptotic statements. 
   
    Below, we restrict to $\beta\geq 0$. The corresponding statement for $\beta<0$ follows immediately by replacing $H$ with $-H$. Let us fix an output configuration $\chi$ of Algorithm 4.14 of Ref.~\cite{bakshi2025}, and let $\mathcal{E}_\chi$ denote the stabilizer product state ensemble associated with $\chi$. Then, we have
    \begin{equation}
        \mathbb{E}[\sigma(\chi)] \propto g_\beta,
    \end{equation}
    where $\sigma(\chi)$ is 
    \begin{equation}
        \sigma(\chi) = I^{\otimes\alpha}\otimes\bigotimes_{\{c_i,X_i\}\in\chi}(I+c_iX_i)
    \end{equation}
    for some integer $\alpha\geq 0$. Here, $X_i$ is a non-identity Hermitian Pauli monomial, and the supports of $\{X_i\}$ are disjoint. Let $t_i$ denote the degree of $X_i$. Then, Theorem 4.18 of Ref.~\cite{bakshi2025} gives $|c_i|\leq(\beta/\beta_\mathrm{s})^{t_i}$. Since $H$ satisfies (P1), it is an at most $\mathcal{V}$-local Hamiltonian, so $|X_i|$ is upper bounded by $\mathcal{V}t_i$. For each $i$, we can decompose $I+c_iX_i$ as 
    \begin{equation}
        I+c_iX_i = (1-|c_i|)I + |c_i|(I+\frac{c_i}{|c_i|}X_i).
    \end{equation}
    Following Lemma 4.19 of Ref.~\cite{bakshi2025}, the operator $I+\frac{c_i}{|c_i|}X_i$, up to normalization, can be represented by a convex sum of pure stabilizer states. Thus, with probability $1-|c_i|$, the algorithm samples a uniformly random stabilizer state, whose average gives $I$, while with probability $|c_i|$, it samples a stabilizer state from a distribution biased toward $I+\frac{c_i}{|c_i|}X_i$. The uniformly random stabilizer state contributes a factor of $(2/3)$ per site to the upper bound on $(D_\psi^\mathrm{eff})^{-1}$, which is the smallest possible contribution.
    
    Now, let us consider the worst case $\alpha = 0$, which ultimately yields the largest upper bound on $(D_\psi^\mathrm{eff})^{-1}$. Then, for a state $\psi$ sampled from $\mathcal{E}_\chi$ with $\alpha=0$, $(D_\psi^\mathrm{eff})^{-1}$ is upper bounded by
    \begin{equation}
        \mathbb{E}_{\psi\sim\mathcal{E}_\chi}[(D_\psi^\mathrm{eff})^{-1}]\leq \prod_i\left(\left(\frac{2}{3}\right)^{|X_i|}(1-\abs{c_i})+\abs{c_i}\right)=\left(\frac{2}{3}\right)^n\prod_i\left(1-\abs{c_i}+\abs{c_i}\left(\frac{3}{2}\right)^{|X_i|}\right).
    \end{equation}
    When $\beta/\beta_\mathrm{s}<(2/3)^\mathcal{V}$, this implies that
    \begin{equation}
        \mathbb{E}_{\psi\sim\mathcal{E}_\chi}[(D_\psi^\mathrm{eff})^{-1}]\leq \left(\frac{2}{3}\right)^n\prod_i\left(1+\left(\frac{\beta}{\beta_\mathrm{s}}\right)^{t_i}\left(\frac{3}{2}\right)^{\mathcal{V}t_i}\right)\leq \left(\frac{2}{3}\right)^n\left(1+\frac{\beta}{\beta_\mathrm{s}}\left(\frac{3}{2}\right)^\mathcal{V}\right)^n.
    \end{equation}
    Since this bound is independent of the configuration $\chi$, it applies uniformly to all possible configurations. This together with Markov's inequality gives
    \begin{equation*}
        \operatornamewithlimits{Pr}_{\psi\sim\mathcal{E}_\beta^\mathrm{(stab)}} \left[D_\psi^\mathrm{eff}< 2^{\gamma n}\right]< e^{-[\log(3/2)-\log(1+\frac{\beta}{\beta_\mathrm{s}}\left(\frac{3}{2}\right)^\mathcal{V})-\gamma\log 2]n}.\qedhere
    \end{equation*}
\end{proof}

Since stabilizer product states are energy dispersed with high probability due to Lemma~\ref{thm:stab-energy-dispersion}, and since temporal-thermal equivalence holds for translation-invariant systems at sufficiently high temperature~\cite{Pilatowsky-Cameo2025}, Theorem~\ref{thm:theorem3} implies that, under translation-invariant Hamiltonian dynamics, small subsystems have small circuit complexity at late times, as follows.

\begin{corollary}[Stabilizer product states have small late-time subsystem complexity]
\label{thm:small-complexity-stab-prod-states}
    Let $H$ be a bounded local translation-invariant Hamiltonian for $n$ qubits in a $D$-dimensional lattice with non-degenerate spectral gaps. Assume that the inverse temperature is upper bounded as $|\beta|<\beta_\mathrm{c}$ [defined in Eq.~\eqref{eq:betacrit}]. For $\alpha>0$, consider a subsystem $A$ with size $|A|\leq(\frac{1}{D+1}-\alpha)\log_2 n$. Then, for positive constants $\nu<\alpha<1/(D+1)$, $\eta<\log(3/2)-\log(1+\frac{|\beta|}{\beta_\mathrm{s}}\left(\frac{3}{2}\right)^\mathcal{V})$ [defined in Eq.~\eqref{eq:betasep}], and $\varepsilon<1$, with probability at least $1-\kappa-\exp(-\eta n)$ over the choice of $\psi\sim\mathcal{E}_\beta^\mathrm{(stab)}$, 
    \begin{equation}
        \operatornamewithlimits{Pr}_{t}\Big[\mathcal{C}_\varepsilon(\psi_A(t))\leq  e^{O(\log^D(|A|/\varepsilon))}\Big]>1- 9\varepsilon^{-2}2^{-\gamma n}
    \end{equation}
    with $\kappa\leq O(n^{-\nu})$ and $\gamma < (\log(3/2)-\log(1+\frac{|\beta|}{\beta_\mathrm{s}}\left(\frac{3}{2}\right)^\mathcal{V})-\eta)/\log 2$.
\end{corollary}
\begin{proof}
    Let $q_\mu$ denote the Gibbs distribution $\expval{g_\beta}{\mu}$ in the energy basis. By Refs.~\cite{Mori2016,Brandao2019qec,Alhambra2020,Pilatowsky-Cameo2025}, for any region $A$ satisfying $|A|\leq o(n^{1/(D+1)})$, $\lambda>0$, and $\zeta=e^{-\Omega(n^{1/(D+1)}\lambda)}$, the Hamiltonian $H$ satisfies the $(\lambda,\zeta^2)$-weak ETH on $A$ with respect to $\{q_\mu\}_\mu$. That is, for every observable $O_A$ supported on $A$ with $\|O_A\|_\mathrm{op}\leq 1$,
    \begin{equation}
        \mathrm{Pr}_{\mu\sim q_\mu}\left[\abs{\expval{O_A}{\mu}-\tr(O_A g_\beta)}\geq \lambda\right]\leq \zeta^2.
    \end{equation}
    This implies that
    \begin{equation}
        \sum_\mu q_\mu (\expval{O_A}{\mu}-\tr(O_A g_\beta))^2 \leq 4\zeta^2+\lambda^2.
    \end{equation}    
    Lemma~\ref{thm:weak-ETH-tracenorm} below gives the following temporal-thermal equivalence:
    \begin{equation}
        \operatornamewithlimits{Pr}_{\psi\sim\mathcal{E}_\beta^\mathrm{(stab)}}\left[\frac{1}{2}\left\|\tr_{A^c}(\rho_d-g_\beta)\right\|_1 \geq \frac{\varepsilon}{3}\right] \leq 3\times 2^{|A|+1}\varepsilon^{-1}(2\zeta+\lambda).
    \end{equation}
    
    We can now apply Theorem~\ref{thm:theorem3}. It implies that, with probability at least $1-3\times 2^{|A|+1}\varepsilon^{-1}(2\zeta+\lambda)$, if $\psi\sim\mathcal{E}_\beta^\mathrm{(stab)}$ is energy dispersed with $\gamma$, then it satisfies
    \begin{equation}
        \operatornamewithlimits{Pr}_{t}\Big[\mathcal{C}_\varepsilon(\psi_A(t))\leq  e^{O(\log^D(|A|/\varepsilon))}\Big]>1- 9\varepsilon^{-2}2^{-\gamma n+2\abs{A}}.
    \end{equation}
    
    Finally, we choose $\alpha>\nu>0$, $|A|\leq(\frac{1}{D+1}-\alpha)\log_2 n$, and $\lambda=n^{-1/(D+1)+\alpha-\nu}$. With these choices, we have $\kappa=3\times 2^{|A|+1}\varepsilon^{-1}(2\zeta+\lambda)\leq O(n^{-\nu})$. Moreover, Lemma~\ref{thm:stab-energy-dispersion} implies that $\psi\sim\mathcal{E}_\beta^\mathrm{(stab)}$ is energy dispersed with $\gamma$ with probability at least $1-\exp(-\eta n)$. Altogether, we get the statement of the corollary.
\end{proof}

\begin{lemma}[Temporal-thermal equivalence from a variant of weak ETH]\label{thm:weak-ETH-tracenorm}
    Consider a Hamiltonian $H$ with inverse temperature $\beta$ and eigenstates $\{\ket{\mu}\}_\mu$. Additionally, consider an ensemble $\mathcal{E}$ of product states whose first moment $\mathbb{E}_{\psi\sim\mathcal{E}}[\ketbra{\psi}]$ equals the Gibbs state $g_\beta$. We denote $q_\mu$ as the Gibbs distribution $\expval{g_\beta}{\mu}$. For $\varepsilon,\eta>0$, if $H$ satisfies 
    \begin{equation}\label{eq:weak-ETH-tracenorm}
        \sum_\mu q_\mu (\expval{O_A}{\mu}-\tr(O_A g_\beta))^2 \leq \eta^2
    \end{equation}
    for any observable $O_A$ supported in $A$ with $\|O_A\|_\mathrm{op}\leq 1$, then $\psi\sim\mathcal{E}$ satisfies temporal-thermal equivalence under $H$ on $A$, up to $\varepsilon$ error in trace distance, with probability at least $1-\varepsilon^{-1}\eta 2^{|A|+1}$.
\end{lemma}
\begin{proof}
    The weak ETH type assumption in Eq.~\eqref{eq:weak-ETH-tracenorm} can be used to bound the mean trace norm of $\tr_{A^c}(\ketbra{\mu}-g_\beta)$ as
    \begin{equation}
        \begin{split}
            \sum_\mu q_\mu\left\|\tr_{A^c}(\ketbra{\mu}-g_\beta)\right\|_1 
            &\leq 2^{|A|/2}\sum_\mu q_\mu\|\tr_{A^c}(\ketbra{\mu}-g_\beta)\|_2 \\
            &\leq 2^{|A|/2}\left(\sum_\mu q_\mu\|\tr_{A^c}(\ketbra{\mu}-g_\beta)\|_2^2\right)^{1/2}\\
            &= \left(\sum_\mu q_\mu\sum_{k\in\{I,X,Y,Z\}^{|A|}}\tr(P_k(\ketbra{\mu}-g_\beta))^2\right)^{1/2}\\
            &\leq 2^{|A|}\eta.
        \end{split}
    \end{equation}
    We then use the following fact from the proof of Proposition S4 of Ref.~\cite{Pilatowsky-Cameo2025}:
    \begin{equation}
        \mathbb{E}_{\psi\sim\mathcal{E}}\left[\left\|\tr_{A^c}(\rho_d-g_\beta)\right\|_1 \right] \leq \sum_\mu q_\mu \left\|\tr_{A^c}(\ketbra{\mu}-g_\beta)\right\|_1.
    \end{equation}
    This, combined with Markov's inequality, gives
    \begin{equation*}
\operatornamewithlimits{Pr}_{\psi\sim\mathcal{E}}\left[\frac{1}{2}\left\|\tr_{A^c}(\rho_d-g_\beta)\right\|_1 \geq \varepsilon\right] \leq 2^{|A|+1}\varepsilon^{-1}\eta. \qedhere
    \end{equation*}
\end{proof}
We remark that \Cref{thm:rand-prod-states-small-complexity,thm:small-complexity-stab-prod-states} prove a small late-time complexity for regions with $\abs{A}\sim \log(n)$. We expect, however, that a small late-time complexity should hold for extensive regions $\abs{A}\sim\eta n$ as long as the constant $\eta$ is below a certain fraction smaller than $1/2$. Proving this expectation is an important future direction of our work.

\section{Finite time}
\label{sec:07}
In Theorem~\ref{th:theoremS2}, we show that late-time states generically have exponential circuit complexity. In this section, we investigate the timescale sufficient to reach such exponential circuit complexity. To this end, we introduce the \textit{$k$-th spectral gap}:
\begin{equation}
    \Delta^{(k)} = \min_{\{m_1,\cdots,m_N\}} \abs{\sum_{\mu=1}^N m_\mu E_\mu},
\end{equation}
where the minimum runs over all $\{m_1,\cdots,m_N\}\in\mathbb{Z}^N$ which are multiplicity vectors, i.e., such that $\sum_{\mu=1}^N m_\mu = 0$ and $0<\sum_{\mu=1}^N\abs{m_\mu}\leq 2k$ [introduced in the proof of Lemma~\ref{thm:local-ergodic}]. In the next subsection, we show that an exponential circuit complexity is guaranteed to arise after time $(1/\Delta^{(k_c)})\exp(\exp(O(n)))$,  with $k_c=\exp(O(n))$.

We remark that a double-exponential scaling is the strongest bound that can be obtained using any counting argument. With a covering net argument as used in Theorem~\ref{th:theoremS2}, one cannot prove that the trajectory $\psi(t)$ fails to be covered by a net with $s$ elements unless $\psi(t)$ explores more than $\mathcal{O}(s)$ distinguishable states. When the gate-count cutoff $G$ is exponentially large, $s$ becomes double-exponentially large. The  timescale for resolving such states is bounded by the quantum speed limit, which implies that the time required to evolve to a constant-overlap state is at least $1/\mathrm{poly}(n)$~\cite{Anandan1990,MARGOLUS1998,Taddei2013}. Consequently, this covering net argument can certify that most of the trajectory lies outside the low-complexity net only on double-exponentially long time intervals. In the proofs of complexity growth on random circuits~\cite{Haferkamp2022,haferkamp2023,haah2025}, this limitation is bypassed since the dynamics generate double-exponentially many distinguishable trajectories at late times, a feature which is not available when considering dynamics under a single fixed Hamiltonian.

\subsection{Proof of Theorem 5}
\begin{theorem}[Exponential quantum circuit complexity at finite time; Theorem 5 in the main text]
    Let $H$ have spectral ergodicity and $\psi$ be energy dispersed with constant $\gamma>0$ under $H$. Let  $\Delta^{(k)}$ be the $k$-th spectral gap of $H$,  $0<\varepsilon<1$, and
    \begin{equation}
        T\geq\frac{2}{\Delta^{(k_c)}}\exp(\exp(\frac{3\log 2}{4}\gamma n)),
    \end{equation}
    with $k_c=\left\lceil\exp(\frac{\log 2}{2}\gamma n)\right\rceil$. Then
    \begin{equation}
        \operatornamewithlimits{Pr}_{t\sim[0,T]}\Big[\mathcal{C}_\varepsilon(\psi(t))\leq  \exp(\frac{\gamma}{8}n)\Big]\leq \exp(-\exp({\frac{\gamma}{8}\,n}))
    \end{equation}
    for  $n\geq n_*\coloneq\frac{16}{\gamma}\big(7+\log\tfrac{2}{\gamma(1-\varepsilon)}\big)$.
\end{theorem}
\begin{proof}
Our starting point is Eq.~\eqref{eq:complexity-from-covering-size} with $|A|=n$, where the infinite time average is replaced by the finite time average over $\mathcal{I}=[0,T]$:
\begin{equation}
    \begin{split}
        \operatornamewithlimits{Pr}_{t\sim\mathcal{I}}[\mathcal{C}_\varepsilon(\psi(t)|\phi)\leq G]
        &\leq ((n+2G)^2(1+512G/\lambda)^{256})^G\operatornamewithlimits{Pr}_{t\sim\mathcal{I}}\left[D(\psi(t),\sigma_i)\leq\frac{1}{2}(1+\varepsilon)\right]\\
        &\leq 4^k((n+2G)^2(1+512G/\lambda)^{256})^G\frac{\mathbb{E}_{t\in\mathcal{I}}[\expval{\sigma}{\psi(t)}^k]}{(1-\varepsilon)^{2k}}
    \end{split}
\end{equation}
where we use $\frac{1}{2}(1+\varepsilon)\geq \varepsilon+\lambda$ in the first line and Markov's inequality in the second line. Let us rewrite $\mathbb{E}_{t\in\mathcal{I}}[\expval{\sigma}{\psi(t)}^k]$ into
\begin{equation}
    \mathbb{E}_{t\in\mathcal{I}}[\expval{\sigma}{\psi(t)}^k] = \underbrace{\left(\mathbb{E}_{t\in\mathcal{I}}[\expval{\sigma}{\psi(t)}^k]-\mathbb{E}_t[\expval{\sigma}{\psi(t)}^k]\right)}_{\coloneq f(T)}+\mathbb{E}_t[\expval{\sigma}{\psi(t)}^k].
\end{equation}
We present an upper bound on $\mathbb{E}_t[\expval{\sigma}{\psi(t)}^k]$ in Lemma~\ref{thm:k-th-fidelity-bound}:
\begin{equation}\label{eq:inf-time-avg-fidelity-bound}
    \mathbb{E}_t[\expval{\sigma}{\psi(t)}^k]\leq 2^{-\gamma nk/2}k!.
\end{equation}
We then build an analogous bound for the off-resonance terms $f(T)$. Following the notation in the proof of \Cref{thm:no-k-res-temp-avg-symm}, $f(T)$ is upper bounded by
\begin{equation}
    \begin{split}
        f(T)&=\frac{1}{T}\int_0^T dt\sum_{\bm \nu}\sum_{{\bm\mu}\notin [{\bm\nu}]}e^{-i\sum_{j=1}^k (E_{\mu_j}-E_{\nu_j})t}\bra{\bm\nu}\sigma^{\otimes k}\ketbra{\bm\mu}(\ketbra{\psi})^{\otimes k}\ket{\bm\nu}\\
        &\leq \sum_{\bm \nu}\sum_{{\bm\mu}\notin [{\bm\nu}]}\abs{\frac{2\sin(T\sum_{j=1}^k(E_{\mu_j}-E_{\nu_j})/2)}{T\sum_{j=1}^k (E_{\mu_j}-E_{\nu_j})}} \abs{\bra{\bm\nu}\sigma^{\otimes k}\ketbra{\bm\mu}(\ketbra{\psi})^{\otimes k}\ket{\bm\nu}}\\
        &\leq \frac{2}{T\Delta^{(k)}} \sum_{{\bm\mu},{\bm\nu}}\abs{\bra{\bm\nu}\sigma^{\otimes k}\ketbra{\bm\mu}(\ketbra{\psi})^{\otimes k}\ket{\bm\nu}}\\
        &\leq \frac{2}{T\Delta^{(k)}} \left(\sum_{{\bm\mu},{\bm\nu}}\abs{\bra{\bm\nu}\sigma^{\otimes k}\ket{\bm\mu}}^2 \right)^{1/2}\left(\sum_{{\bm\mu},{\bm\nu}}\abs{\bra{\bm\mu}(\ketbra{\psi})^{\otimes k}\ket{\bm\nu}}^2\right)^{1/2},
    \end{split}
\end{equation}
where the first equality holds since for $\bm \mu\in [\bm \nu]$ with the orbit $[\bm \nu]$ of $\bm\nu$ over the action of the symmetric group, the phases already vanish regardless of $t$; the second line is a triangle inequality, and we use the spectral ergodicity when solving the integral; the third line holds by the definition of the $k$-th spectral gap and $|\sin(x)|\leq 1$; and the fourth line is the Cauchy-Schwarz inequality. Since the terms in the parentheses in the last line are the $k$-th powers of the purities of $\sigma$ and $\psi$, for any positive integer $k$, it follows that
\begin{equation}
    f(T) \leq \frac{2}{T\Delta^{(k)}}.
\end{equation}

Now, let us use the condition on $T$:
\begin{equation}
    T\geq \frac{2}{\Delta^{(k_c)}}\exp(\exp(\frac{3\log 2}{4}\gamma n)).
\end{equation} 
Additionally, let us consider the choice of $k_*$ in the proof of Lemma~\ref{th:temp-deloc}. We find that $k_*$ is upper bounded by
\begin{equation}
    k_*\leq\frac{1}{4}\exp(\frac{\log 2}{2}\gamma n)\leq k_c.
\end{equation}
By the definition of the $k$-th spectral gap, it is nonincreasing as $k$ increases. Therefore, we have $\Delta^{(k_c)}\leq\Delta^{(k_*)}$ and
\begin{equation}
    f(T)\leq\frac{2}{T\Delta^{(k_*)}}\leq \exp(-\exp(\frac{3\log 2}{4}\gamma n))\leq \exp(-\frac{\gamma n \log 2}{8} 2^{\gamma n/2})\leq 2^{-\gamma nk_*/2}\leq 2^{-\gamma nk_*/2} (k_*)!.
\end{equation} 
Then, Eq.~\eqref{eq:inf-time-avg-fidelity-bound} with $k=k_*$ implies that
\begin{equation}
    \mathbb{E}_{t\in\mathcal{I}}[F(\psi(t),\sigma_i)^{k_*}] \leq 2\times 2^{-k_*\gamma n/2}(k_*!).
\end{equation}
This upper bound is twice the upper bound of the $k$-th moments of the fidelity established in Eq.~\eqref{eq:k-th-fidelity-bound} with $k=k_*$. Following the same argument as in the proof of Theorem~\ref{th:theoremS2}, this extra factor of two contributes only an additive $\log2$ correction to Eq.~\eqref{eq:generalinequalitylog} and does not affect the subsequent upper bounds. Furthermore, we can adopt $\mathcal{C}_\varepsilon(\psi(t)|\phi)\rightarrow\mathcal{C}_\varepsilon(\psi(t))$ by taking $G\rightarrow G-\lceil n/2\rceil$ as discussed in the same proof. Therefore, we have 
\begin{equation}
    \operatornamewithlimits{Pr}_{t\sim\mathcal{I}}[\mathcal{C}_\varepsilon(\psi(t))\leq\exp(\gamma n/8)]\leq \exp(-\exp(\gamma n/8)).\qedhere
\end{equation}
\end{proof}

\subsection{Scaling of the $k$-th spectral gap}
We expect that the typical inverse scale of $\Delta^{(k)}$ is given by $1/\Delta^{(k)}\sim\exp(\exp(O(n)))$ for $k=\exp(O(n))$ based on the following heuristic argument~\cite{Pilatowsky-Cameo2025}. Suppose the single energy difference $\hat{\Delta}=E_\mu-E_{\mu'}$ with $\mu\neq\mu'$ follows a uniform distribution $\mathrm{Unif}[-W,W]$, where $W$ is upper bounded by $O(n)$. Define a random variable $\hat{X}$ as $\hat{X}=\sum_{j=1}^k\Delta_j$, where each $\Delta_j$ is an independent sample of $\hat{\Delta}$. Since there are $K=2^{kn}$ distinct ordered $k$-tuples of energies with repetition, for $k=\exp(O(n))$, we heuristically model $\Delta^{(k)}$, up to subleading factors, as the minimum absolute value among $K^2$ independent samples of $\hat{X}$. Since $\operatorname{Var}(\hat{\Delta})=W^2/3$, for large $k$, by the central limit theorem, $\hat{X}$ follows a normal distribution with zero mean and $kW^2/3$ variance. Therefore, the density of $\hat{X}$ at zero is approximately $\sqrt{3}/(W\sqrt{2\pi k})$. Consequently, for small $x>0$, using independence among the $K^2$ candidate differences, we have
\begin{equation}
    \operatornamewithlimits{Pr}(\Delta^{(k)}>x)\approx \left(1-\frac{\sqrt{6}}{W\sqrt{\pi k}}x\right)^{K^2}\approx \exp(-\frac{\sqrt{6}K^2}{W\sqrt{\pi k}}x).
\end{equation}
The characteristic decay scale of this exponential is given by the inverse of the exponent, $x\sim W\sqrt{k}K^{-2}$, which is double-exponentially small for $k=\exp(O(n))$. Therefore, the tail integral formula
\begin{equation}
    \mathbb{E}[\Delta^{(k)}]=\int_0^\infty dx\operatornamewithlimits{Pr}(\Delta^{(k)}>x)
\end{equation}
is dominated by this small-$x$ regime. Thus, we have
\begin{equation}
    \mathbb{E}[\Delta^{(k)}]\sim W\sqrt{k}2^{-2kn}.
\end{equation}
Consequently, for $k=\exp(O(n))$, the typical inverse scale of $\Delta^{(k)}$ is $1/\Delta^{(k)}\sim\exp(\exp(O(n)))$.

\bibliography{Ref}